\lstdefinelanguage{TTsec}{
  morekeywords={
    if,then,else
  },
  morekeywords=[2]{pure , >>=, label, unlabel, plug, new, read, write},
  morekeywords=[3]{Int, String, Type, Bool, DIO, Labeled, Ref, Label},
  morekeywords=[4]{True, False, L, H},
  otherkeywords={:, =, ., ==},
  literate={
    {->}{{$\to$}}{2}
    {lambda}{{$\lambda$}}{1}
    {->}{{$\to$}}{2}
    {forall}{{$\forall$}}{1}
    {DIO}{DIO$_{\tau}$}{1}
    {Labeled}{Labeled$_{\tau}$}{1}
  },
  basicstyle={\small\ttfamily},
  keywordstyle={\texttt},
  keywordstyle=[2]{\color{blue}\texttt},
  keywordstyle=[3]{\color{darkgray}\texttt},
  keywordstyle=[4]{\color{red}\texttt},
  keepspaces,
  tabsize = 2,
  columns={flexible},
  mathescape = true
}[keywords]
\tiny\color{gray},
\definecolor{red}{RGB}{199,71,77}
\definecolor{blue}{RGB}{104,72,186}
\definecolor{purple}{RGB}{36,93,190}
\definecolor{green}{RGB}{67,131,67}
\definecolor{gray}{RGB}{134,140,148}
\definecolor{orange}{RGB}{234, 151, 51}
\newcommand{\IdrisData}[1]{\textcolor{orange}{#1}}
\newcommand{\IdrisType}[1]{\textcolor{purple}{#1}}
\newcommand{\IdrisBound}[1]{\textcolor{green}{#1}}
\newcommand{\IdrisFunction}[1]{\textcolor{blue}{#1}}
\newcommand{\IdrisKeyword}[1]{{\textcolor{red}{#1}}}
\newcommand\DepSec{\textsc{DepSec}\xspace}
\newcommand\MAC{\textbf{MAC}\xspace}
\newcommand\Idris{\textsc{Idris}\xspace}
\newcommand\Labeled[2]{\Verb|\IdrisType{Labeled}|~#1~#2\xspace}
\newcommand\DIO[2]{\Verb|\IdrisType{DIO}|~#1~#2\xspace}
\newcommand\IO[1]{\Verb|\IdrisType{IO}|~#1\xspace}
\newcommand\join{\Verb|\IdrisFunction{join}|\xspace}
\newcommand\MkLabeled{\Verb|\IdrisData{MkLabeled}|\xspace}
\newcommand\MkDIO{\Verb|\IdrisData{MkDIO}|\xspace}
\newcommand\slabel{\Verb|\IdrisFunction{label}|\xspace}
\newcommand\unlabel{\Verb|\IdrisFunction{unlabel}|\xspace}
\newcommand\readFile{\Verb|\IdrisFunction{readFile}|\xspace}
\newcommand\plug{\Verb|\IdrisFunction{plug}|\xspace}
\newcommand\ST{\Verb|\IdrisFunction{ST}|\xspace}
\newcommand\Bool{\texttt{Bool}}
\newcommand\Int{\texttt{Int}}
\newcommand\String{\texttt{String}}
\newcommand\True{\texttt{True}\xspace}
\newcommand\False{\texttt{False}}
\newcommand\ifThenElse[3]{\ensuremath{\texttt{if}\ #1\ \texttt{then}\ #2\ \texttt{else}\ #3}}
\newcommand\DIOValue[1]{\ensuremath{\texttt{DIO}_{v}\ \,#1}}
\newcommand\LabeledValue[1]{\ensuremath{\texttt{Labeled}_{v}\ \,#1}}
\newcommand\LabeledType[2]{\ensuremath{\texttt{Labeled}_{\tau}\ \,#1\ \,#2}}
\newcommand\pure[1]{\ensuremath{\texttt{pure}\ #1}}
\newcommand\bind[2]{\ensuremath{#1 >\!\!>\!\!= #2}}
\newcommand\xlabel[1]{\ensuremath{\texttt{label}\ \,#1}\xspace}
\newcommand\ylabel[1]{\ensuremath{\texttt{label}\ \,#1}\xspace}
\newcommand\evalsto{\ensuremath{\longrightarrow}}
\newcommand\evalstostar{\ensuremath{\longrightarrow^{\ast}}}
\newcommand\bigstepto{\ensuremath{\Downarrow}}
\newcommand\conf[2]{\ensuremath{\langle #1, #2\rangle}}
\newcommand\app[2]{\ensuremath{#1\ #2}}
\newcommand\flowsto{\ensuremath{\sqsubseteq}\xspace}
\newcommand\RefType[2]{\ensuremath{\texttt{Ref}_{\tau}\ \,#1\ #2}}
\newcommand\RefValue[2]{\ensuremath{\texttt{Ref}_{v}^{#1}\ #2}}
\newcommand\newRef[2]{\ensuremath{\texttt{new}^{#1}\ #2}}
\newcommand\readRef[1]{\ensuremath{\texttt{read}\ #1}}
\newcommand\writeRef[2]{\ensuremath{\texttt{write}\ #1\ #2}}
\newcommand\hole{\ensuremath{\bullet}}
\newcommand\plugHole[1]{\ensuremath{\texttt{plug}_{\hole}\ #1}}
\newcommand\Type{\texttt{Type}}
\newcommand\Unit{\texttt{()}}
\newcommand\Term{\textsf{Term}}
\newcommand\Label{\texttt{Label}}
\newcommand\Conf{\textsf{Conf}}
\newcommand\List{\text{List}}
\newcommand\erasure{\ensuremath{\varepsilon_{\ell_{A}}}\xspace}
\newcommand\eps[1]{\erasure\big(#1\big)}
\newcommand\aDIO[2]{\ensuremath{\texttt{DIO}_{\tau}\ \,#1\ \, #2}}
\newcommand\aplug[1]{\ensuremath{\texttt{plug}\ #1}}
\newcommand\aunlabel[1]{\ensuremath{\texttt{unlabel}\ \,#1}}
\newcommand\TBind{\textsc{T-Bind}\xspace}
\newcommand\TLam{\textsc{T-Lam}\xspace}
\newcommand\TApp{\textsc{T-App}\xspace}
\newcommand\TVar{\textsc{T-Var}\xspace}
\newcommand\TForall{\textsc{T-Forall}\xspace}
\newcommand\TT{\textsf{TT}\xspace}
\newcommand\TTsec{\ensuremath{\TT_{sec}}\xspace}
\newcommand\TTsecbullet{\ensuremath{\TT_{sec\bullet}}\xspace}
\newcommand\TUnlabel{\textsc{T-Unlabel}\xspace}
\newcommand\concat{\ensuremath{+\!\!\!+}}
\newcommand\gitlink{\url{https://github.com/simongregersen/DepSec}}
\newcommand\User[1]{\ensuremath{\mathit{U}(\mathit{#1})}\xspace}
\newcommand\Author[2]{\ensuremath{\mathit{A}(\mathit{#1},\mathit{#2})}\xspace}
\newcommand\PC[2]{\ensuremath{\mathit{PC}(\mathit{#1},\mathit{#2})}\xspace}
\begin{document}

\title{A Dependently Typed Library for Static Information-Flow Control in \Idris \iftoggle{Full}{\thanks{This is an extended version of a paper of the same title presented at POST 2019.}}{}}
\author{Simon Gregersen \and S\o ren Eller Thomsen \and Aslan Askarov}
\authorrunning{Gregersen et al.}
\institute{Aarhus University, Aarhus, Denmark \\
\email{\{gregersen,sethomsen,askarov\}@cs.au.dk}}

\maketitle

\begin{abstract}
  Safely integrating third-party code in applications while protecting the confidentiality of information is a long-standing problem.
  Pure functional programming languages, like Haskell, make it possible to enforce lightweight information-flow control through libraries like \MAC by \citeauthor{russo15}.
  This work presents \DepSec, a \MAC inspired, dependently typed library for static information-flow control in \Idris.
  We showcase how adding dependent types increases the expressiveness of state-of-the-art static information-flow control libraries and how \DepSec matches a special-purpose dependent information-flow type system on a key example.
  Finally, we show novel and powerful means of specifying statically enforced declassification policies using dependent types.

  \keywords{Information-Flow Control \and Dependent Types \and Idris.}
\end{abstract}

\section{Introduction}
Modern software applications are increasingly built using libraries and code from multiple third parties.
At the same time, protecting confidentiality of information manipulated by such applications is a growing, yet long-standing problem.
Third-party libraries could in general have been written by anyone and they are usually run with the same privileges as the main application.
While powerful, such privileges open up for abuse.

Traditionally, access control~\cite{bell76} and encryption have been the main means for preventing data dissemination and leakage, however, such mechanisms fall short when third-party code needs access to sensitive information to provide its functionality.
The key observation is that these mechanisms only place restrictions on the access to information but not its propagation.
Once information is accessed, the accessor is free to improperly transmit or leak the information in some form, either by intention or error.

Language-based Information-Flow Control~\cite{sabelfeld03} is a promising technique for enforcing information security.
Traditional enforcement techniques analyze how information at different security levels flows within a program ensuring that information flows only to appropriate places, suppressing illegal flows.
To achieve this, most information-flow control tools require the design of new languages, compilers, or interpreters (e.g.
\cite{myers99,simonet03,liu09,lourenco15,jif,hedin14,chapman04}).
Despite a large, growing body of work on language-based information-flow security, there has been little adoption of the proposed techniques.
For information-flow policies to be enforced in such systems, the whole system has to be written in new languages -- an inherently expensive and time-consuming process for large software systems.
Moreover, in practice, it might very well be that only small parts of an application are governed by information-flow policies.

Pure functional programming languages, like Haskell, have something to offer with respect to information security as they strictly separate side-effect free and side-effectful code. This makes it possible to enforce lightweight information-flow control through libraries~\cite{li06,russo09,stefan11,buiras15,russo15} by constructing an embedded domain-specific security sub-language.
Such libraries enforce a secure-by-construction programming model as any program written against the library interface is not capable of leaking secrets.
This construction forces the programmer to write security-critical code in the sub-language but otherwise allows them to freely interact and integrate with non-security critical code written in the full language.
In particular, static enforcement libraries like \MAC~\cite{russo15} are appealing as no run-time checks are needed and code that exhibits illegal flows is rejected by the type checker at compile-time.
Naturally, the expressiveness of Haskell's type system sets the limitation on which programs can be deemed secure and which information flow policies can be guaranteed.

Dependent type theories \cite{lof84, nordstrom90} are implemented in many programming languages such as Coq~\cite{coq88}, Agda~\cite{norell07}, \Idris \cite{brady11}, and F$^{*}$~\cite{swamy11}.
Programming languages that implement such theories allow types to dependent on values.
This enables programmers to give programs a very precise type and increased confidence in its correctness.

In this paper, we show that dependent types provide a direct and natural way of expressing precise data-dependent security policies.
Dependent types can be used to represent rich security policies in environments like databases and data-centric web applications where, for example, new classes of users and new kinds of data are encountered at run-time and the security level depends on the manipulated data itself \cite{lourenco15}.
Such dependencies are not expressible in less expressive systems like \MAC.
Among other things, with dependent types, we can construct functions where the security level of the output depends on an argument:
\begin{Verbatim}
\IdrisFunction{getPassword} : (\IdrisBound{u} : \IdrisType{Username}) -> \IdrisType{Labeled} \IdrisBound{u} \IdrisType{String}
\end{Verbatim}
Given a user name \Verb|\IdrisBound{u}|, \Verb|\IdrisFunction{getPassword}| retrieves the corresponding password and classifies it at the security level of \Verb|\IdrisBound{u}|.
As such, we can express much more precise security policies that can depend on the manipulated data.

\Idris is a general-purpose functional programming language with full-spectrum dependent types, that is, there is no restrictions on which values may appear in types.
The language is strongly influenced by Haskell and has, among others, inherited its strict encapsulation of side-effects.
\Idris essentially asks the question: ``What if Haskell had full dependent types?''~\cite{brady13}.
This work, essentially, asks:
\begin{quote}
``What if \MAC had full dependent types?''
\end{quote}
We address this question using \Idris because of its positioning as a general-purpose language rather than a proof assistant.
All ideas should be portable to equally expressive systems with full dependent types and strict monadic encapsulation of side-effects.

In summary, the contributions of this paper are as follows.
\begin{itemize}
\item We present \DepSec, a \MAC inspired statically enforced dependently typed information-flow control library for \Idris.
\item We show how adding dependent types strictly increases the expressiveness of state-of-the-art static information-flow control libraries and how \DepSec matches the expressiveness of a special-purpose dependent information-flow type system on a key example.
\item We show how \DepSec enables and aids the construction of policy-parameterized functions that abstract over the security policy.
\item We show novel and powerful means to specify statically-ensured declassification using dependent types for a wide variety of policies.
\item We show progress-insensitive noninterference \cite{askarov08} for the core library in a sequential setting.
\end{itemize}

\paragraph{Outline}
The rest of the paper proceeds through a presentation of the \DepSec library (Section~\ref{sec:depsec}); a conference manager case study (Section~\ref{sec:conference-system}) and the introduction of policy-parameterized functions (Section~\ref{sec:generic-functions}) both showcasing the expressiveness of \DepSec; means to specify statically-ensured declassification policies (Section~\ref{sec:declassification}); soundness of the core library (Section~\ref{sec:soundness}); and related work (Section \ref{sec:related-work}).

All code snippets presented in the following are extracts from the source code. All source code is implemented in \Idris 1.3.1. and available at
\begin{quote}
\gitlink.
\end{quote}
The source code of the core library is also available in Appendix~\ref{app:core-lib}.

\subsection{Assumptions and threat model}\label{sec:assumptions}
In the rest of this paper, we require that code is divided up into trusted code, written by someone we trust, and untrusted code, written by a potential attacker.
The trusted computing base (TCB) has no restrictions, but untrusted code does not have access to modules providing input/output behavior, the data constructors of the domain specific language and a few specific functions related to declassification.
In \Idris, this means that we specifically do not allow access to \Verb|\IdrisType{IO}| functions and \Verb|\IdrisFunction{unsafePerformIO}|.
In \DepSec, constructors and functions marked with a \Verb|TCB| comment are inaccessible to untrusted code.
Throughout the paper we will emphasize when this is the case.

We require that all definitions made by untrusted code are total, that is, defined for all possible inputs and are guaranteed to terminate.
This is necessary if we want to trust proofs given by untrusted code.
Otherwise, it could construct an element of the empty type from which it could prove anything:
\begin{Verbatim}
\IdrisFunction{empty} : \IdrisType{\IdrisType{Void}}
\IdrisFunction{\IdrisFunction{empty}} = \IdrisFunction{\IdrisFunction{empty}}
\end{Verbatim}
In \Idris, this can be checked using the \texttt{--total} compiler flag.
Furthermore, we do not consider concurrency nor any internal or termination covert channels.

\section{The \DepSec library}\label{sec:depsec}

In information-flow control, labels are used to model the sensitivity of data.
Such labels usually form a security lattice~\cite{denning77} where the induced partial ordering $\sqsubseteq$ specifies allowed flows of information and hence the security policy.
For example, $\ell_{1} \sqsubseteq \ell_{2}$ specifies that data with label~$\ell_{1}$ is allowed to flow to entities with label~$\ell_{2}$.
In \DepSec, labels are represented by values that form a verified join semilattice implemented as \Idris interfaces\footnote{Interfaces in \Idris are similar to type classes in Haskell.}. That is, we require proofs of the lattice properties when defining an instance of \Verb!\IdrisType{JoinSemilattice}!.

\begin{Verbatim}
\IdrisKeyword{interface} \IdrisType{\IdrisFunction{\IdrisFunction{\IdrisFunction{\IdrisFunction{\IdrisFunction{\IdrisFunction{\IdrisFunction{\IdrisFunction{\IdrisData{\IdrisData{\IdrisData{\IdrisData{\IdrisData{\IdrisData{\IdrisData{\IdrisData{\IdrisType{\IdrisType{\IdrisType{\IdrisType{\IdrisType{\IdrisType{\IdrisType{\IdrisType{\IdrisType{\IdrisType{\IdrisType{\IdrisType{\IdrisBound{\IdrisBound{\IdrisBound{\IdrisBound{\IdrisBound{\IdrisBound{\IdrisBound{\IdrisBound{\IdrisBound{\IdrisBound{\IdrisBound{\IdrisBound{\IdrisBound{\IdrisBound{\IdrisBound{\IdrisBound{\IdrisBound{\IdrisBound{\IdrisBound{\IdrisBound{\IdrisBound{\IdrisBound{\IdrisBound{\IdrisBound{\IdrisType{JoinSemilattice \IdrisType{\IdrisType{\IdrisType{\IdrisType{\IdrisType{\IdrisType{\IdrisType{\IdrisBound{a}}}}}}}}}}}}}}}}}}}}}}}}}}}}}}}}}}}}}}}}}}}}}}}}}}}}}}}}}}}}}} \IdrisKeyword{where}
  \IdrisFunction{join} : \IdrisBound{\IdrisBound{a}} -> \IdrisBound{\IdrisBound{a}} -> \IdrisBound{\IdrisBound{a}}
  \IdrisFunction{associative} :
    (\IdrisBound{x}, \IdrisBound{y}, \IdrisBound{z} : \IdrisBound{\IdrisBound{\IdrisBound{\IdrisBound{\IdrisBound{\IdrisBound{a}}}}}}) -> \IdrisBound{\IdrisBound{x}} \IdrisFunction{\IdrisFunction{\IdrisBound{`join`}}} (\IdrisBound{\IdrisBound{y}} \IdrisFunction{\IdrisFunction{\IdrisBound{`join`}}} \IdrisBound{\IdrisBound{z}}) \IdrisType{=} (\IdrisBound{\IdrisBound{x}} \IdrisFunction{\IdrisFunction{\IdrisBound{`join`}}} \IdrisBound{\IdrisBound{y}}) \IdrisFunction{\IdrisFunction{\IdrisBound{`join`}}} \IdrisBound{\IdrisBound{z}}
  \IdrisFunction{commutative} : (\IdrisBound{x}, \IdrisBound{y} : \IdrisBound{\IdrisBound{\IdrisBound{\IdrisBound{a}}}})    -> \IdrisBound{\IdrisBound{x}} \IdrisFunction{\IdrisFunction{\IdrisBound{`join`}}} \IdrisBound{\IdrisBound{y}} \IdrisType{=} \IdrisBound{\IdrisBound{y}} \IdrisFunction{\IdrisFunction{\IdrisBound{`join`}}} \IdrisBound{\IdrisBound{x}}
  \IdrisFunction{idempotent}  : (\IdrisBound{x} : \IdrisBound{\IdrisBound{a}})       -> \IdrisBound{\IdrisBound{x}} \IdrisFunction{\IdrisFunction{\IdrisBound{`join`}}} \IdrisBound{\IdrisBound{x}} \IdrisType{=} \IdrisBound{\IdrisBound{x}}
\end{Verbatim}
Dependent function types (often referred to as $\Pi$ types) in \Idris can express such requirements. If \Verb|\IdrisType{A}| is a type and \Verb|\IdrisType{B}| is a type indexed by a value of type \Verb|\IdrisType{{A}}| then \Verb!(\IdrisBound{x} : \IdrisBound{\IdrisType{A}}) -> \IdrisType{B}!
is the type of functions that map arguments \Verb|\IdrisBound{x}| of type \Verb|\IdrisType{A}| to values of type \Verb|\IdrisType{B} \IdrisBound{x}|.

A lattice induces a partial ordering, which gives a direct way to express flow constraints. We introduce a verified partial ordering together with an implementation of this for \Verb|\IdrisType{JoinSemilattice}|.
That is, to define an instance of the \Verb|\IdrisType{Poset}| interface we require a concrete instance of an associated data type \Verb|\IdrisFunction{leq}| as well as proofs of necessary algebraic properties of \Verb|\IdrisFunction{leq}|.

\begin{Verbatim}
\IdrisKeyword{interface} \IdrisType{\IdrisFunction{\IdrisFunction{\IdrisFunction{\IdrisFunction{\IdrisFunction{\IdrisFunction{\IdrisFunction{\IdrisFunction{\IdrisData{\IdrisData{\IdrisData{\IdrisData{\IdrisData{\IdrisData{\IdrisData{\IdrisData{\IdrisType{\IdrisType{\IdrisType{\IdrisType{\IdrisType{\IdrisType{\IdrisType{\IdrisType{\IdrisType{\IdrisType{\IdrisType{\IdrisType{\IdrisBound{\IdrisBound{\IdrisBound{\IdrisBound{\IdrisBound{\IdrisBound{\IdrisBound{\IdrisBound{\IdrisBound{\IdrisBound{\IdrisBound{\IdrisBound{\IdrisBound{\IdrisBound{\IdrisBound{\IdrisBound{\IdrisBound{\IdrisBound{\IdrisBound{\IdrisBound{\IdrisBound{\IdrisBound{\IdrisBound{\IdrisBound{\IdrisType{Poset \IdrisType{\IdrisType{\IdrisType{\IdrisType{\IdrisType{\IdrisType{\IdrisType{\IdrisBound{a}}}}}}}}}}}}}}}}}}}}}}}}}}}}}}}}}}}}}}}}}}}}}}}}}}}}}}}}}}}}}} \IdrisKeyword{where}
  \IdrisFunction{leq} : \IdrisBound{\IdrisBound{a}} -> \IdrisBound{\IdrisBound{a}} -> \IdrisType{Type}
  \IdrisFunction{reflexive}     : (\IdrisBound{x} : \IdrisBound{\IdrisBound{a}}) -> \IdrisBound{\IdrisBound{x}} \IdrisFunction{\IdrisFunction{\IdrisBound{`leq`}}} \IdrisBound{\IdrisBound{x}}
  \IdrisFunction{antisymmetric} : (\IdrisBound{x}, \IdrisBound{y} : \IdrisBound{\IdrisBound{\IdrisBound{\IdrisBound{a}}}}) -> \IdrisBound{\IdrisBound{x}} \IdrisFunction{\IdrisFunction{\IdrisBound{`leq`}}} \IdrisBound{\IdrisBound{y}} -> \IdrisBound{\IdrisBound{y}} \IdrisFunction{\IdrisFunction{\IdrisBound{`leq`}}} \IdrisBound{\IdrisBound{x}} -> \IdrisBound{\IdrisBound{x}} \IdrisType{=} \IdrisBound{\IdrisBound{y}}
  \IdrisFunction{transitive}    : (\IdrisBound{x}, \IdrisBound{y}, \IdrisBound{z} : \IdrisBound{\IdrisBound{\IdrisBound{\IdrisBound{\IdrisBound{\IdrisBound{a}}}}}}) -> \IdrisBound{\IdrisBound{x}} \IdrisFunction{\IdrisFunction{\IdrisBound{`leq`}}} \IdrisBound{\IdrisBound{y}} -> \IdrisBound{\IdrisBound{y}} \IdrisFunction{\IdrisFunction{\IdrisBound{`leq`}}} \IdrisBound{\IdrisBound{z}} -> \IdrisBound{\IdrisBound{x}} \IdrisFunction{\IdrisFunction{\IdrisBound{`leq`}}} \IdrisBound{\IdrisBound{z}}

\IdrisKeyword{implementation} \IdrisType{\IdrisType{\IdrisType{\IdrisType{\IdrisType{\IdrisType{\IdrisType{\IdrisType{\IdrisType{\IdrisType{\IdrisFunction{\IdrisFunction{\IdrisFunction{\IdrisFunction{\IdrisFunction{\IdrisFunction{\IdrisFunction{\IdrisFunction{\IdrisFunction{\IdrisFunction{\IdrisFunction{\IdrisFunction{\IdrisFunction{\IdrisFunction{\IdrisFunction{\IdrisFunction{\IdrisFunction{\IdrisFunction{\IdrisFunction{\IdrisFunction{\IdrisFunction{\IdrisFunction{\IdrisData{\IdrisData{\IdrisBound{\IdrisBound{\IdrisBound{\IdrisBound{\IdrisBound{\IdrisBound{\IdrisBound{\IdrisBound{\IdrisBound{\IdrisBound{\IdrisBound{\IdrisBound{\IdrisBound{\IdrisBound{\IdrisBound{\IdrisBound{\IdrisBound{\IdrisBound{\IdrisBound{\IdrisBound{\IdrisBound{\IdrisBound{\IdrisBound{JoinSemilattice \IdrisBound{\IdrisBound{\IdrisBound{\IdrisBound{\IdrisBound{\IdrisBound{\IdrisBound{\IdrisBound{\IdrisBound{\IdrisBound{a}}}}}}}}}} => \IdrisType{\IdrisType{Poset}} \IdrisBound{\IdrisBound{\IdrisBound{\IdrisBound{\IdrisBound{\IdrisBound{\IdrisBound{\IdrisBound{\IdrisBound{\IdrisBound{\IdrisBound{\IdrisBound{\IdrisBound{\IdrisBound{\IdrisBound{\IdrisBound{\IdrisBound{\IdrisBound{\IdrisBound{\IdrisBound{\IdrisBound{\IdrisBound{a}}}}}}}}}}}}}}}}}}}}}}}}}}}}}}}}}}}}}}}}}}}}}}}}}}}}}}}}}}}}}}}}}}}}}}}}}}}}}}} \IdrisKeyword{where}
  \IdrisFunction{\IdrisFunction{\IdrisBound{\IdrisFunction{leq} \IdrisBound{x} \IdrisBound{y}}}} = (\IdrisBound{\IdrisBound{x}} \IdrisFunction{\IdrisFunction{`join`}} \IdrisBound{\IdrisBound{y}} \IdrisType{=} \IdrisBound{\IdrisBound{y}})
  ...
\end{Verbatim}
This definition allows for generic functions to impose as few restrictions as possible on the user while being able to exploit the algebraic structure in proofs, as will become evident in Section~\ref{sec:conference-system} and~\ref{sec:generic-functions}.
For the sake of the following case studies, we also have a definition of a \Verb!\IdrisType{BoundedJoinSemilattice}!
requiring a least element \Verb|\IdrisFunction{Bottom}| of an instance of \Verb!\IdrisType{JoinSemilattice}!
and a proof of the element being the unit.
\begin{figure}[htb!]
  \centering
\begin{Verbatim}
\IdrisKeyword{data} \IdrisType{Labeled} : \IdrisBound{label} -> \IdrisType{Type} -> \IdrisType{Type} \IdrisKeyword{where}
  \IdrisData{MkLabeled} : \IdrisBound{\IdrisBound{valueType}} -> \IdrisType{\IdrisType{Labeled}} \IdrisBound{\IdrisBound{label}} \IdrisBound{\IdrisBound{valueType}} -- TCB

\IdrisKeyword{data} \IdrisType{DIO} : \IdrisBound{\IdrisBound{l}} -> \IdrisType{Type} -> \IdrisType{Type} \IdrisKeyword{where}
  \IdrisData{MkDIO} : \IdrisFunction{\IdrisFunction{IO}} \IdrisBound{\IdrisBound{valueType}} -> \IdrisType{\IdrisType{DIO}} \IdrisBound{\IdrisBound{l}} \IdrisBound{\IdrisBound{valueType}} -- TCB

\IdrisType{\IdrisType{\IdrisFunction{\IdrisFunction{\IdrisFunction{\IdrisFunction{\IdrisFunction{\IdrisFunction{\IdrisData{\IdrisData{\IdrisBound{\IdrisBound{\IdrisBound{\IdrisBound{\IdrisBound{\IdrisBound{\IdrisBound{\IdrisBound{\IdrisBound{Monad (\IdrisType{\IdrisType{\IdrisType{\IdrisType{\IdrisType{\IdrisType{\IdrisType{\IdrisType{\IdrisType{\IdrisType{\IdrisType{\IdrisType{\IdrisType{\IdrisType{DIO}}}}}}}}}}}}}} \IdrisBound{\IdrisBound{\IdrisBound{\IdrisBound{\IdrisBound{\IdrisBound{\IdrisBound{\IdrisBound{\IdrisBound{\IdrisBound{\IdrisBound{\IdrisBound{\IdrisBound{\IdrisBound{l}}}}}}}}}}}}}})}}}}}}}}}}}}}}}}}}} \IdrisKeyword{where}
  ...

\IdrisFunction{label} : \IdrisType{\IdrisType{Poset}} \IdrisBound{\IdrisBound{label}} => \{\IdrisBound{l} : \IdrisBound{\IdrisBound{label}}\} -> \IdrisBound{\IdrisBound{a}} -> \IdrisType{\IdrisType{Labeled}} \IdrisBound{\IdrisBound{l}} \IdrisBound{\IdrisBound{a}}

\IdrisFunction{unlabel} : \IdrisType{\IdrisType{Poset}} \IdrisBound{\IdrisBound{label}} => \{\IdrisBound{l}, \IdrisBound{l'} : \IdrisBound{\IdrisBound{\IdrisBound{\IdrisBound{label}}}}\}
       -> \{\IdrisKeyword{auto} \IdrisBound{flow} : \IdrisBound{\IdrisBound{l}} \IdrisFunction{\IdrisFunction{`leq`}} \IdrisBound{\IdrisBound{l'}}\}
       -> \IdrisType{\IdrisType{Labeled}} \IdrisBound{\IdrisBound{l}} \IdrisBound{\IdrisBound{a}}
       -> \IdrisType{\IdrisType{DIO}} \IdrisBound{\IdrisBound{l'}} \IdrisBound{\IdrisBound{a}}

\IdrisFunction{plug} : \IdrisType{\IdrisType{Poset}} \IdrisBound{\IdrisBound{label}} => \{\IdrisBound{l}, \IdrisBound{l'} : \IdrisBound{\IdrisBound{\IdrisBound{\IdrisBound{label}}}}\}
    -> \IdrisType{\IdrisType{DIO}} \IdrisBound{\IdrisBound{l'}} \IdrisBound{\IdrisBound{a}}
    -> \{\IdrisKeyword{auto} \IdrisBound{flow} : \IdrisBound{\IdrisBound{l}} \IdrisFunction{\IdrisFunction{`leq`}} \IdrisBound{\IdrisBound{l'}}\}
    -> \IdrisType{\IdrisType{DIO}} \IdrisBound{\IdrisBound{l}} (\IdrisType{\IdrisType{Labeled}} \IdrisBound{\IdrisBound{l'}} \IdrisBound{\IdrisBound{a}})

\IdrisFunction{run} : \IdrisType{DIO} \IdrisBound{l} \IdrisBound{a} -> \IdrisFunction{IO} \IdrisBound{a} -- TCB

\IdrisFunction{lift} : \IdrisFunction{IO} \IdrisBound{a} -> \IdrisType{DIO} \IdrisBound{l} \IdrisBound{a} -- TCB
\end{Verbatim}
  \caption{Type signature of the core \DepSec API. }
  \label{fig:depsec}
\end{figure}
\paragraph{The Core API} Figure \ref{fig:depsec} presents the type signature of \DepSec's core API.
Notice that names beginning with a lower case letter that appear as a parameter or index in a type declaration will be automatically bound as an implicit argument in \Idris, and the \Verb|\IdrisKeyword{auto}| annotation on implicit arguments means that \Idris will attempt to fill in the implicit argument by searching the calling context for an appropriate value.

Abstract data type \Labeled{$\ell$}{$a$} denotes a value of type $a$ with sensitivity level $\ell$.
We say that \Labeled{$\ell$}{$a$} is \emph{indexed} by $\ell$ and \emph{parameterized} by $a$.
Abstract data type \DIO{$\ell$}{$a$} denotes a secure computation that handles values with sensitivity level $\ell$ and results in a value of type $a$.
It is internally represented as a wrapper around the regular \IO{\hspace{-0.35em}} monad that, similar to the one in Haskell, can be thought of as a state monad where the state is the entire world.
Notice that both data constructors \MkLabeled and \MkDIO are not available to untrusted code as this would allow pattern matching and uncontrolled unwrapping of protected entities.
As a consequence, we introduce functions \slabel and \unlabel for labeling and unlabeling values.
Like \citet{rajani18}, but unlike \MAC, the type signature of \slabel imposes no lattice constraints on the computation context.
This does not leak information as, if $l \sqsubseteq l'$ and a computation $c$ has type \DIO{$l'$}{(\Labeled{$l$}{$V$})} for any type $V$, then there is no way for the labeled return value of $c$ to escape the computation context with label~$l'$.

As in \MAC, the API contains a function \plug that safely integrates sensitive computations into less sensitive ones.
This avoids the need for nested computations and \emph{label creep}, that is, the raising of the current label to a point where the computation can no longer perform useful tasks~\cite{russo15,vassena18}.
Finally, we also have functions \Verb|\IdrisFunction{run}| and \Verb|\IdrisFunction{lift}| that are only available to trusted code for unwrapping of the \DIO{$\ell$}{\hspace{-0.2em}} monad and lifting of the \Verb|\IdrisFunction{IO}| monad into the \DIO{$\ell$}{\hspace{-0.2em}} monad.

\paragraph{Labeled resources}
Data type \Labeled{$\ell$}{$a$} is used to denote a labeled \Idris value with type $a$. This is an example of a \emph{labeled resource}~\cite{russo15}.
By itself, the core library does not allow untrusted code to perform any side effects but we can safely incorporate, for example, file access and mutable references as other labeled resources.
Figure~\ref{fig:fileIO} presents type signatures for files indexed by security levels used for secure file handling while mutable references are available in Appendix~\ref{app:core-lib}.
Abstract data type \Verb|\IdrisType{SecFile}|~$\ell$ denotes a secure file with sensitivity level~$\ell$.
As for \Labeled{$\ell$}{$a$}, the data constructor \Verb|\IdrisData{MkSecFile}| is not available to untrusted code.

The function \Verb|\IdrisFunction{readFile}| takes as input a secure file \Verb|\IdrisType{\IdrisType{SecFile}} \IdrisBound{\IdrisBound{l'}}| and returns a computation with sensitivity level \Verb|\IdrisBound{l}| that returns a labeled value with sensitivity level \Verb|\IdrisBound{l'}|.
Notice that the \Verb|\IdrisBound{\IdrisBound{l}}|~$\sqsubseteq$~\Verb|\IdrisBound{\IdrisBound{l'}}| flow constraint is required to enforce the \emph{no read-up} policy \cite{bell76}. That is, the result of the computation returned by \Verb|\IdrisFunction{readFile}| only involves data with sensitivity at most \Verb|\IdrisBound{l}|.
The function \Verb|\IdrisFunction{writeFile}| takes as input a secure file \Verb|\IdrisType{\IdrisType{SecFile}} \IdrisBound{\IdrisBound{l''}}| and a labeled value of sensitivity level \Verb|\IdrisBound{l'}|, and it returns a computation with sensitivity level \Verb|\IdrisBound{l}| that returns a labeled value with sensitivity level \Verb|\IdrisBound{l''}|.
Notice that both the \Verb|\IdrisBound{\IdrisBound{l}}|~$\sqsubseteq$~\Verb|\IdrisBound{\IdrisBound{l'}}| and \Verb|\IdrisBound{\IdrisBound{l'}}|~$\sqsubseteq$~\Verb| \IdrisBound{\IdrisBound{l''}}| flow constraints are required, essentially enforcing the \emph{no write-down} policy \cite{bell76}, that is, the file never receives data more sensitive than its sensitivity level.

Finally, notice that the standard library functions for reading and writing files in \Idris used to implement the functions in Figure~\ref{fig:fileIO} do not raise exceptions.
Rather, both functions return an instance of the sum type \Verb|\IdrisType{Either}|.
We stay consistent with \Idris' choice for this instead of adding exception handling as done in \MAC.

\begin{figure}[htb!]
  \centering
\begin{Verbatim}
\IdrisKeyword{data} \IdrisType{SecFile} : \{\IdrisBound{label} : \IdrisType{Type}\} -> (\IdrisBound{l} : \IdrisBound{\IdrisBound{label}}) -> \IdrisType{Type} \IdrisKeyword{where}
  \IdrisData{MkSecFile} : (\IdrisBound{path} : \IdrisType{String}) -> \IdrisType{\IdrisType{SecFile}} \IdrisBound{\IdrisBound{l}} -- TCB

\IdrisFunction{readFile} : \IdrisType{\IdrisType{Poset}} \IdrisBound{\IdrisBound{label}} => \{\IdrisBound{l},\IdrisBound{l'} : \IdrisBound{\IdrisBound{\IdrisBound{\IdrisBound{label}}}}\}
        -> \{\IdrisKeyword{auto} \IdrisBound{flow} : \IdrisBound{\IdrisBound{l}} \IdrisFunction{\IdrisFunction{`leq`}} \IdrisBound{\IdrisBound{l'}}\}
        -> \IdrisType{\IdrisType{SecFile}} \IdrisBound{\IdrisBound{l'}}
        -> \IdrisType{\IdrisType{DIO}} \IdrisBound{\IdrisBound{l}} (\IdrisType{\IdrisType{Labeled}} \IdrisBound{\IdrisBound{l'}} (\IdrisType{\IdrisType{Either}} \IdrisType{\IdrisType{FileError}} \IdrisType{String}))

\IdrisFunction{writeFile} : \IdrisType{\IdrisType{Poset}} \IdrisBound{\IdrisBound{label}} => \{\IdrisBound{l},\IdrisBound{l'},\IdrisBound{l''} : \IdrisBound{\IdrisBound{\IdrisBound{\IdrisBound{\IdrisBound{\IdrisBound{label}}}}}}\}
         -> \{\IdrisKeyword{auto} \IdrisBound{flow} : \IdrisBound{\IdrisBound{l}} \IdrisFunction{\IdrisFunction{`leq`}} \IdrisBound{\IdrisBound{l'}}\} -> \{\IdrisKeyword{auto} \IdrisBound{flow'} : \IdrisBound{\IdrisBound{l'}} \IdrisFunction{\IdrisFunction{`leq`}} \IdrisBound{\IdrisBound{l''}}\}
         -> \IdrisType{\IdrisType{SecFile}} \IdrisBound{\IdrisBound{l''}}
         -> \IdrisType{\IdrisType{Labeled}} \IdrisBound{\IdrisBound{l'}} \IdrisType{String}
         -> \IdrisType{\IdrisType{DIO}} \IdrisBound{\IdrisBound{l}} (\IdrisType{\IdrisType{Labeled}} \IdrisBound{\IdrisBound{l''}} (\IdrisType{\IdrisType{Either}} \IdrisType{\IdrisType{FileError}} \IdrisType{()}))
\end{Verbatim}
  \caption{Type signatures for secure file handling.}
  \label{fig:fileIO}
\end{figure}

\section[Case study: Conference manager system]{Case study: Conference manager system} \label{sec:conference-system}

This case study showcases the expressiveness of \DepSec by reimplementing a conference manager system with a fine-grained data-dependent security policy introduced by \citet{lourenco15}.
\citeauthor{lourenco15} base their development on a minimal $\lambda$-calculus with references and collections and they show how secure operations on relevant scenarios can be modelled and analysed using \emph{dependent information flow types}.
Our reimplementation demonstrates how \DepSec matches the expressiveness of such a special-purpose built dependent type system on a key example.

In this scenario, a user is either a regular user, an author user, or a program committee (PC) member.
The conference manager contains information about the users, their submissions, and submission reviews.
This data is stored in lists of references to records, and the goal is to statically ensure, by typing, the confidentiality of the data stored in the conference manager system.
As such, the security policy is:
\begin{itemize}
\item A registered user's information is not observable by other users.
\item The content of a paper can be seen by its authors as well as its reviewers.
\item Comments to the PC of a submission's review can only be seen by other members that are also reviewers of that submission.
\item The only authors that are allowed to see the grade and the review of the submission are those that authored that submission.
\end{itemize}

To achieve this security policy, \citeauthor{lourenco15} make use of indexed security labels~\cite{liu09}.
The security level $\mathit{U}$ is partitioned into a number of security compartments such that $\User{uid}$ represents the compartment of the registered user with id $\mathit{uid}$.
Similarly, the security level $\mathit{A}$ is indexed such that $\Author{uid}{sid}$ stands for the compartment of data belonging to author $\mathit{uid}$ and their submission $\mathit{sid}$, and $\mathit{PC}$ is indexed such that $\PC{uid}{sid}$ stands for data belonging to the PC member with user id $\mathit{uid}$ assigned to review the submission with id $\mathit{sid}$.
Furthermore, levels $\top$ and $\bot$ are introduced such that, for example, $\User{\ensuremath{\bot}} \sqsubseteq \User{uid} \sqsubseteq \User{\ensuremath{\top}}$.
Now, the security lattice is defined using two equations:
\begin{align}
 \forall \mathit{uid}, \mathit{sid}.\ \User{{uid}} &\sqsubseteq \Author{{uid}}{{sid}} \label{eq:UA}\\
  \forall \mathit{uid1}, \mathit{uid2}, \mathit{sid}.\ \Author{{uid1}}{{sid}} &\sqsubseteq \PC{{uid2}}{{sid}} \label{eq:APC}
\end{align}
\citeauthor{lourenco15} are able to type a list of submissions with a dependent sum type that assigns the content of the paper the security level $\Author{uid}{sid}$, where $\mathit{uid}$ and $\mathit{sid}$ are fields of the record.
For example, if a concrete submission with identifier $2$ was made by the user with identifier $1$, the content of the paper gets classified at security level $\Author{1}{2}$.
In consequence, $\Author{1}{2} \sqsubseteq \PC{\ensuremath{n}}{2}$ for any $\mathit{uid}$ $n$ and the content of the paper is only observable by its assigned reviewers.
Similar types are given for the list of user information and the list of submission reviews, enforcing the security policy described in the above.

To express this policy in \DepSec, we introduce abstract data types \Verb|\IdrisType{Id}| and \Verb|\IdrisType{Compartment}| (cf. Figure \ref{fig:conference-id-compartment}) followed by an implementation of the \Verb|\IdrisType{BoundedJoinSemilattice}| interface that satisfies equations \eqref{eq:UA} and \eqref{eq:APC}.
\begin{figure}[htb!]
  \centering
  \begin{minipage}{0.4\linewidth}
\begin{Verbatim}
\IdrisKeyword{data} \IdrisType{Id} : \IdrisType{Type} \IdrisKeyword{where}
  \IdrisData{Top} : \IdrisType{\IdrisType{Id}}
  \IdrisData{Nat} : \IdrisType{\IdrisType{Nat}} -> \IdrisType{\IdrisType{Id}}
  \IdrisData{Bot} : \IdrisType{\IdrisType{Id}}
\end{Verbatim}
  \end{minipage}
  \begin{minipage}{0.4\linewidth}
\begin{Verbatim}
\IdrisKeyword{data} \IdrisType{Compartment} : \IdrisType{Type} \IdrisKeyword{where}
  \IdrisData{U}  : \IdrisFunction{\IdrisFunction{Id}} -> \IdrisType{\IdrisType{Compartment}}
  \IdrisData{A}  : \IdrisFunction{\IdrisFunction{Id}} -> \IdrisFunction{\IdrisFunction{Id}} -> \IdrisType{\IdrisType{Compartment}}
  \IdrisData{PC} : \IdrisFunction{\IdrisFunction{Id}} -> \IdrisFunction{\IdrisFunction{Id}} -> \IdrisType{\IdrisType{Compartment}}
\end{Verbatim}
  \end{minipage}
  \caption{Abstract data types for the conference manager sample security lattice.}
  \label{fig:conference-id-compartment}
\end{figure}

Using the above, the required dependent sum types can easily be encoded with \DepSec in \Idris as presented in Figure \ref{fig:conference-records}.
\begin{figure}[htb!]
  \centering
  \begin{minipage}{0.49\linewidth}
\begin{Verbatim}
\IdrisKeyword{record} \IdrisType{User} \IdrisKeyword{where}
  \IdrisKeyword{constructor}\IdrisData{ MkUser}
  \IdrisBound{uid}   : \IdrisType{Id}
  \IdrisBound{name}  : \IdrisType{\IdrisType{\IdrisType{\IdrisType{Labeled}}}} (\IdrisData{\IdrisData{\IdrisData{\IdrisData{U}}}} \IdrisFunction{\IdrisFunction{\IdrisBound{\IdrisBound{uid}}}}) \IdrisType{\IdrisType{String}}
  \IdrisBound{univ}  : \IdrisType{\IdrisType{\IdrisType{\IdrisType{Labeled}}}} (\IdrisData{\IdrisData{\IdrisData{\IdrisData{U}}}} \IdrisFunction{\IdrisFunction{\IdrisBound{\IdrisBound{uid}}}}) \IdrisType{\IdrisType{String}}
  \IdrisBound{email} : \IdrisType{\IdrisType{\IdrisType{\IdrisType{Labeled}}}} (\IdrisData{\IdrisData{\IdrisData{\IdrisData{U}}}} \IdrisFunction{\IdrisFunction{\IdrisBound{\IdrisBound{uid}}}}) \IdrisType{\IdrisType{String}}

\IdrisKeyword{record} \IdrisType{Submission} \IdrisKeyword{where}
  \IdrisKeyword{constructor}\IdrisData{ MkSubmission}
  \IdrisBound{uid}   : \IdrisType{Id}
  \IdrisBound{sid}   : \IdrisType{Id}
  \IdrisBound{title} : \IdrisType{\IdrisType{\IdrisType{\IdrisType{Labeled}}}} (\IdrisData{\IdrisData{\IdrisData{\IdrisData{A}}}} \IdrisFunction{\IdrisFunction{\IdrisBound{\IdrisBound{uid}}}} \IdrisFunction{\IdrisFunction{\IdrisBound{\IdrisBound{sid}}}}) \IdrisType{\IdrisType{String}}
  \IdrisBound{abs}   : \IdrisType{\IdrisType{\IdrisType{\IdrisType{Labeled}}}} (\IdrisData{\IdrisData{\IdrisData{\IdrisData{A}}}} \IdrisFunction{\IdrisFunction{\IdrisBound{\IdrisBound{uid}}}} \IdrisFunction{\IdrisFunction{\IdrisBound{\IdrisBound{sid}}}}) \IdrisType{\IdrisType{String}}
  \IdrisBound{paper} : \IdrisType{\IdrisType{\IdrisType{\IdrisType{Labeled}}}} (\IdrisData{\IdrisData{\IdrisData{\IdrisData{A}}}} \IdrisFunction{\IdrisFunction{\IdrisBound{\IdrisBound{uid}}}} \IdrisFunction{\IdrisFunction{\IdrisBound{\IdrisBound{sid}}}}) \IdrisType{\IdrisType{String}}
\end{Verbatim}
  \end{minipage}
  \begin{minipage}{0.49\linewidth}
\begin{Verbatim}
\IdrisKeyword{record} \IdrisType{Review} \IdrisKeyword{where}
  \IdrisKeyword{constructor}\IdrisData{ MkReview}
  \IdrisBound{uid}     : \IdrisType{Id}
  \IdrisBound{sid}     : \IdrisType{Id}
  \IdrisBound{PC_only} : \IdrisType{\IdrisType{\IdrisType{\IdrisType{Labeled}}}} (\IdrisData{\IdrisData{\IdrisData{\IdrisData{PC}}}} \IdrisData{\IdrisData{\IdrisData{\IdrisBound{uid}}}} \IdrisFunction{\IdrisFunction{\IdrisBound{\IdrisBound{sid}}}}) \IdrisType{\IdrisType{String}}
  \IdrisBound{review}  : \IdrisType{\IdrisType{\IdrisType{\IdrisType{Labeled}}}} (\IdrisData{\IdrisData{\IdrisData{\IdrisData{A}}}} \IdrisData{\IdrisData{\IdrisData{\IdrisData{Top}}}} \IdrisFunction{\IdrisFunction{\IdrisBound{\IdrisBound{sid}}}}) \IdrisType{\IdrisType{String}}
  \IdrisBound{grade}   : \IdrisType{\IdrisType{\IdrisType{\IdrisType{Labeled}}}} (\IdrisData{\IdrisData{\IdrisData{\IdrisData{A}}}} \IdrisData{\IdrisData{\IdrisData{\IdrisData{Top}}}} \IdrisFunction{\IdrisFunction{\IdrisBound{\IdrisBound{sid}}}}) \IdrisType{\IdrisType{Integer}}
\end{Verbatim}
  \end{minipage}
  \caption{Conference manager types encoded with \DepSec.}
  \label{fig:conference-records}
\end{figure}
With these typings in place, implementing the examples from \citet{lourenco15} is straightforward.
For example, the function \Verb|\IdrisFunction{viewAuthorPapers}| takes as input a list of submissions and a user identifier \Verb|\IdrisBound{uid1}| from which it returns a computation that returns a list of submissions authored by the user with identifier \Verb|\IdrisBound{uid1}|.
Notice that \Verb|\IdrisFunction{uid}| denotes the automatically generated record projection function that retrieves the field \Verb|\IdrisBound{uid}| of the record, and that \Verb|\IdrisType{(\IdrisBound{x}: A ** B)}| is notation for a dependent pair (often referred to as a $\Sigma$ type) where \Verb|\IdrisType{A}| and \Verb|\IdrisType{B}| are types and \Verb|\IdrisType{B}| may depend on \Verb|\IdrisBound{x}|.

\begin{Verbatim}
\IdrisFunction{viewAuthorPapers} : \IdrisFunction{\IdrisFunction{Submissions}}
                -> (\IdrisBound{uid1} : \IdrisFunction{\IdrisType{Id}})
                -> \IdrisType{\IdrisType{DIO}} \IdrisFunction{\IdrisFunction{Bottom}} (\IdrisType{\IdrisType{List}} \IdrisType{(\IdrisBound{sub}} \IdrisType{:} \IdrisType{\IdrisType{Submission}} \IdrisType{**} \IdrisBound{\IdrisBound{uid1}} \IdrisType{=} (\IdrisFunction{\IdrisFunction{uid}} \IdrisBound{\IdrisBound{sub}})\IdrisType{)})
\end{Verbatim}
The \Verb|\IdrisFunction{addCommentSubmission}| operation is used by the PC members to add comments to the submissions.
The function takes as input a list of reviews, a user identifier of a PC member, a submission identifier, and a comment with label \Verb|\IdrisData{\IdrisData{A}} \IdrisBound{\IdrisBound{uid1}} \IdrisBound{\IdrisBound{sid1}}|.
It returns a computation that updates the \Verb|\IdrisBound{PC\_only}| field in the review of the paper with identifier \Verb|\IdrisBound{sid1}|.
\begin{Verbatim}
\IdrisFunction{addCommentSubmission} : \IdrisFunction{\IdrisFunction{Reviews}} -> (\IdrisBound{uid1} : \IdrisFunction{\IdrisType{Id}}) -> (\IdrisBound{sid1} : \IdrisFunction{\IdrisType{Id}})
                    -> \IdrisType{\IdrisType{Labeled}} (\IdrisData{\IdrisData{A}} \IdrisBound{\IdrisBound{uid1}} \IdrisBound{\IdrisBound{sid1}}) \IdrisType{String}
                    -> \IdrisType{\IdrisType{DIO}} \IdrisFunction{\IdrisFunction{Bottom}} \IdrisType{()}
\end{Verbatim}
Notice that to implement this specific type signature, up-classification is necessary to assign the comment with type \Verb|\IdrisType{\IdrisType{Labeled}} (\IdrisData{\IdrisData{A}} \IdrisBound{\IdrisBound{uid1}} \IdrisBound{\IdrisBound{sid1}}) \IdrisType{String}| to a field with type \Verb|\IdrisType{\IdrisType{\IdrisType{\IdrisType{Labeled}}}} (\IdrisData{\IdrisData{\IdrisData{\IdrisData{PC}}}} \IdrisData{\IdrisData{\IdrisData{\IdrisBound{uid}}}} \IdrisFunction{\IdrisFunction{\IdrisBound{\IdrisBound{sid1}}}}) \IdrisType{\IdrisType{String}}|.
This can be achieved soundly with the \Verb|\IdrisFunction{relabel}| primitive introduced by Vassena~et~al.~\cite{vassena18} as \Verb|{\IdrisData{A} \IdrisBound{uid1} \IdrisBound{sid1}}|~$\sqsubseteq$~\Verb|\IdrisData{\IdrisData{\IdrisData{\IdrisData{PC}}}} \IdrisData{\IdrisData{\IdrisData{\IdrisBound{uid}}} \IdrisFunction{\IdrisFunction{\IdrisBound{\IdrisBound{sid1}}}}}|.
We include this primitive in Appendix~\ref{app:core-lib}.

Several other examples are available in the accompanying source code. The entire case study amounts to about 300 lines of code where half of the lines implement and verify the lattice.

\section{Policy-parameterized functions}\label{sec:generic-functions}
A consequence of using a dependently typed language, and the design of \DepSec, is that functions can be defined such that they abstract over the security policy while retaining precise security levels.
This makes it possible to reuse code across different applications and write other libraries on top of \DepSec.
We can exploit the existence of a lattice \join, the induced poset, and their verified algebraic properties to write such functions.

\begin{figure}[htb!]
  \centering
\begin{Verbatim}
\IdrisFunction{readTwoFiles} : \IdrisType{BoundedJoinSemilattice} \IdrisBound{label}
            => \{\IdrisBound{l}, \IdrisBound{l'} : \IdrisBound{label}\}
            -> \IdrisType{SecFile} \IdrisBound{l}
            -> \IdrisType{SecFile} \IdrisBound{l'}
            -> \IdrisType{DIO} \IdrisFunction{Bottom} (\IdrisType{Labeled} (\IdrisFunction{join} \IdrisBound{l} \IdrisBound{l'}) (\IdrisType{Either} \IdrisType{FileError} \IdrisType{String}))
\IdrisFunction{readTwoFiles} \IdrisBound{file1} \IdrisBound{file2} \{\IdrisBound{l}\} \{\IdrisBound{l'}\} =
  \IdrisKeyword{do} \IdrisBound{file1'} <- \IdrisFunction{readFile} \{flow = \IdrisFunction{leq_bot_x} \IdrisBound{l}\} \IdrisBound{file1}
     \IdrisBound{file2'} <- \IdrisFunction{readFile} \{flow = \IdrisFunction{leq_bot_x} \IdrisBound{l'}\} \IdrisBound{file2}
     \IdrisKeyword{let} \IdrisBound{dio} : \IdrisType{DIO} (\IdrisFunction{join} \IdrisBound{l} \IdrisBound{l'}) (\IdrisType{Either} \IdrisType{FileError} \IdrisType{String})
       = \IdrisKeyword{do} \IdrisBound{c1} <- \IdrisFunction{unlabel} \{flow = \IdrisFunction{join_x_xy} \IdrisBound{l} \IdrisBound{l'}\} \IdrisBound{file1'}
            \IdrisBound{c2} <- \IdrisFunction{unlabel} \{flow = \IdrisFunction{join_y_xy} \IdrisBound{l} \IdrisBound{l'}\} \IdrisBound{file2'}
            \IdrisFunction{pure} $ \IdrisKeyword{case} \IdrisData{(\IdrisBound{c1\IdrisData{,}}} \IdrisBound{c2\IdrisData{)}} \IdrisKeyword{of}
                        \IdrisData{(\IdrisData{Right}} \IdrisBound{c1'\IdrisData{,}} \IdrisData{Right} \IdrisBound{c2'\IdrisData{)}} => \IdrisData{Right} $ \IdrisBound{c1'} \IdrisFunction{++} \IdrisBound{c2'}
                        \IdrisData{(\IdrisData{Left}} \IdrisBound{e1\IdrisData{,}} _\IdrisData{)} => \IdrisData{Left} \IdrisBound{e1}
                        \IdrisData{(_\IdrisData{,}} \IdrisData{Left} \IdrisBound{e2\IdrisData{)}} => \IdrisData{Left} \IdrisBound{e2}
     \IdrisFunction{plug} \{flow = \IdrisFunction{leq_bot_x} (\IdrisFunction{join} \IdrisBound{l} \IdrisBound{l'})\} \IdrisBound{dio}
\end{Verbatim}
  \caption{Reading two files to a string labeled with the join of the labels of the files.}
  \label{fig:readTwoFiles}
\end{figure}

Figure \ref{fig:readTwoFiles} presents the function \Verb|\IdrisFunction{readTwoFiles}| that is parameterized by a bounded join semilattice.
It takes two secure files with labels \Verb|\IdrisBound{l}| and \Verb|\IdrisBound{l'}| as input and returns a computation that concatenates the contents of the two files labeled with the join of \Verb|\IdrisBound{l}| and \Verb|\IdrisBound{l'}|.
To implement this, we make use of the \unlabel and \readFile primitives from Figure \ref{fig:depsec} and \ref{fig:fileIO}, respectively.
This computation unlabels the contents of the files and returns the concatenation of the contents if no file error occurred.
Notice that \Verb|\IdrisFunction{pure}| is the \Idris function for monadic return, corresponding to the \Verb|\IdrisFunction{return}| function in Haskell.
Finally, this computation is plugged into the surrounding computation.
Notice how the usage of \readFile and \unlabel introduces several proof obligations, namely $\bot \sqsubseteq$~\Verb|\IdrisBound{l}|, \Verb|\IdrisBound{l'}|, \Verb|\IdrisBound{l}|~$\sqcup$~\Verb|\IdrisBound{l'}| and \Verb|\IdrisBound{l}|, \Verb|\IdrisBound{l'}|~$\sqsubseteq$~\Verb|\IdrisBound{l}|~$\sqcup$~\Verb|\IdrisBound{l'}|.
When working on a concrete lattice these obligations are usually fulfilled by \Idris' automatic proof search but, currently, such proofs need to be given manually in the general case.
All obligations follow immediately from the algebraic properties of the bounded semilattice and are given in three auxiliary lemmas \Verb|\IdrisFunction{leq\_bot\_x}|, \Verb|\IdrisFunction{join\_x\_xy}|, and \Verb|\IdrisFunction{join\_y\_xy}| available in Appendix~\ref{app:core-lib} (amounting to 10 lines of code).

Writing functions operating on a fixed number of resources is limiting.
However, the function in Figure~\ref{fig:readTwoFiles} can easily be generalized to a function working on an arbitrary data structure containing files with different labels from an arbitrary lattice.
Similar to the approach taken by \citet{buiras15} that hide the label of a labeled value using a data type definition, we hide the label of a secure file with a dependent pair
\begin{Verbatim}
\IdrisFunction{GenFile} : \IdrisType{Type} -> \IdrisType{Type}
\IdrisFunction{GenFile} \IdrisBound{label} = \IdrisType{(\IdrisBound{l}} \IdrisType{:} \IdrisBound{label} \IdrisType{**} \IdrisType{SecFile} \IdrisBound{l\IdrisType{)}}
\end{Verbatim}
that abstracts away the concrete sensitivity level of the file. Moreover, we introduce a specialized join function
\begin{Verbatim}
\IdrisFunction{joinOfFiles} : \IdrisType{\IdrisType{BoundedJoinSemilattice}} \IdrisBound{\IdrisBound{label}}
           => \IdrisType{\IdrisType{List}} (\IdrisType{\IdrisType{GenFile}} \IdrisBound{\IdrisBound{label}})
           -> \IdrisBound{\IdrisBound{label}}
\end{Verbatim}
that folds the \join function over a list of file sensitivity labels. Now, it is possible to implement a function that takes as input a list of files, reads the files, and returns a computation that concatenates all their contents (if no file error occurred) where the return value is labeled with the join of all their sensitivity labels.
\begin{Verbatim}
\IdrisFunction{readFiles} : \IdrisType{BoundedJoinSemilattice} \IdrisBound{a}
         => (\IdrisBound{files}: (\IdrisType{List} (\IdrisType{GenFile} \IdrisBound{a})))
         -> \IdrisType{DIO} \IdrisFunction{Bottom} (\IdrisType{Labeled} (\IdrisFunction{joinOfFiles} \IdrisBound{files})
                         (\IdrisType{Either} (\IdrisType{List} \IdrisType{FileError}) \IdrisType{String}))
\end{Verbatim}
When implementing this, one has to satisfy non-trivial proof obligations as, for example, that $l \sqsubseteq$~\Verb|\IdrisFunction{joinOfFiles}(\IdrisBound{files})| for all secure files $f \in$~\Verb|\IdrisBound{files}| where the label of $f$ is~$l$.
While provable (in 40 lines of code in our development), if equality is decidable for elements of the concrete lattice we can postpone such proof obligations to a point in time where it can be solved by reflexivity of equality.
By defining a decidable lattice order
\begin{Verbatim}
\IdrisFunction{decLeq} : \IdrisType{\IdrisType{JoinSemilattice}} \IdrisBound{\IdrisBound{a}} => \IdrisType{\IdrisType{DecEq}} \IdrisBound{\IdrisBound{a}} => (\IdrisBound{x}, \IdrisBound{y} : \IdrisBound{\IdrisBound{\IdrisBound{\IdrisBound{a}}}}) -> \IdrisType{\IdrisType{Dec}} (\IdrisBound{\IdrisBound{x}} \IdrisFunction{\IdrisFunction{`leq`}} \IdrisBound{\IdrisBound{y}})
\IdrisFunction{\IdrisFunction{decLeq}} \IdrisBound{x} \IdrisBound{y} = \IdrisFunction{\IdrisFunction{decEq}} (\IdrisBound{\IdrisBound{x}} \IdrisFunction{\IdrisFunction{`join`}} \IdrisBound{\IdrisBound{y}}) \IdrisBound{\IdrisBound{y}}
\end{Verbatim}
we can get such a proof ``for free'' by inserting a dynamic check of whether the flow is allowed.
With this, a \Verb|\IdrisFunction{readFiles'}| function with the exact same functionality as the original \Verb|\IdrisFunction{readFiles}| function can be implemented with minimum effort.
In the below, \Verb|\IdrisBound{prf}| is the proof that the label \Verb|\IdrisFunction{l}| of \Verb|\IdrisBound{file}| may flow to \Verb|\IdrisFunction{joinOfFiles} \IdrisBound{files} |.
\begin{Verbatim}
\IdrisFunction{readFiles'} : \IdrisType{BoundedJoinSemilattice} \IdrisBound{a} => \IdrisType{DecEq} \IdrisBound{a}
          => (\IdrisBound{files}: (\IdrisType{List} (\IdrisType{GenFile} \IdrisBound{a})))
          -> \IdrisType{DIO} \IdrisFunction{Bottom} (\IdrisType{Labeled} (\IdrisFunction{joinOfFiles} \IdrisBound{files})
                          (\IdrisType{Either} (\IdrisType{List} \IdrisType{FileError}) \IdrisType{String}))
\IdrisFunction{\IdrisFunction{readFiles'}} \IdrisBound{files} =
  ...
  \IdrisKeyword{case} \IdrisFunction{decLeq}{ \IdrisFunction{l} (\IdrisFunction{\IdrisFunction{joinOfFiles}} \IdrisBound{\IdrisBound{files}})} \IdrisKeyword{of}
    \IdrisData{Yes} \IdrisBound{prf} => ...
    \IdrisData{No} _ => ...
\end{Verbatim}
The downside of this is the introduction of a negative case, the \Verb|\IdrisData{No}|-case, that needs handling even though it will never occur if \Verb|\IdrisFunction{joinOfFiles}| is implemented correctly.

In combination with \Verb|\IdrisType{GenFile}|, \Verb|\IdrisFunction{decLeq}| can be used to implement several other interesting examples.
For instance, a function that reads all files with a sensitivity label below a certain label to a string labeled with that label.
The accompanying source code showcases multiple such examples that exploit decidable equality.

\section{Declassification}\label{sec:declassification}
Realistic applications often release some secret information as part of their intended behavior; this action is known as \emph{declassification}.

In \DepSec, trusted code may declassify secret information without adhering to any security policy as trusted code has access to both the \DIO{$\ell$}{$a$} and \Labeled{$\ell$}{$a$} data constructors.
However, only giving trusted code the power of declassification is limiting as we want to allow the use of third-party code as much as possible.
The main challenge we address is how to grant untrusted code the right amount of power such that declassification is only possible in the intended way.

\citet*{sabelfeld05} identify four dimensions of declassification: \emph{what}, \emph{who}, \emph{where}, and \emph{when}.
In this section, we present novel and powerful means for static declassification with respect to three of the four dimensions and illustrate these with several examples.
To statically enforce different declassification policies we take the approach of \citet{sabelfeld03b} and use escape hatches, a special kind of functions.
In particular, we introduce the notion of a \emph{hatch builder}; a function that creates an escape hatch for a particular resource and which can only be used when a certain condition is met.
Such an escape hatch can therefore be used freely by untrusted code.

\subsection{The \emph{what} dimension}
Declassification policies related to the \emph{what} dimension place restrictions on exactly ``what'' and ``how much'' information is released.
It is in general difficult to statically predict how data to be declassified is manipulated or changed by programs \cite{russo09} but exploiting dependent types can get us one step closer.

To control what information is released, we introduce the notion of a \emph{predicate hatch builder} only available to trusted code for producing hatches for untrusted code.
\begin{Verbatim}
\IdrisFunction{predicateHatchBuilder} : \IdrisType{\IdrisType{Poset}} \IdrisBound{\IdrisBound{lt}} => \{\IdrisBound{l}, \IdrisBound{l'} : \IdrisBound{\IdrisBound{\IdrisBound{\IdrisBound{lt}}}}\} -> \{\IdrisBound{D}, \IdrisBound{E} : \IdrisType{\IdrisType{Type}}\}
                     -> (\IdrisBound{d} : \IdrisBound{\IdrisBound{D}})
                     -> (\IdrisBound{P} : \IdrisBound{\IdrisBound{D}} -> \IdrisBound{\IdrisBound{E}} -> \IdrisType{Type})
                     -> \IdrisType{(\IdrisBound{d}} \IdrisType{:} \IdrisBound{\IdrisBound{D}} \IdrisType{**} \IdrisType{\IdrisType{Labeled}} \IdrisBound{\IdrisBound{l}} \IdrisType{(\IdrisBound{e}} \IdrisType{:} \IdrisBound{\IdrisBound{E}} \IdrisType{**} \IdrisBound{\IdrisBound{P}} \IdrisBound{\IdrisBound{d}} \IdrisBound{\IdrisBound{e\IdrisType{)}}}
                                  -> \IdrisType{\IdrisType{Labeled}} \IdrisBound{\IdrisBound{l'}} \IdrisBound{\IdrisBound{E\IdrisType{)}}} -- TCB
\end{Verbatim}
Intuitively, the hatch builder takes as input a data structure \Verb|\IdrisBound{d}| of type \Verb|\IdrisBound{D}| followed by a predicate \Verb|\IdrisBound{P}| upon \Verb|\IdrisBound{d}| and something of type \Verb|\IdrisBound{E}|.
It returns a dependent pair of the initial data structure and a declassification function from sensitivity level \Verb|\IdrisBound{l}| to \Verb|\IdrisBound{l'}|.
To actually declassify a labeled value \Verb|\IdrisBound{e}| of type \Verb|\IdrisBound{E}| one has to provide a proof that \Verb|\IdrisBound{P d e}| holds.
Notice that this proof may be constructed in the context of the sensitivity level \Verb|\IdrisBound{l}| that we are declassifying from.

The reason for parameterizing the predicate \Verb|\IdrisBound{P}| by a data structure of type \Verb|\IdrisBound{D}| is to allow declassification to be restricted to a specific context or data structure.
This is used in the following example of an auction system, in which only the highest bid of a specific list of bids can be declassified.

\paragraph{Example}
Consider a two point lattice where \Verb|\IdrisData{L}|~$\sqsubseteq$~\Verb|\IdrisData{H}|, \Verb|\IdrisData{H}|~$\not\sqsubseteq$~\Verb|\IdrisData{L}| and an auction system where participants place bids secretly.
All bids are labeled \Verb|\IdrisData{H}| and are put into a data structure \Verb|\IdrisType{BidLog}|.
In the end, we want only the winning bid to be released and hence declassified to label \Verb|\IdrisData{L}|.
To achieve this, we define a declassification predicate \Verb|\IdrisFunction{HighestBid}|.
\begin{Verbatim}
\IdrisFunction{HighestBid} : \IdrisFunction{\IdrisType{BidLog}} -> \IdrisFunction{\IdrisFunction{Bid}} -> \IdrisType{Type}
\IdrisFunction{\IdrisFunction{HighestBid}} = \textbackslash{}\IdrisBound{log}, \IdrisBound{b} => \IdrisType{\IdrisType{(\IdrisType{\IdrisType{Elem}}}} (\IdrisFunction{\IdrisFunction{label}} \IdrisBound{\IdrisBound{b}}) \IdrisBound{\IdrisBound{log\IdrisType{,}}} \IdrisType{\IdrisType{MaxBid}} \IdrisBound{\IdrisBound{b}} \IdrisBound{\IdrisBound{log\IdrisType{)}}}
\end{Verbatim}
Informally, given a log \Verb|\IdrisBound{log}| of labeled bids and a bid \Verb|\IdrisBound{b}|, the predicate states that the bid is in the log, \Verb|\IdrisType{Elem} (\IdrisFunction{label} \IdrisBound{b}) \IdrisBound{log}|, and that it is the maximum bid, \Verb|\IdrisType{MaxBid} \IdrisBound{b} \IdrisBound{log}|.
We apply \Verb|\IdrisFunction{predicateHatchBuilder}| to a log of bids and the \Verb|\IdrisFunction{HighestBid}| predicate to obtain a specialized escape hatch of type \Verb|\IdrisFunction{BidHatch}| that enforces the declassification policy defined by the predicate.

\begin{Verbatim}
\IdrisFunction{BidHatch} : \IdrisType{Type}
\IdrisFunction{\IdrisFunction{BidHatch}} = \IdrisType{(\IdrisBound{log}} \IdrisType{:} \IdrisFunction{\IdrisFunction{BidLog}} \IdrisType{**} \IdrisType{\IdrisType{Labeled}} \IdrisData{\IdrisData{H}} \IdrisType{(\IdrisBound{b}} \IdrisType{:} \IdrisFunction{\IdrisFunction{Bid}} \IdrisType{**} \IdrisFunction{\IdrisFunction{HighestBid}} \IdrisBound{\IdrisBound{log}} \IdrisBound{\IdrisBound{b\IdrisType{)}}}
                            -> \IdrisType{\IdrisType{Labeled}} \IdrisData{\IdrisData{L}} \IdrisFunction{\IdrisFunction{Bid\IdrisType{)}}}
\end{Verbatim}
This hatch can be used freely by untrusted code when implementing the auction system.
By constructing a function
\begin{Verbatim}
\IdrisFunction{getMaxBid} : (\IdrisBound{r} : \IdrisFunction{\IdrisFunction{BidLog}}) -> \IdrisType{\IdrisType{DIO}} \IdrisData{\IdrisData{H}} \IdrisType{(\IdrisBound{b}} \IdrisType{:} \IdrisFunction{\IdrisFunction{Bid}} \IdrisType{**} \IdrisFunction{\IdrisFunction{HighestBid}} \IdrisBound{\IdrisBound{r}} \IdrisBound{\IdrisBound{b\IdrisType{)}}}
\end{Verbatim}
untrusted code can plug the resulting computation into an \Verb|\IdrisData{L}| context and declassify the result value using the argument \Verb|\IdrisBound{hatch}| function.
\begin{Verbatim}
\IdrisFunction{auction} : \IdrisFunction{\IdrisFunction{BidHatch}} -> \IdrisType{\IdrisType{DIO}} \IdrisData{\IdrisData{L}} (\IdrisType{\IdrisType{Labeled}} \IdrisData{\IdrisData{L}} \IdrisFunction{\IdrisFunction{Bid}})
\IdrisFunction{auction} \IdrisData{([]} \IdrisData{**} _\IdrisData{)} = \IdrisFunction{pure} $ \IdrisFunction{label} \IdrisData{(\IdrisBound{"no bids"\IdrisData{,}}} \IdrisData{0\IdrisData{)}}
\IdrisFunction{\IdrisFunction{auction}} \IdrisData{\IdrisData{(\IdrisBound{r}}} \IdrisData{\IdrisData{::}} \IdrisBound{rs} \IdrisData{**} \IdrisBound{hatch\IdrisData{)}} =
  \IdrisKeyword{do} \IdrisBound{max} <- \IdrisFunction{\IdrisFunction{plug}} (\IdrisFunction{getMaxBid} (\IdrisBound{r :: rs}))
    \IdrisKeyword{let} \IdrisBound{max'} : \IdrisType{\IdrisType{Labeled}} \IdrisData{\IdrisData{L}} \IdrisFunction{\IdrisFunction{Bid}} = \IdrisBound{\IdrisBound{\IdrisBound{\IdrisBound{hatch}}}} \IdrisBound{\IdrisBound{max}}
    ...
\end{Verbatim}
To show the \Verb|\IdrisFunction{HighestBid}| predicate (which in our implementation comprises 40 lines of code), untrusted code will need a generalized \Verb|\IdrisFunction{unlabel}| function that establishes the relationship between \Verb|\IdrisFunction{label}| and the output of \Verb|\IdrisFunction{unlabel}|.
The only difference is its return type: a computation that returns a value and a proof that when labeling this value we will get back the initial input.
This definition poses no risk to soundness as the proof is protected by the computation sensitivity level.
\begin{Verbatim}
\IdrisFunction{unlabel'} : \IdrisType{\IdrisType{Poset}} \IdrisBound{\IdrisBound{lt}} => \{\IdrisBound{l},\IdrisBound{l'}: \IdrisBound{\IdrisBound{\IdrisBound{\IdrisBound{lt}}}}\}
        -> \{\IdrisKeyword{auto} \IdrisBound{flow}: \IdrisBound{\IdrisBound{l}} \IdrisFunction{\IdrisFunction{`leq`}} \IdrisBound{\IdrisBound{l'}}\}
        -> (\IdrisBound{labeled}: \IdrisType{\IdrisType{Labeled}} \IdrisBound{\IdrisBound{l}} \IdrisBound{\IdrisBound{a}})
        -> \IdrisType{\IdrisType{DIO}} \IdrisBound{\IdrisBound{l'}} \IdrisType{(\IdrisBound{c}} \IdrisType{:} \IdrisBound{\IdrisBound{a}} \IdrisType{**} \IdrisFunction{\IdrisFunction{label}} \IdrisBound{\IdrisBound{c}} \IdrisType{=} \IdrisBound{\IdrisBound{labeled\IdrisType{)}}}
\end{Verbatim}

\paragraph{Limiting hatch usage}
Notice how escape hatches, generally, can be used an indefinite number of times.
The \texttt{Control.ST} library \cite{brady16} provides facilities for creating, reading, writing, and destroying state in the type of \Idris functions and, especially, allows tracking of state change in a function type.
This allows us to limit the number of times a hatch can be used.
Based on a concept of resources, a dependent type \Verb|\IdrisType{STrans}| tracks how resources change when a function is invoked.
Specifically, a value of type \Verb|\IdrisType{STrans} \IdrisBound{m returnType in\_{res} out\_{res}}| represents a sequence of actions that manipulate state where \Verb|\IdrisBound{m}| is an underlying computation context in which the actions will be executed, \Verb|\IdrisBound{returnType}| is the return type of the sequence, \Verb|\IdrisBound{in\_{res}}| is the required list of resources available before executing the sequence, and \Verb|\IdrisBound{out\_{res}}| is the list of resources available after executing the sequence.

To represent state transitions more directly, \ST is a type level function that computes an appropriate \Verb|\IdrisType{STrans}| type given a underlying computation context, a result type, and a list of \emph{actions}, which describe transitions on resources.
Actions can take multiple forms but the one we will make use of is of the form \Verb|\IdrisBound{lbl} \IdrisBound{:::} \IdrisFunction{ty\_in} \IdrisKeyword{:->} \IdrisFunction{ty\_out}| that expresses that the resource \Verb|\IdrisBound{lbl}| begins in state \Verb|\IdrisFunction{ty\_in}| and ends in state \Verb|\IdrisFunction{ty\_out}|.
By instantiating \ST with \DIO{\Verb|\IdrisBound{l}|}{\hspace{-0.2em}} as the underlying computation context:
\begin{Verbatim}
\IdrisFunction{DIO'} : \IdrisBound{\IdrisBound{l}} -> (\IdrisBound{ty} : \IdrisType{Type}) -> \IdrisType{\IdrisType{List}} (\IdrisType{\IdrisType{Action}} \IdrisBound{\IdrisBound{ty}}) -> \IdrisType{Type}
\IdrisFunction{\IdrisFunction{DIO'}} \IdrisBound{l} = \IdrisFunction{\IdrisFunction{ST}} (\IdrisType{\IdrisType{DIO}} \IdrisBound{\IdrisBound{l}})
\end{Verbatim}
and use it together with a resource \Verb|\IdrisFunction{Attempts}|, we can create a function \Verb|\IdrisFunction{limit}| that applies its first argument \Verb|\IdrisBound{f}| to its second argument \Verb|\IdrisBound{arg}| with \Verb|\IdrisFunction{\IdrisFunction{Attempts}} (\IdrisData{\IdrisData{S}} \IdrisBound{\IdrisBound{n}})| as its initial required state and \Verb|\IdrisFunction{\IdrisFunction{Attempts}} \IdrisBound{n}| as the output state.
\begin{Verbatim}
\IdrisFunction{limit} : (\IdrisBound{f} : \IdrisBound{\IdrisBound{a}} -> \IdrisBound{\IdrisBound{b}}) -> (\IdrisBound{arg} : \IdrisBound{\IdrisBound{a}})
     -> \IdrisFunction{\IdrisFunction{DIO'}} \IdrisBound{\IdrisBound{l}} \IdrisBound{\IdrisBound{b}} \IdrisData{\IdrisData{[\IdrisBound{\IdrisBound{attempts}}}} \IdrisFunction{\IdrisFunction{:::}} \IdrisFunction{\IdrisFunction{Attempts}} (\IdrisData{\IdrisData{S}} \IdrisBound{\IdrisBound{n}}) \IdrisData{\IdrisData{:->}} \IdrisFunction{\IdrisFunction{Attempts}} \IdrisBound{\IdrisBound{n\IdrisData{\IdrisData{]}}}}
\end{Verbatim}
That is, we encode that the function consumes ``an attempt.'' With the \Verb|\IdrisFunction{limit}| function it is possible to create functions where users are forced, by typing, to specify how many times it is used.

As an example, consider a variant of an example by \citet{russo09} where we construct a specialized hatch \Verb|\IdrisFunction{passwordHatch}| that declassifies the boolean comparison of a secret number with an arbitrary number.
\begin{Verbatim}
\IdrisFunction{passwordHatch} : (\IdrisBound{labeled} : \IdrisType{\IdrisType{Labeled}} \IdrisData{\IdrisData{H}} \IdrisType{Int})
             -> (\IdrisBound{guess} : \IdrisType{Int})
             -> \IdrisFunction{\IdrisFunction{DIO'}} \IdrisBound{\IdrisBound{l}} \IdrisType{\IdrisType{Bool}} \IdrisData{\IdrisData{[\IdrisBound{\IdrisBound{attempts}}}} \IdrisFunction{\IdrisFunction{:::}} \IdrisFunction{\IdrisFunction{Attempts}} (\IdrisData{\IdrisData{S}} \IdrisBound{\IdrisBound{n}}) \IdrisData{\IdrisData{:->}} \IdrisFunction{\IdrisFunction{Attempts}} \IdrisBound{\IdrisBound{n\IdrisData{\IdrisData{]}}}}
\IdrisFunction{\IdrisFunction{passwordHatch}} (\IdrisData{\IdrisData{MkLabeled}} \IdrisBound{v}) = \IdrisFunction{\IdrisFunction{limit}} (\textbackslash{}\IdrisBound{g} => \IdrisBound{\IdrisBound{g}} \IdrisFunction{\IdrisFunction{==}} \IdrisBound{\IdrisBound{v}})
\end{Verbatim}
To use this hatch, untrusted code is forced to specify how many times it is used.
\begin{Verbatim}
\IdrisFunction{pwCheck} : \IdrisType{\IdrisType{Labeled}} \IdrisData{\IdrisData{H}} \IdrisType{Int}
       -> \IdrisFunction{\IdrisFunction{DIO'}} \IdrisData{\IdrisData{L}} \IdrisType{()} \IdrisData{\IdrisData{[\IdrisBound{\IdrisBound{attempts}}}} \IdrisFunction{\IdrisFunction{:::}} \IdrisFunction{\IdrisFunction{Attempts}} (\IdrisData{3} \IdrisFunction{+} \IdrisBound{\IdrisBound{n}}) \IdrisData{\IdrisData{:->}} \IdrisFunction{\IdrisFunction{Attempts}} \IdrisBound{\IdrisBound{n\IdrisData{\IdrisData{]}}}}
\IdrisFunction{\IdrisFunction{pwCheck}} \IdrisBound{pw} =
  \IdrisKeyword{do} \IdrisBound{x1} <- \IdrisFunction{\IdrisFunction{passwordHatch}} \IdrisBound{\IdrisBound{pw}} \IdrisFunction{\IdrisFunction{\IdrisData{1}}}
     \IdrisBound{x2} <- \IdrisFunction{\IdrisFunction{passwordHatch}} \IdrisBound{\IdrisBound{pw}} \IdrisFunction{\IdrisFunction{\IdrisData{2}}}
     \IdrisBound{x3} <- \IdrisFunction{\IdrisFunction{passwordHatch}} \IdrisBound{\IdrisBound{pw}} \IdrisFunction{\IdrisFunction{\IdrisData{3}}}
     \IdrisBound{x4} <- \IdrisFunction{\IdrisFunction{passwordHatch}} \IdrisBound{\IdrisBound{pw}} \IdrisFunction{\IdrisFunction{\IdrisData{4}}} -- type error!
     ...
\end{Verbatim}

\subsection{The \emph{who} and \emph{when} dimensions}\label{sec:who}
To handle declassification policies related to \emph{who} may declassify information and \emph{when} declassification may happen we introduce the notion of a \emph{token hatch builder} only available to trusted code for producing hatches for untrusted code to use.
\begin{Verbatim}
\IdrisFunction{tokenHatchBuilder} : \IdrisType{\IdrisType{Poset}} \IdrisBound{\IdrisBound{labelType}} => \{\IdrisBound{l}, \IdrisBound{l'} : \IdrisBound{\IdrisBound{\IdrisBound{\IdrisBound{labelType}}}}\} -> \{\IdrisBound{E}, \IdrisBound{S} : \IdrisType{\IdrisType{Type}}\}
                 -> (\IdrisBound{Q} : \IdrisBound{\IdrisBound{S}} -> \IdrisType{Type})
                 -> \IdrisType{(\IdrisBound{s}} \IdrisType{:} \IdrisBound{\IdrisBound{S}} \IdrisType{**} \IdrisBound{\IdrisBound{\IdrisBound{\IdrisBound{Q}}}} \IdrisBound{\IdrisBound{s\IdrisType{)}}} -> \IdrisType{\IdrisType{Labeled}} \IdrisBound{\IdrisBound{l}} \IdrisBound{\IdrisBound{E}} -> \IdrisType{\IdrisType{Labeled}} \IdrisBound{\IdrisBound{l'}} \IdrisBound{\IdrisBound{E}} -- TCB
\end{Verbatim}
The hatch builder takes as input a predicate \Verb|\IdrisBound{Q}| on something of type \Verb|\IdrisBound{S}| and returns a declassification function from sensitivity level \Verb|\IdrisBound{l}| to \Verb|\IdrisBound{l\textquotesingle}| given that the user can prove the existence of some \Verb|\IdrisBound{s}| such that \Verb|\IdrisBound{Q s}| holds.
As such, by limiting when and how untrusted can obtain a value that satisfy predicate \Verb|\IdrisBound{Q}|, we can construct several interesting declassification policies.

The rest of this section discusses how predicate hatches can be used for time-based and authority-based control of declassification; the use of the latter is demonstrated on a case study.

\paragraph{Time-based hatches}
To illustrate the idea of token hatches for the \emph{when} dimension of declassification, consider the following example.
Let \Verb|\IdrisType{Time}| be an abstract data type with a data constructor only available to trusted code and \Verb|\IdrisFunction{tick} : \IdrisType{\IdrisType{DIO}} \IdrisBound{\IdrisBound{l}} \IdrisType{\IdrisType{Time}}| a function that returns the current system time wrapped in the \Verb|\IdrisType{Time}| data type such that this is the only way for untrusted code to construct anything of type \Verb|\IdrisType{Time}|.
Notice that this does not expose an unrestricted timer API as untrusted code can not inspect the actual value.

Now, we instantiate the token hatch builder with a predicate that demands the existence of a \Verb|\IdrisType{Time}| token that is greater than some specific value.
\begin{Verbatim}
\IdrisFunction{TimeHatch} : \IdrisType{\IdrisType{Time}} -> \IdrisType{Type}
\IdrisFunction{\IdrisFunction{TimeHatch}} \IdrisBound{t} = \IdrisType{(\IdrisBound{t'}} \IdrisType{**} \IdrisBound{\IdrisBound{t}} \IdrisFunction{\IdrisFunction{<=}} \IdrisBound{\IdrisBound{t'}} \IdrisType{=} \IdrisData{\IdrisData{True\IdrisType{)}}} -> \IdrisType{\IdrisType{Labeled}} \IdrisData{\IdrisData{H}} \IdrisType{\IdrisType{Nat}} -> \IdrisType{\IdrisType{Labeled}} \IdrisData{\IdrisData{L}} \IdrisType{\IdrisType{Nat}}
\end{Verbatim}
As such, \Verb|\IdrisFunction{TimeHatch} \IdrisBound{t}| can only be used after a specific point in time \Verb|\IdrisBound{t}| has passed as only then untrusted code will  be able to satisfy the predicate.
\begin{Verbatim}
\IdrisFunction{timer} : \IdrisType{\IdrisType{Labeled}} \IdrisData{\IdrisData{H}} \IdrisType{\IdrisType{Nat}} -> \IdrisFunction{\IdrisFunction{TimeHatch}} \IdrisBound{\IdrisBound{t}} -> \IdrisType{\IdrisType{DIO}} \IdrisData{\IdrisData{L}} \IdrisType{()}
\IdrisFunction{\IdrisFunction{timer}} \IdrisBound{secret} \{\IdrisBound{t}\} \IdrisBound{timeHatch} =
  \IdrisKeyword{do} \IdrisBound{time} <- \IdrisFunction{\IdrisFunction{tick}}
     \IdrisKeyword{\IdrisType{\IdrisType{case \IdrisFunction{\IdrisFunction{\IdrisType{\IdrisBound{\IdrisBound{\IdrisBound{\IdrisBound{\IdrisBound{\IdrisBound{\IdrisBound{decEq (\IdrisBound{\IdrisBound{t}} \IdrisFunction{\IdrisFunction{<=}} \IdrisBound{\IdrisBound{time}}) \IdrisData{\IdrisData{True}}}}}}}}}}}} \IdrisKeyword{of}
}}}
       \IdrisData{\IdrisData{Yes}} \IdrisBound{prf} =>
         \IdrisKeyword{let} \IdrisBound{declassified} : \IdrisType{\IdrisType{Labeled}} \IdrisData{\IdrisData{L}} \IdrisType{\IdrisType{Nat}} = \IdrisBound{\IdrisBound{timeHatch}} \IdrisData{\IdrisData{(\IdrisBound{time}}} \IdrisData{**} \IdrisBound{\IdrisBound{prf\IdrisData{)}}} \IdrisBound{\IdrisBound{secret}}
         ...
       \IdrisData{\IdrisData{No}} _ => ...
\end{Verbatim}

\paragraph{Authority-based hatches}
The \emph{Decentralized Labeling Model} (DLM)~\cite{myers00} marks data with a set of principals who owns the information.
While executing a program, the program is given \emph{authority}, that is, it is authorized to act on behalf of some set of principals.
Declassification simply makes a copy of the released data and marks it with the same set of principals but excludes the authorities.

Similarly to Russo et al.~\cite{russo09}, we adapt this idea such that it works on a security lattice of~\Verb|\IdrisType{Principal}|s, assign authorities with security levels from the lattice, and let authorities declassify information at that security level.

To model this, we define the abstract data type \Verb|\IdrisType{Authority}| with a data constructor available only to trusted code so that having an instance of \Verb|\IdrisType{Authority} \IdrisBound{s}| corresponds to having the authority of the principal \Verb|\IdrisBound{s}|.
Notice how assignment of authorities to pieces of code consequently is a part of the trusted code.
Now, we instantiate the token hatch builder with a predicate that demands the authority of \Verb|\IdrisBound{s}| to declassify information at that level.
\begin{Verbatim}
\IdrisFunction{authHatch} : \{ \IdrisBound{l}, \IdrisBound{l'} : \IdrisType{\IdrisType{\IdrisType{\IdrisType{Principal}}}} \}
         -> \IdrisType{(\IdrisBound{s}} \IdrisType{**} \IdrisType{\IdrisType{(\IdrisBound{\IdrisBound{l}}}} \IdrisType{=} \IdrisBound{\IdrisBound{s\IdrisType{,}}} \IdrisType{\IdrisType{Authority}} \IdrisBound{\IdrisBound{s\IdrisType{)\IdrisType{)}}}}
         -> \IdrisType{\IdrisType{Labeled}} \IdrisBound{\IdrisBound{l}} \IdrisBound{\IdrisBound{a}} -> \IdrisType{\IdrisType{Labeled}} \IdrisBound{\IdrisBound{l'}} \IdrisBound{\IdrisBound{a}}
\IdrisFunction{\IdrisFunction{authHatch}} \{\IdrisBound{l}\} = \IdrisFunction{\IdrisFunction{tokenHatchBuilder}} (\textbackslash{}\IdrisBound{s} => \IdrisType{\IdrisType{(\IdrisBound{\IdrisBound{l \IdrisType{=\IdrisBound{\IdrisBound{ s\IdrisType{,}}}}}}}} \IdrisType{\IdrisType{Authority}} \IdrisBound{\IdrisBound{s\IdrisType{)}}})
\end{Verbatim}
That is, \Verb|\IdrisFunction{authHatch}| makes it possible to declassify information at level \Verb|\IdrisBound{l}| to \Verb|\IdrisBound{l'}| given an instance of the \Verb|\IdrisType{Authority} \IdrisBound{l}| data type.

\paragraph{Example}
Consider the scenario of an online dating service that has the distinguishing feature of allowing its users to specify the visibility of their profiles at a fine-grained level.
To achieve this, the service allows users to provide a \emph{discovery agent} that controls their visibility.
Consider a user, Bob, whose implementation of the discovery agent takes as input his own profile and the profile of another user, say Alice.
The agent returns a possibly side-effectful computation that returns an option type indicating whether Bob wants to be discovered by Alice.
If that is the case, a profile is returned by the computation with the information about Bob that he wants Alice to be able to see.
When Alice searches for candidate matches, her profile is run against the discovery agents of all candidates and the result is added to her browsing queue.

To implement this dating service, we define the record type \Verb|\IdrisType{ProfileInfo} \IdrisBound{A}| that contains personal information related to principal \Verb|\IdrisBound{A}|.

\begin{Verbatim}
\IdrisKeyword{record} \IdrisType{ProfileInfo} (\IdrisBound{A} : \IdrisType{Principal}) \IdrisKeyword{where}
  \IdrisKeyword{constructor} \IdrisData{MkProfileInfo}
  \IdrisBound{name}      : \IdrisType{Labeled} \IdrisBound{A} \IdrisType{String}
  \IdrisBound{gender}    : \IdrisType{Labeled} \IdrisBound{A} \IdrisType{String}
  \IdrisBound{birthdate} : \IdrisType{Labeled} \IdrisBound{A} \IdrisType{String}
  ...
\end{Verbatim}
The interesting part of the dating service is the implementation of discovery agents.
Figure \ref{fig:sample-discoverer} presents a sample discovery agent that matches all profiles with the opposite gender and only releases information about the name and gender.
The discovery agent demands the authority of \Verb|\IdrisBound{A}| and takes as input two profiles \Verb|\IdrisBound{a} : \IdrisType{\IdrisType{ProfileInfo}} \IdrisBound{\IdrisBound{A}}| and \Verb|\IdrisBound{b} : \IdrisType{\IdrisType{ProfileInfo}} \IdrisBound{\IdrisBound{B}}|.
The resulting computation security level is \Verb|\IdrisBound{B}| so to incorporate information from \Verb|\IdrisBound{a}| into the result, declassification is needed.
This is achieved by providing \Verb|\IdrisFunction{authHatch}| with the authority proof of \Verb|\IdrisBound{A}|.
The discovery agent \Verb|\IdrisFunction{sampleDiscoverer}| in Figure~\ref{fig:sample-discoverer} unlabels \Verb|\IdrisBound{B}|'s gender, declassifies and unlabels \Verb|\IdrisBound{A}|'s gender and name, and compares the two genders.
If the genders match, a profile with type \Verb|\IdrisType{ProfileInfo} \IdrisBound{\IdrisBound{B}}| only containing the name and gender of \Verb|\IdrisBound{A}| is returned.
Otherwise, \Verb|\IdrisData{Nothing}| is returned indicating that \Verb|\IdrisBound{A}| does not want to be discovered.
Notice that \Verb|\IdrisFunction{Refl}| is the constructor for the built-in equality type in \Idris and it is used to construct the proof of equality between principals required by the hatch.
\begin{figure}[htb!]
  \centering
\begin{Verbatim}
\IdrisFunction{sampleDiscoverer} : \{\IdrisBound{A}, \IdrisBound{B} : \IdrisType{\IdrisType{\IdrisType{\IdrisType{Principal}}}}\}
                -> \IdrisType{\IdrisType{Authority}} \IdrisBound{\IdrisBound{A}}
                -> (\IdrisBound{a} : \IdrisType{\IdrisType{ProfileInfo}} \IdrisBound{\IdrisBound{A}})
                -> (\IdrisBound{b} : \IdrisType{\IdrisType{ProfileInfo}} \IdrisBound{\IdrisBound{B}})
                -> \IdrisType{\IdrisType{DIO}} \IdrisBound{\IdrisBound{B}} (\IdrisType{\IdrisType{Maybe}} (\IdrisFunction{\IdrisFunction{ProfileInfo}} \IdrisBound{\IdrisBound{B}}))
\IdrisFunction{\IdrisFunction{sampleDiscoverer}} \{\IdrisBound{A}\} \{\IdrisBound{B}\} \IdrisBound{auth} \IdrisBound{a} \IdrisBound{b} =
  \IdrisKeyword{do} \IdrisBound{bGender} <- \IdrisFunction{\IdrisFunction{unlabel}} $ \IdrisFunction{\IdrisFunction{gender}} \IdrisBound{\IdrisBound{b}}
     \IdrisBound{aGender} <- \IdrisFunction{\IdrisFunction{unlabel}} $ \IdrisFunction{\IdrisFunction{authHatch}} \IdrisData{\IdrisData{(\IdrisBound{A}}} \IdrisData{**} \IdrisData{\IdrisData{(\IdrisData{\IdrisData{Refl\IdrisData{,}}}}} \IdrisBound{\IdrisBound{auth\IdrisData{)\IdrisData{)}}}} (\IdrisFunction{\IdrisFunction{gender}} \IdrisBound{\IdrisBound{a}})
     \IdrisBound{aName} <- \IdrisFunction{\IdrisFunction{unlabel}} $ \IdrisFunction{\IdrisFunction{authHatch}} \IdrisData{\IdrisData{(\IdrisBound{A}}} \IdrisData{**} \IdrisData{\IdrisData{(\IdrisData{\IdrisData{Refl\IdrisData{,}}}}} \IdrisBound{\IdrisBound{auth\IdrisData{)\IdrisData{)}}}} (\IdrisFunction{\IdrisFunction{name}} \IdrisBound{\IdrisBound{a}})
     \IdrisKeyword{case} \IdrisFunction{\IdrisFunction{\IdrisType{\IdrisBound{\IdrisBound{decEq \IdrisBound{\IdrisBound{bGender}} \IdrisBound{\IdrisBound{aGender}}}}}}} \IdrisKeyword{of}
       \IdrisData{\IdrisData{Yes}} _ => \IdrisFunction{\IdrisFunction{pure}} \IdrisData{\IdrisData{Nothing}}
       \IdrisData{\IdrisData{No}} _  => \IdrisFunction{\IdrisFunction{pure}} (\IdrisData{\IdrisData{Just}} (\IdrisData{\IdrisData{MkProfileInfo}} \IdrisFunction{\IdrisFunction{\IdrisBound{\IdrisBound{aName}}}} \IdrisFunction{\IdrisFunction{\IdrisData{\IdrisBound{aGender}}}} \IdrisFunction{\IdrisFunction{\IdrisData{""}}} \IdrisFunction{\IdrisFunction{\IdrisData{""}}} \IdrisFunction{\IdrisFunction{\IdrisData{""}}}))
\end{Verbatim}
  \caption{A discovery agent that matches with all profiles of the opposite gender and only releases the name and gender.}
  \label{fig:sample-discoverer}
\end{figure}

\section{Soundness}\label{sec:soundness}
Recent works \cite{vassena16,vassena18} present a mechanically-verified model of \MAC and show pro\-gress-insensitive noninterference (PINI) for a sequential calculus.
We use this work as a starting point and discuss necessary modification in the following.
Notice that this work does not consider any declassification mechanisms and neither do we; we leave this as future work.

The proof relies on the \emph{two-steps erasure} technique, an extension of the \emph{term erasure} \cite{li10} technique that ensures that the same public output is produced if secrets are erased before or after program execution.
The technique relies on a type-driven erasure function $\varepsilon_{\ell_{A}}$ on terms and configurations where $\ell_{A}$ denotes the attacker security level.
A configuration consists of an $\ell$-indexed compartmentalized store $\Sigma$ and a term $t$.
A configuration $\langle \Sigma, t \rangle$ is erased by erasing $t$ and by erasing $\Sigma$ pointwise, i.e.
$\varepsilon_{\ell_{A}}(\Sigma) = \lambda\ell.
\varepsilon_{\ell_{A}}(\Sigma(\ell))$.
On terms, the function essentially rewrites data and computations above $\ell_{A}$ to a special $\bullet$ value.
The full definition of the erasure function is available in Appendix~\ref{sec:erasure}.
From this definition, the definition of low-equivalence of configurations follows.
\begin{definition}
Let $c_{1}$ and $c_{2}$ be configurations. $c_{1}$ and $c_{2}$ are said to be $\ell_{A}$-equivalent, written $c_{1} \approx_{\ell_{A}} c_{2} $, if and only if $\varepsilon_{\ell_{A}}(c_{1}) \equiv \varepsilon_{\ell_{A}}(c_{2})$.
\end{definition}
After defining the erasure function, the noninterference theorem follows from showing a \emph{single-step simulation} relationship between the erasure function and a small-step reduction relation: erasing sensitive data from a configuration and then taking a step is the same as first taking a step and then erasing sensitive data.
This is the content of the following proposition.
\begin{proposition}\label{prop:single-step}
 If $c_{1} \approx_{\ell_{A}} c_{2}$, $c_{1} \rightarrow c_{1}'$, and $c_{2} \rightarrow c_{2}'$ then $c_{1}' \approx_{\ell_{A}} c_{2}'$.
\end{proposition}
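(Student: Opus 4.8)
The plan is to obtain the proposition from the two standard ingredients of the two-steps erasure method: a \emph{single-step simulation lemma} asserting that erasure commutes with reduction, and determinism of the reduction relation on erased configurations. Both are adapted from the mechanised \MAC development of \citet{vassena16,vassena18}, so the work specific to \DepSec is mostly to re-check the cases in which \DepSec's API differs, most notably the relaxed lattice constraints on \slabel and \unlabel. Concretely, I would first pin down the extended calculus $\TTsecbullet$ that adds to $\TTsec$ the distinguished term $\hole$ together with the hole-decorated constructs $\plugHole{\cdot}$, $\newRefHole{\cdot}$, and $\writeRefHole{\cdot}{\cdot}$ and their reduction rules; the type-driven erasure function $\varepsilon_{\ell_{A}}$ of Appendix~\ref{sec:erasure} then maps $\TTsec$ configurations to $\TTsecbullet$ configurations. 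Two easy auxiliary facts are needed along the way: subject reduction for $\TTsec$ (so that the simulation lemma keeps applying all along a reduction), and that $\varepsilon_{\ell_{A}}$ preserves typing into $\TTsecbullet$.

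The core step is the simulation lemma: for every well-typed configuration $c$, if $c \evalsto c'$ then $\varepsilon_{\ell_{A}}(c) \evalsto \varepsilon_{\ell_{A}}(c')$, where $\TTsecbullet$ is arranged (as in the cited work) so that a source step is mirrored by exactly one step, $\hole$-reductions included. I would prove this by induction on the derivation of $c \evalsto c'$ with a case analysis on the last rule. The purely functional rules ($\beta$, conditionals, and the congruence/evaluation-context rules) are immediate, since $\varepsilon_{\ell_{A}}$ is a homomorphism on term and context formation and the very same rule fires on $\varepsilon_{\ell_{A}}(c)$. The rules mentioning the lattice — the monadic bind of $\texttt{DIO}$, \slabel, \unlabel, \plug, and the reference primitives \texttt{new}, \texttt{read}, \texttt{write} — are handled by splitting on whether the security level annotating the construct flows to $\ell_{A}$: if it does, the affected data, computation, and store compartments are preserved by $\varepsilon_{\ell_{A}}$ and the step commutes directly; if it does not, $\varepsilon_{\ell_{A}}$ has already rewritten the relevant subterm to $\hole$ or to one of $\plugHole{\cdot}$, $\newRefHole{\cdot}$, $\writeRefHole{\cdot}{\cdot}$, and the corresponding $\TTsecbullet$ rule produces the matching step. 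Because $\varepsilon_{\ell_{A}}$ acts pointwise on the store $\Sigma$, the store component of the invariant follows once the term component is settled.

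With the simulation lemma in hand, determinism of $\evalsto$ on well-typed $\TTsecbullet$ configurations is routine (fixed evaluation order; deterministic allocation and dereferencing; the hole rules are deterministic by construction). The proposition is then a short diagram chase: $c_{1} \approx_{\ell_{A}} c_{2}$ unfolds to $\varepsilon_{\ell_{A}}(c_{1}) \equiv \varepsilon_{\ell_{A}}(c_{2})$; applying the simulation lemma to $c_{1} \evalsto c_{1}'$ and to $c_{2} \evalsto c_{2}'$ yields $\varepsilon_{\ell_{A}}(c_{1}) \evalsto \varepsilon_{\ell_{A}}(c_{1}')$ and $\varepsilon_{\ell_{A}}(c_{2}) \evalsto \varepsilon_{\ell_{A}}(c_{2}')$; since the two left-hand sides are syntactically identical and $\evalsto$ is deterministic, $\varepsilon_{\ell_{A}}(c_{1}') \equiv \varepsilon_{\ell_{A}}(c_{2}')$, i.e.\ $c_{1}' \approx_{\ell_{A}} c_{2}'$.

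I expect the simulation lemma to be the main obstacle, and within it the cases where the contracted redex lies underneath a level $\ell \not\sqsubseteq \ell_{A}$ — in particular \plug of a computation whose level is above $\ell_{A}$ run inside a context at or below $\ell_{A}$, and allocations and writes at levels above $\ell_{A}$ — since these are exactly what the hole constructs exist to track, so most of the effort is checking that $\varepsilon_{\ell_{A}}$ and the $\TTsecbullet$ rules are mutually consistent for each of them. The second point of care, and where our proof departs from \citet{vassena18}, is \DepSec's weaker typing of \slabel and \unlabel (following \citet{rajani18}): I would revisit those cases to confirm that a \slabel at a level $\ell \not\sqsubseteq \ell_{A}$ is still erased to a hole-carrying $\texttt{Labeled}$ value, and that an \unlabel into a context at level $\ell' \sqsupseteq \ell$ necessarily sits inside an already-erased computation whenever $\ell \not\sqsubseteq \ell_{A}$ (because then $\ell' \not\sqsubseteq \ell_{A}$ as well), so that the relaxed rules never disturb the simulation — this is the formal counterpart of the informal remark in Section~\ref{sec:depsec} that a value labeled at $\ell$ inside a computation at level $\ell' \sqsupseteq \ell$ cannot escape that computation.
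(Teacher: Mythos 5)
Your proposal follows essentially the same route as the paper: a single-step simulation lemma showing $\varepsilon_{\ell_{A}}$ distributes over $\evalsto$ (the paper's Proposition~\ref{prop:single-dist-erasure}, proved by induction with exactly the case split on whether the relevant label flows to $\ell_{A}$), combined with single-step determinacy (Proposition~\ref{prop:single-det}) and the concluding diagram chase. The only cosmetic difference is that the paper's erasure function needs a hole-decorated construct only for \plug (the one case where the outer level is visible but the inner is not), erasing high-level \texttt{new} and \texttt{write} computations wholesale to $\hole$ rather than introducing $\newRefHole{\cdot}$ or $\writeRefHole{\cdot}{\cdot}$.
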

The main theorem follows by repeated applications of Proposition~\ref{prop:single-step}.
\begin{theorem}[PINI]
 If $c_{1} \approx_{\ell_{A}} c_{2}$, $c_{1} \Downarrow c_{1}'$, and $c_{2} \Downarrow c_{2}'$ then $c_{1}' \approx_{\ell_{A}} c_{2}'$.
\end{theorem}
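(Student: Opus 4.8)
The plan is to derive the theorem by iterating Proposition~\ref{prop:single-step} along the two big-step reductions. First I would unfold the big-step relation: $c \Downarrow c'$ abbreviates a finite small-step sequence $c = d_{0} \longrightarrow d_{1} \longrightarrow \cdots \longrightarrow d_{n} = c'$ whose last configuration $c'$ is terminal (there is no $d$ with $c' \longrightarrow d$), and $\longrightarrow$ is deterministic. So the hypotheses yield sequences $c_{1} = a_{0} \longrightarrow \cdots \longrightarrow a_{p} = c_{1}'$ and $c_{2} = b_{0} \longrightarrow \cdots \longrightarrow b_{q} = c_{2}'$ with $a_{0} \approx_{\ell_{A}} b_{0}$; assume without loss of generality $p \le q$.

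Next I would prove, by induction on $i$, that $a_{i} \approx_{\ell_{A}} b_{i}$ for all $i \le p$. The base case is the hypothesis, and the inductive step is exactly Proposition~\ref{prop:single-step} applied to $a_{i} \approx_{\ell_{A}} b_{i}$, $a_{i} \longrightarrow a_{i+1}$, and $b_{i} \longrightarrow b_{i+1}$. In particular $c_{1}' = a_{p} \approx_{\ell_{A}} b_{p}$, which finishes the case $p = q$. When $p < q$, the configuration $a_{p}$ is terminal while $b_{p} \longrightarrow b_{p+1} \longrightarrow \cdots \longrightarrow b_{q} = c_{2}'$, and it remains to see that these trailing steps are invisible to an $\ell_{A}$-observer. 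For that I would establish the auxiliary fact that if $d \approx_{\ell_{A}} e$ with $d$ terminal and $e \longrightarrow e'$, then $d \approx_{\ell_{A}} e'$: the erasure of a terminal configuration is again terminal (values and $\bullet$ erase to terminal terms and $\varepsilon_{\ell_{A}}$ acts pointwise on the store), so $\varepsilon_{\ell_{A}}(e) \equiv \varepsilon_{\ell_{A}}(d)$ is terminal; by the commutation property underlying Proposition~\ref{prop:single-step} --- a small step of a configuration is mirrored by zero or one steps of its erasure --- the step $e \longrightarrow e'$ erases to $\varepsilon_{\ell_{A}}(e) \longrightarrow \varepsilon_{\ell_{A}}(e')$ or to $\varepsilon_{\ell_{A}}(e) \equiv \varepsilon_{\ell_{A}}(e')$, and since the first is impossible we get $\varepsilon_{\ell_{A}}(e') \equiv \varepsilon_{\ell_{A}}(d)$. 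Iterating this along $b_{p}, b_{p+1}, \ldots, b_{q}$ yields $c_{1}' \approx_{\ell_{A}} c_{2}'$. (Equivalently, one can argue on a single configuration: from $\varepsilon_{\ell_{A}}(c_{1}) \equiv \varepsilon_{\ell_{A}}(c_{2})$ and the same commutation property, each reduction erases to $\varepsilon_{\ell_{A}}(c_{i}) \longrightarrow^{\ast} \varepsilon_{\ell_{A}}(c_{i}')$ ending in a terminal configuration, and determinism of $\longrightarrow$ forces $\varepsilon_{\ell_{A}}(c_{1}') \equiv \varepsilon_{\ell_{A}}(c_{2}')$, i.e.\ $c_{1}' \approx_{\ell_{A}} c_{2}'$.)

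The inductive lifting just described is the ``repeated application'' of Proposition~\ref{prop:single-step} and is routine; the real content is Proposition~\ref{prop:single-step} itself, i.e.\ re-running the \emph{two-steps erasure} development of Vassena et al.\ for the \DepSec calculus. Concretely one must extend $\varepsilon_{\ell_{A}}$ to the new syntax --- full dependent types, the \texttt{label} and \texttt{plug} rules that impose no lattice constraint on the computation context, and the additional labeled resources (mutable references and secure files) --- and verify that it is idempotent, commutes with substitution, is a homomorphism for evaluation contexts, and commutes with each reduction rule up to zero-or-one erased steps. I expect the main obstacle to be discharging this last obligation uniformly at the rules that cross a compartment boundary (\texttt{unlabel}, \texttt{plug}, \texttt{readFile} and \texttt{writeFile} on files above $\ell_{A}$, and reads and writes of references above $\ell_{A}$): there the erased step must become a no-op on $\bullet$, which forces the erasure of every \texttt{DIO} computation whose index is not below $\ell_{A}$ to be $\bullet$ and requires the store-indexing and typing invariants to be strong enough that this choice is well-defined and stable under reduction. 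With Proposition~\ref{prop:single-step} and its underlying single-step commutation lemma established for these extensions, the theorem follows as above.
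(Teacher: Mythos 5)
Your top-level strategy is the same as the paper's: the paper proves the theorem in one line, by ``repeated application'' of the single-step preservation result (Proposition~\ref{prop:single-noninterference}, i.e.\ Proposition~\ref{prop:single-step} of the main text), and your induction on the position $i$ in the two traces is exactly that iteration spelled out. You are in fact more explicit than the paper on the one point it leaves implicit, namely what happens when the two small-step sequences have different lengths.

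However, the way you discharge that trailing case conflicts with the machinery the paper actually builds. You appeal to a simulation in which a small step of a configuration is mirrored by \emph{zero or one} steps of its erasure, and to the claim that the erasure of a terminal configuration is again terminal. The two-steps erasure technique is designed precisely to avoid the zero-or-one formulation: Proposition~\ref{prop:single-dist-erasure} gives an exact one-step simulation, $c \evalsto c'$ implies $\eps{c} \evalsto \eps{c'}$, and this lock-step property is what makes the determinacy argument inside Proposition~\ref{prop:single-noninterference} work. It is achieved by adding reduction rules to the erased calculus \TTsecbullet, in particular $\bullet \leadsto \bullet$ and $\plugHole{t} \leadsto \pure{(\LabeledValue{\bullet})}$; as a consequence $\bullet$ is \emph{not} terminal, so the justification of your auxiliary lemma (``values and $\bullet$ erase to terminal terms'') fails exactly for configurations whose erasure is $\bullet$. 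The repair is a case split on whether the top-level computation label $\ell$ satisfies $\ell \flowsto \ell_{A}$. If it does not, both configurations erase to $\conf{\eps{\Sigma}}{\bullet}$ throughout, and the trailing steps are harmless because the erased stores stay fixed (Lemma~\ref{lem:single-store-eq}). If it does, the erasure of the terminal $\DIOValue{v}$ is genuinely irreducible (there is no congruence rule under $\DIOValue{\cdot}$), and then the exact one-step simulation together with determinacy of the erased calculus forces the two erased traces to have the same length, so the trailing case never arises. With that correction your argument goes through and coincides with the paper's; your final paragraph about re-establishing the single-step proposition for the extended calculus is accurate in spirit, modulo the same zero-or-one-step mischaracterization.
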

Both the statement and the proof of noninterference for \DepSec are mostly similar to the ones for \MAC and available in Appendix~\ref{sec:results}.
Nevertheless, one has to be aware of a few subtleties.

First, one has to realize that even though dependent types in a language like \Idris may depend on data, the data itself is not a part of a value of a dependent type.
Recall the type \Verb|\IdrisType{Vect} \IdrisBound{n} \IdrisType{Nat}| of vectors of length \Verb|\IdrisBound{n}| with components of type \Verb|\IdrisType{Nat}| and consider the following program.
\begin{Verbatim}
\IdrisFunction{length} : \IdrisType{Vect} \IdrisBound{n} \IdrisBound{a} -> \IdrisType{Nat}
\IdrisFunction{length} \{n = \IdrisBound{n}\} \IdrisBound{xs} = \IdrisBound{n}
\end{Verbatim}
This example may lead one to believe that it is possible to extract data from a dependent type.
This is \emph{not} the case.
Both \Verb|\IdrisBound{n}| and \Verb|\IdrisBound{a}| are implicit arguments to the \Verb|\IdrisFunction{length}| function that the compiler is able to infer.
The actual type is
\begin{Verbatim}
\IdrisFunction{length} : \{\IdrisBound{n} : \IdrisType{Nat}\} -> \{\IdrisBound{a} : \IdrisType{Type}\} -> \IdrisType{Vect} \IdrisBound{n} \IdrisBound{a} -> \IdrisType{Nat}
\end{Verbatim}
As a high-level dependently typed functional programming language, \Idris is elaborated to a low-level core language based on dependent type theory \cite{brady13}.
In the elaboration process, such implicit arguments are made explicit when functions are defined and inferred when functions are invoked.
This means that in the underlying core language, only explicit arguments are given.
Our modeling given in Appendix~\ref{sec:syntax} reflects this fact soundly.

Second, to model the extended expressiveness of \DepSec, we extend both the semantics and the type system with compile-time pure-term reduction and higher-order dependent types.
These definitions are standard (defined for \Idris by \citet{brady13}) and available in Appendix~\ref{sec:semantics} and \ref{sec:typing}.
Moreover, as types now become first-class terms, the definition of \erasure has to be extended to cover the new kinds of terms.
As before, primitive types are unaffected by the erasure function, but dependent and indexed types, such as the type \Verb|\IdrisType{DIO}|, have to be erased homomorphically, e.g., $\varepsilon_{\ell_{A}}$ (\DIO{$\ell$}{$\tau$} : \Verb|\IdrisType{Type}|) $\triangleq$ \DIO{$\varepsilon_{\ell_{A}}(\ell)$}{$\varepsilon_{\ell_{A}}(\tau)}$.
The intuition of why this is sensible comes from the observation that indexed dependent types considered as terms may contain values that will have to be erased.
This is purely a technicality of the proof.
If defined otherwise, the erasure function would not commute with capture-avoiding substitution on terms, $\varepsilon_{\ell_{A}}(t[v/x]) = \varepsilon_{\ell_{A}}(t)[\varepsilon_{\ell_{A}}(v)/x$], which is vital for the remaining proof.

\section{Related work}\label{sec:related-work}
\paragraph{Security libraries}
The pioneering and formative work by \citet{li06} shows how \emph{arrows} \cite{hughes00}, a generalization of monads, can provide information-flow control without runtime checks as a library in Haskell.
\citet{tsai07} further extend this work to handle side-effects, concurrency, and heterogeneous labels.
\citet{russo09} eliminate the need for arrows and implement the security library \textbf{SecLib} in Haskell based solely on monads.
Rather than labeled values, this work introduces a monad which statically label side-effect free values.
Furthermore, it presents combinators to dynamically specify and enforce declassification policies that bear a resemblance to the policies that \DepSec are able to enforce statically.

The security library \textbf{LIO} \cite{stefan11,stefan12} dynamically enforces information-flow control in both sequential and concurrent settings.
\citet{stefan17} extend the security guarantees of this work to also cover exceptions.
Similar to this work, \citet{stefan11} present a simple API for implementing secure conference reviewing systems in \textbf{LIO} with support for data-dependent security policies.

Inspired by the design of \textbf{SecLib} and \textbf{LIO}, \citet{russo15} introduces the security library \MAC.
The library statically enforces information-flow control in the presence of advanced features like exceptions, concurrency, and mutable data structures by exploiting Haskell's type system to impose flow constraints.
\citet{vassena16,vassena18} show progress-insensitive noninterference for \MAC in a sequential setting and progress-sensitive noninterference in a concurrent setting, both using the two-steps erasure technique.

The flow constraints enforcing confidentiality of read and write operations in \DepSec are identical to those of \MAC.
This means that the examples from \MAC that do not involve concurrency can be ported directly to \DepSec.
To the best of our knowledge, data-dependent security policies like the one presented in Section \ref{sec:conference-system} cannot be expressed and enforced in \MAC, unlike \textbf{LIO} that allows such policies to be enforced dynamically.
\DepSec allows for such security policies to be enforced statically.
Moreover, \citet{russo15} does not consider declassification.
To address the static limitations of \MAC, \textbf{HLIO} \cite{buiras15} takes a hybrid approach by exploiting advanced features in Haskell's type-system like singleton types and constraint polymorphism.
\citet{buiras15} are able to statically enforce information-flow control while allowing selected security checks to be deferred until run-time.

\paragraph{Dependent types for security}
Several works have considered the use of dependent types to capture the nature of data-dependent security policies.
\citet{zheng04, zheng07} proposed the first dependent security type system for dealing with dynamic changes to runtime security labels in the context of Jif~\cite{jif}, a full-fledged IFC-aware compiler for Java programs, where similar to our work, operations on labels are modeled at the level of types.
\citet{zhang15} use dependent types in a similar fashion for the design of a hardware description language for timing-sensitive information-flow security.

A number of functional languages have been developed with dependent type systems and used to encode value-dependent information flow properties, e.g.
Fine \cite{swamy10}.
These approaches require the adoption of entirely new languages and compilers where \DepSec is embedded in an already existing language.
\citet{morgenstern10} encode an authorization and IFC-aware programming language in Agda.
However, their encoding does not consider side-effects.
\citet{nanevski11} use dependent types to verify information flow and access control policies in an interactive manner.

\citet{lourenco15} introduce the notion of \emph{dependent information-flow types} and propose a \emph{fine-grained} type system; every value and function have an associated security level.
Their approach is different to the \emph{coarse-grained} approach taken in our work where only some computations and values have associated security labels.
\citet{rajani18} show that both approaches are equally expressive for static IFC techniques and \citet{vassena19} show the same for dynamic IFC techniques.

\paragraph{Principles for Information Flow}
\citet{bastys2018prudent} put forward a set of informal principles for information flow security definitions and enforcement mechanisms: \emph{attacker-driven security, trust-aware enforcement, separation of policy annotations and code, language-independence, justified abstraction, and permissiveness}.

\DepSec follows the principle of trust-aware enforcement, as we make clear the boundary between the trusted and untrusted components in the program.
Additionally, the design of our declassification mechanism follows the principle of separation of policy annotations
and code.
The use of dependent types increases the permissiveness of our enforcement as we discuss throughout the paper.
While our approach is not fully language-independent, we posit that the approach may be ported to other programming languages
with general-purpose dependent types.

\paragraph{Declassification enforcement}
Our hatch builders are reminiscent of downgrading policies of~\citet{DBLP:conf/popl/LiZ05}.
For example, similar to them, \DepSec's declassification policies naturally express the idea of \emph{delimited release}~\cite{sabelfeld03}
that provides explicit characterization of the declassifying computation. Here,
\DepSec's policies can express a broad range of policies that can be expressed through predicates, an improvement over
simple expression-based enforcement mechanisms for delimited release~\cite{sabelfeld03,DBLP:conf/pldi/AskarovS07,DBLP:conf/csfw/AskarovS09}.

An interesting point in the design of declassification policies is \emph{robust declassification}~\cite{zdancewic01} that demands that untrusted components must not affect information release.
\emph{Qualified robustness} \cite{myers04,DBLP:journals/corr/abs-1107-5594} generalizes this notion by giving untrusted code a limited ability to affect information release through the introduction of an explicit endorsement operation.
Our approach is orthogonal to both notions of robustness as the intent is to let the untrusted components declassify information but only under very controlled circumstances while adhering to the security policy.

\section{Conclusion and future work}
In this paper, we have presented \DepSec~-- a library for statically enforced information-flow control in \Idris.
Through several case studies, we have showcased how the \DepSec primitives increase the expressiveness of state-of-the-art information-flow control libraries and how \DepSec matches the expressiveness of a special-purpose dependent information-flow type system on a key example.
Moreover, the library allows programmers to implement policy-parameterized functions that abstract over the security policy while retaining precise security levels.

By taking ideas from the literature and by exploiting dependent types, we have shown powerful means of specifying statically enforced declassification policies related to \emph{what}, \emph{who}, and \emph{when} information is released.
Specifically, we have introduced the notion of predicate hatch builders and token hatch builders that rely on the fulfillment of predicates and possession of tokens for declassification.
We have also shown how the \ST monad \cite{brady16} can be used to limit hatch usage statically.

Finally, we have discussed the necessary means to show progress-insensitive noninterference in a sequential setting for a dependently typed information-flow control library like \DepSec.

\paragraph{Future work}
There are several avenues for further work.
Integrity is vital in many security policies and is not considered in \MAC nor \DepSec.
It will be interesting to take integrity and the presence of concurrency into the dependently typed setting and consider internal and termination covert channels as well.
It also remains to prove our declassification mechanisms sound.
Here, attacker-centric epistemic security conditions~\cite{DBLP:conf/sp/AskarovS07,DBLP:journals/tissec/HalpernO08}
that intuitively express many declassification policies may be a good starting point.

\paragraph{Acknowledgements}
Thanks are due to Mathias Vorreiter Pedersen, Bas Spitters, Alejandro Russo, and Marco Vassena for their valuable insights and the anonymous reviewers for their comments on this paper.
This work is partially supported by DFF project~6108-00363 from The Danish Council for Independent Research for the Natural Sciences (FNU), Aarhus University Research Foundation, and the Concordium Blockchain Research Center, Aarhus University, Denmark.

\bibliographystyle{splncsnat}
\bibliography{bibliography}

\iftoggle{Full}{
  \newpage
\appendix

\section{The Calculus} \label{sec:calculus}
This section formalizes \DepSec as \TTsec, a dependently typed call-by-value $\lambda$-calculus extended with conditional expressions, references, unit, integer, and boolean values, as well as higher order dependent types and security primitives.

\subsection{Syntax}\label{sec:syntax}
Figure~\ref{fig:TT} shows the formal syntax of the pure calculus underlying \TTsec where meta variables $t$, $c$, $b$, and $\tau$ denote terms, constants, binders, and types, respectively.
The syntax closely resembles the syntax of \TT, the underlying calculus of \Idris, but with the addition of a conditional construct and base types \Int, \Bool, and \Unit.
\begin{figure}[htb!]
  \centering
  \begin{minipage}[t]{1\textwidth}
    \begin{tabular}{>{$}l<{$}>{$}r<{$}>{$}l<{$}>(l<)}
      \text{Terms, } t &\Coloneqq &c & constant \\
                       &| & x & variable \\
                       &| & b\texttt{.}t & binding \\
                       &| & t\ t & application \\
                       &| & \ifThenElse{t}{t}{t} & conditional \\
                       &| & \tau & type constructor \\
      \text{Constants, } c &\Coloneqq &i & integer literal \\
                       &| & bool & boolean literal \\
                       &| & \Unit & unit literal \\
      \text{Binders, } b &\Coloneqq & \lambda x : t &  abstraction \\
                         &|      & \forall x : t & function space \\
      \text{Types, } \tau &\Coloneqq &\Type &type of types\\
                         &| & \Int  & integer type \\
                         &| & \Bool & boolean type\\
                         &| & \Unit & unit type
    \end{tabular}
  \end{minipage}
  \caption{Syntax of the core calculus underlying \TTsec.}
  \label{fig:TT}
\end{figure}
We extend this standard calculus with the security primitives of \DepSec.
Figure~\ref{fig:TTsec} presents the extensions to Figure~\ref{fig:TT} that forms the formal syntax of \TTsec.
We introduce the security monad $\DIOValue{t}$ as well as type $\aDIO{\ell}{t}$ and monadic operators \pure{t}, \bind{t}{t}, and \aplug{t}.
We introduce a labeled value $\LabeledValue{t}$, a type $\LabeledType{\ell}{t}$, and labeling and unlabeling functions $\xlabel{t}$ and $\aunlabel{t}$.
As an example of a labeled resource we introduce references as values $\RefValue{n}$ as well as means for allocating, reading, and writing to references.
\begin{figure}[htb!]
  \centering
  \begin{minipage}[t]{0.65\textwidth}
    \begin{tabular}{>{$}l<{$}>{$}r<{$}>{$}l<{$}>(l<)}
      \text{Terms, } t, \ell &\Coloneqq &\ldots \\
                             &| & \pure{t} & return operator \\
                             &| & \DIOValue{t} & computation \\
                             &| & \LabeledValue{t} & labeled value\\
                             &| & \RefValue{\ell}{n} & reference \\
                             &| & \bind{t}{t} & bind\\
                             &| & \ylabel{t} & labeling\\
                             &| & \aunlabel{t} & unlabeling\\
                             &| & \aplug{t} & DIO plugging\\
                             &| & \newRef{\ell}{t}& new reference\\
                             &| & \readRef{t} & read reference \\
                             &| & \writeRef{t}{t} & write reference\\
    \end{tabular}
  \end{minipage}
  \begin{minipage}[t]{0.34\textwidth}
    \begin{tabular}{>{$}l<{$}>{$}r<{$}>{$}l<{$}>(l<)}
      \text{Types, } \tau &\Coloneqq &\ldots \\
                          &| & \aDIO{\ell}{t} \\
                          &| & \LabeledType{\ell}{t}  \\
                          &| & \RefType{\ell}{t} \\
    \end{tabular}
  \end{minipage}
  \caption{Syntax of \TTsec.}
  \label{fig:TTsec}
\end{figure}

\subsection{Operational semantics}\label{sec:semantics}
\begin{definition}[Small-step pure semantics]
  Let $\Term$ be the set of terms in \TTsec\ and let $t_{1}, t_{2} \in \Term$.
  The relation
  $$t_{1} \leadsto t_{2} \subseteq \Term \times \Term$$
  denotes the small-step operational semantics of the \TTsec\ calculus. The relation $t_{1} \leadsto t_{2}$ denotes that $t_{1}$ reduces to $t_{2}$ in one reduction step according to the inference rules in Figure~\ref{fig:TTsec-semantics}.
\end{definition}
We explicitly distinguish pure-term evaluation from top-level monadic-term evaluation.
The extended semantics is represented as the relation $c_{1} \evalsto c_{2}$ introduced in Definition \ref{def:monad-semantic} which extends the pure semantics $\leadsto$ via \textsc{Lift}.
\begin{definition}[Store]
  Let $\Label$ be a set of security labels.
  The function
  $$
  \Sigma : \Label  \rightarrow \List\ \Term
  $$
  denotes a \emph{store} compartmentalized into isolated labeled segments, one for each label.
  We write $\Sigma(\ell)[n]$ to retrieve the $n$th cell in the $\ell$-memory and $\Sigma(\ell)[n] \Coloneqq t$ for the store obtained by performing the update $\Sigma(\ell)[n \mapsto t]$.
\end{definition}
\begin{definition}[Monadic-term semantics]\label{def:monad-semantic}
  Let $\conf{\Sigma}{t}$ be a \emph{configuration} consisting of a store $\Sigma$ and a term $t \in \Term$.
  Let $\Conf$ be the set of all such configurations.
  The relation
  $$ c_{1} \evalsto c_{2}  \subseteq \Conf \times \Conf
  $$
  denotes the monadic-term evaluation according to the inference rules of Figure~\ref{fig:TTsec-semantics}.
  $\conf{\Sigma}{t} \evalstostar \conf{\Sigma'}{t'}$ denotes the reflexive transitive closure of \evalsto, and we write $\conf{\Sigma}{t} \bigstepto \conf{\Sigma'}{v}$ if and only if $v$ is a value and $\conf{\Sigma}{t} \evalstostar \conf{\Sigma'}{v}$.
\end{definition}
Note that we consider all non-reducible terms to be values and that constructors \LabeledValue{\hspace{-0.4em}}, \DIOValue{\hspace{-0.4em}}, and \RefValue{\hspace{-0.2em}} are not available to the user but only introduced in the semantics to model the run-time value produced by e.g.
\xlabel{\hspace{-0.4em}} and \pure{\hspace{-0.2em}}.

\begin{center}
  \begin{mathparpagebreakable}
    \textbf{Core calculus}
    \\
    \infer[App$_1$]
    { t_1 \leadsto t_1' }
    { t_1 \ t_2 \leadsto t_1' \ t_2 }
    \and
    \infer[App$_2$]
    { t \leadsto t' }
    { v \ t \leadsto v \ t' }
    \and
    \infer[Beta]
    { \phantom{ } }
    { (\lambda x . t) \ v \leadsto t[v/ x] }
    \and
    \infer[If$_1$]
    { t_1 \leadsto t_1' }
    { \ifThenElse{t_{1}}{t_2}{t_3} \leadsto \ifThenElse{t_{1}'}{t_2}{t_3} }
    \and
    \infer[If$_2$]
    { \phantom{ } }
    { \ifThenElse{\True}{t_2}{t_3} \leadsto t_2 }
    \and
    \infer[If$_3$]
    { \phantom{ } }
    { \ifThenElse{\False}{t_2}{t_3} \leadsto t_3 }
    \\
    \textbf{\DepSec (pure)}
    \\
    \infer[Bind$_{1}$]
    { \phantom{ } }
    { \bind{\DIOValue{t_1}}{t_2} \leadsto \app{t_2}{t_1} }
    \and
    \infer[Pure$_{1}$]
    { t \leadsto t'  }
    { \pure t  \leadsto \pure{t'} }
    \and
    \infer[Pure$_{2}$]
    { \phantom{ } }
    { \pure v  \leadsto \DIOValue{v} }
    \and
    \infer[Label$_{1}$]
    { t \leadsto t' }
    { \ylabel{t} \leadsto  \ylabel{t'} }
    \and
    \infer[Label$_{2}$]
    { \phantom{ } }
    { \ylabel{v} \leadsto {(\LabeledValue{v})} }
    \and
    \infer[Unlabel$_{1}$]
    { t \leadsto t' }
    { \aunlabel{t} \leadsto \aunlabel{t'} }
    \and
    \infer[Unlabel$_{2}$]
    { \phantom{ } }
    { \aunlabel{(\LabeledValue{v})} \leadsto \pure{v} }
    \and
    \infer[New$_{1}$]
    { t \leadsto t' }
    { \newRef{\ell}{t} \leadsto \newRef{\ell}{t'} }
    \and
    \infer[Write$_{1}$]
    { t_1 \leadsto t_1' }
    { \writeRef{t_1}{t_2} \leadsto \writeRef{t_1'}{t_2} }
    \and
    \infer[Write$_{2}$]
    { t_2 \leadsto t_2' }
    { \writeRef{v}{t_2} \leadsto \writeRef{v}{t_2'} }
    \and
    \infer[Read$_{1}$]
    { t \leadsto t' }
    { \readRef{t} \leadsto \readRef{t'} }
    \and
    \infer[DIO$_{1}$]
    { t \leadsto t' }
    { \aDIO{t}{\tau} \leadsto \aDIO{t'}{\tau}}
    \and
    \infer[DIO$_{2}$]
    { \tau \leadsto \tau' }
    { \aDIO{v}{\tau} \leadsto \aDIO{v}{\tau'}}
    \and
    \infer[Labeled$_{1}$]
    { t \leadsto t' }
    { \LabeledType{t}{\tau} \leadsto \LabeledType{t'}{\tau}}
    \and
    \infer[Labeled$_{2}$]
    { \tau \leadsto \tau' }
    { \LabeledType{v}{\tau} \leadsto \LabeledType{v}{\tau'}}
    \and
    \infer[Ref$_{1}$]
    { t \leadsto t' }
    { \RefType{t}{\tau} \leadsto \RefType{t'}{\tau}}
    \and
    \infer[Ref$_{2}$]
    { \tau \leadsto \tau' }
    { \RefType{v}{\tau} \leadsto \RefType{v}{\tau'}}
    \and
    \infer[Forall$_{1}$]
    { \tau \leadsto \tau' }
    { \forall x : \tau . t \leadsto \forall x : \tau' . t}
    \and
    \infer[Forall$_{2}$]
    { t \leadsto t' }
    { \forall x : \tau . t \leadsto \forall x : \tau . t' }
    \and
    \\
    \textbf{\DepSec (monadic)}
    \\
    \infer[Lift]
    { t \leadsto t' }
    { \conf{\Sigma}{t} \evalsto \conf{\Sigma}{t'} }
    \and
    \infer[Bind$_{2}$]
    { \conf{\Sigma}{t_1} \evalsto \conf{\Sigma'}{t_1'} }
    { \conf{\Sigma}{\bind{t_1}{t_2}} \evalsto \conf{\Sigma'}{\bind{t_1'}{t_2}} }
    \and
    \and
    \infer[Plug]
    { \conf{\Sigma}{t} \bigstepto \conf{\Sigma'}{\DIOValue{t'}} }
    { \conf{\Sigma}{\aplug{t}} \evalsto \conf{\Sigma'}{\pure{(\LabeledValue{t'})}} }
    \\
    \infer[New$_{2}$]
    { |\Sigma(\ell)| = n }
    { \conf{\Sigma}{\newRef{\ell}{(\LabeledValue{v})}} \evalsto \conf{\Sigma(\ell)[n] \Coloneqq v}{\pure{(\RefValue{\ell}{n})}} }
    \and
    \infer[Write$_{3}$]
    { \phantom{ } }
    { \conf{\Sigma}{\writeRef{(\RefValue{\ell}{n})}{(\LabeledValue{v})}} \evalsto \conf{\Sigma(\ell)[n] \Coloneqq v}{\pure{()}} }
    \and
    \infer[Read$_{2}$]
    { \phantom{ } }
    { \conf{\Sigma}{\readRef{(\RefValue{\ell}{n})}} \evalsto \conf{\Sigma}{\pure{\big(\LabeledValue{\Sigma(\ell)[n]}\big)}} }
  \end{mathparpagebreakable}
\captionof{figure}[short caption]{Operational semantics of \TTsec.}
\label{fig:TTsec-semantics}
\end{center}

\subsection{Typing rules}\label{sec:typing}
Similar to \TT, type checking and the dynamic semantics are defined mutually since evaluation relies on terms to be well-typed, and type checking relies on evaluation as equivalence of terms or types is determined by comparing their normal forms.
Compile-time evaluation of \TTsec is defined by the pure reductions rules in Figure~\ref{fig:TTsec-semantics} relative to a context $\Gamma$.
\emph{Conversion} ($\simeq$) is the smallest equivalence relation closed under reduction, that is, if $\Gamma \vdash x \simeq y$ then $x$ and $y$ reduce to the same normal form.

The type inference rules for \TTsec is presented in Figure~\ref{fig:typing}.
These rules use the \emph{cumulativity} ($\preceq$) relation defined in Figure~\ref{fig:cumulativity}.
In \TT, the type of types, \Type, is parameterized by a universe level (constructing an infinite hierarchy of universes) to prevent Girard's paradox.
As universe levels are transparent to the user, this is not relevant for our noninterference proof and we ignore this matter in the following.
As for \TT, we also conjecture that \TTsec respects usual properties such as type preservation and uniqueness of typing at compile-time.

\begin{center}
  \begin{mathparpagebreakable}
    \textbf{Core calculus}
    \\
    \infer[T-Type]
    { \phantom{ } }
    { \Gamma \vdash \Type : \Type }

    \infer[T-Const$_{1}$]
    { \phantom{ } }
    { \Gamma \vdash i : \Int }

    \infer[T-Const$_{2}$]
    { \phantom{ } }
    { \Gamma \vdash bool : \Bool }

    \infer[T-Const$_{3}$]
    { \phantom{ } }
    { \Gamma \vdash \Unit : \Unit }

    \infer[T-Const$_{4}$]
    { \phantom{ } }
    { \Gamma \vdash \Int : \Type }

    \infer[T-Const$_{5}$]
    { \phantom{ } }
    { \Gamma \vdash \Bool : \Type }

    \infer[T-Var]
    { (s : S) \in \Gamma }
    { \Gamma \vdash s : S }

    \infer[T-App]
    { \Gamma \vdash f : \forall x : S . T \\ \Gamma \vdash s : S }
    { \Gamma \vdash \app{f}{s} : T[s/x] }

    \infer[T-Lam]
    { \Gamma;x:S \vdash e : T \\ \Gamma \vdash \forall x : S . T : \Type}
    { \Gamma \vdash \lambda x : S . e : \forall x : S . T }

    \infer[T-Forall]
    { \Gamma; x : S \vdash T : \Type \\ \Gamma \vdash S : \Type}
    { \Gamma \vdash \forall x : S . T : \Type }

    \infer[T-IfThenElse]
    { \Gamma \vdash t_1 : \Bool \\ \Gamma; t_1 \equiv \True \vdash t_2 : S \\ \Gamma; t_1 \equiv \False  \vdash t_3 : S }
    { \Gamma \vdash \ifThenElse{t_{1}}{t_{2}}{t_{3}} : S}

    \infer[T-Conv]
    { \Gamma \vdash x : A \\ \Gamma \vdash A' : \Type \\ \Gamma \vdash A \preceq A' }
    { \Gamma \vdash x : A' }
    \\
    \textbf{\DepSec}
    \\
    \infer[T-DIO]
    { \Gamma \vdash \ell : \Label \\ \Gamma \vdash t : \Type }
    { \Gamma \vdash \aDIO{\ell}{t} : \Type }

    \infer[T-Labeled]
    { \Gamma \vdash \ell : \Label \\ \Gamma \vdash t : \Type }
    { \Gamma \vdash \LabeledType{\ell}{t} : \Type }

    \infer[T-Ref]
    { \Gamma \vdash \ell : \Label \\ \Gamma \vdash t : \Type }
    { \Gamma \vdash \RefType{\ell}{t} : \Type }

    \infer[T-Label]
    { \Gamma \vdash \ell :\Label \\ \Gamma \vdash s : S }
    { \Gamma \vdash \ylabel{s}: \LabeledType{\ell}{S}}

    \infer[T-Unlabel]
    { \ell_L \flowsto \ell_H \\ \Gamma \vdash s : \LabeledType{\ell_L}{S} }
    { \Gamma \vdash \aunlabel{s}: \aDIO{\ell_H}{S} }

    \infer[T-Bind]
    { \Gamma \vdash s : \aDIO{\ell}{S} \\ \Gamma \vdash t : S \rightarrow \aDIO{\ell}{T}}
    { \Gamma \vdash \bind{s}{t} : \aDIO{\ell}{T} }

    \infer[T-Pure]
    { \Gamma \vdash s : S \\ \Gamma \vdash \ell : \Label }
    { \Gamma \vdash \pure{s} : \aDIO{\ell}{S} }

    \infer[T-Plug]
    { \ell_L \flowsto \ell_H \\ \Gamma \vdash s : \aDIO{\ell_H}{S} }
    { \Gamma \vdash \aplug{s} : \aDIO{\ell_L}{(\LabeledType{\ell_H}{S})} }

    \infer[T-NewRef]
    { \ell_L \sqsubseteq \ell_M \sqsubseteq \ell_H \\ \Gamma \vdash s : \LabeledType{\ell_M}{S} }
    { \Gamma \vdash \newRef{\ell_H}{s} : \aDIO{\ell_L}{(\RefType{\ell_H}{S})}}

    \infer[T-WriteRef]
    { \ell_L \sqsubseteq \ell_M \sqsubseteq \ell_H \\ \Gamma \vdash s : \RefType{\ell_H}{S} \\ \Gamma \vdash t : \LabeledType{\ell_M}{S} }
    { \Gamma \vdash \writeRef{s}{t}: \aDIO{\ell_L}{\Unit}}

    \infer[T-ReadRef]
    { \ell_L \sqsubseteq \ell_H \\ \Gamma \vdash s : \RefType{\ell_H}{S}}
    { \Gamma \vdash \readRef{s}: \aDIO{\ell_L}{(\LabeledType{\ell_H}{S})}}
  \end{mathparpagebreakable}
  \captionof{figure}[short caption]{Typing rules for \TTsec.}
  \label{fig:typing}
\end{center}
\begin{figure}[H]
  \centering
  \begin{mathparpagebreakable}
    \infer[C-Conv]
    {\Gamma \vdash S \simeq T }
    {\Gamma \vdash S \preceq T }

    \infer[C-Forall]
    {\Gamma \vdash S_1 \simeq S_2 \\ \Gamma; x : S_1 \vdash T_1 \preceq T_2}
    {\Gamma \vdash \forall x:S_1.T_1 \preceq \forall x: S_2 . T_2 }
  \end{mathparpagebreakable}
  \caption{Cumulativity.}
  \label{fig:cumulativity}
\end{figure}

\subsection{Example: Concatenating strings}\label{sec:ttsec-concat}
This example illustrates the adequacy of the \TTsec calculus.
The concrete example immitates the \Verb|\IdrisFunction{readTwoFiles}| function presented in Section \ref{sec:generic-functions}.
It takes two labeled strings as input and returns the concatenated result of the content of these, labeled with the join of the original labels.
We assume having a well-typed string concatenation function $\concat$ and a well-defined \texttt{join} function for which the following rules hold:
  \begin{align*}
    &\infer{\ell : \Label \\ \ell' : \Label}{\ell \flowsto \mathtt{join}\ \ell\ \ell'}
    &\infer{\phantom{ }}{\mathtt{join}\ : \forall x,y : \Label . \Label}
  \end{align*}
The implementation of a concatenation function for labeled strings \TTsec is presented in Figure \ref{fig:ttsec-concat}.
\begin{figure}[htb!]
  \centering
\begin{lstlisting}[language = TTsec]
concat : forall$\ell$,$\ell$': Label.forallx:Labeled $\ell$ String.
                      forally:Labeled $\ell$' String.DIO (join $\ell$ $\ell$') String
concat = lambda$\ell$,$\ell$':Label.lambdax:Labeled $\ell$ String.lambday:Labeled $\ell$' String.
         unlabel x $\binds$ (lambdaux:DIO (join $\ell$ $\ell$') String.
         unlabel y $\binds$ (lambdauy:DIO (join $\ell$ $\ell$') String.
         pure (ux $\concat$ uy)))
\end{lstlisting}
  \caption{Concatenation of secure strings in \TTsec.}
  \label{fig:ttsec-concat}
\end{figure}
\\
\texttt{concat} is typed through multiple applications of \TLam, which reduces the problem of typing \texttt{concat} to showing
\begin{align*}
  &\Gamma; (x : \LabeledType{\ell}{\String}) ; (y : \LabeledType{\ell'}{\String})\\
  &\vdash \aunlabel{x} \dots \pure{(ux \concat uy)}: \aDIO{(\texttt{join}\ \ell\ \ell')}{\String}
\end{align*}
and
\begin{align*}
\Gamma \vdash &\forall x : \LabeledType{\ell}{\String}\ .\ \\&\forall y : \LabeledType{\ell'}{\String}\ .\  \aDIO{(\texttt{join}\ \ell\ \ell')}{\String} : \Type.
\end{align*}
The typing of the actual expression with the return type, $\aDIO{(\mathtt{join}\ \ell\ \ell')}{\String}$, continues by \TBind.
This judgment requires that $$\Gamma ;(\ell, \ell' : \Label);(x : \LabeledType{\ell}{\String}) \vdash \aunlabel{x} : \aDIO{(\mathtt{join}\ \ell\ \ell')}{\String}$$ and that the rest of the expression in fact has a function type which takes such an input.
As this goes by rules already presented we continue with the typing of $\aunlabel{x}$.
This follows by \TUnlabel which can be used by the assumption on \texttt{join} and by \TVar.

Showing that the proposed type is a type follows by \TForall, \TVar, \TApp, and the assumption on \texttt{join}.

\subsection{Erasure}\label{sec:erasure}
\begin{definition}[Erasure function on terms]\label{def:term-erasure}
  Let $\ell_{A}$ be the attacker's security level. The function
  $$
  \erasure : \Term \rightarrow \Term
  $$
  denotes the \emph{erasure function on terms} where values and primitive types like \True, \Int, etc. is unaffected but otherwise defined by:
  \begin{align*}
    \eps{\bullet} &\triangleq \bullet \\
    \eps{\lambda x . t} &\triangleq \lambda x . \eps{t} \\
    \eps{\app{t_{1}}{t_{2}} : \tau} &\triangleq
                                      \begin{cases}
                                        \bullet & \text{if } \tau = \aDIO{\ell}{\tau'} \land \ell \not\sqsubseteq \ell_{A} \\
                                        \app{\eps{t_{1}}}{\eps{t_{2}}} & \text{otherwise} \\
                                      \end{cases} \\
    \eps{\ifThenElse{t_{1}}{t_{2}}{t_{3}} : \tau} &\triangleq
                                                    \begin{cases}
                                                      \bullet & \text{if } \tau = \aDIO{\ell}{\tau'} \land \ell \not\sqsubseteq \ell_{A} \\
                                                      \ifThenElse{\eps{t_{1}}}{\eps{t_{2}}}{\eps{t_{3}}} & \text{otherwise}
                                                    \end{cases} \\
    \eps{\pure{t} : \aDIO{\ell}{\tau}} &\triangleq
                                        \begin{cases}
                                          \bullet & \text{if } \ell \not\sqsubseteq \ell_{A} \\
                                          \pure{\eps{t}} & \text{otherwise} \\
                                        \end{cases} \\
    \eps{\DIOValue{t} : \aDIO{\ell}{\tau}} &\triangleq
                                            \begin{cases}
                                              \bullet & \text{if } \ell \not\sqsubseteq \ell_{A} \\
                                              \DIOValue{\eps{t}} & \text{otherwise} \\
                                            \end{cases} \\
    \eps{\bind{t_{1}}{t_{2}} : \aDIO{\ell}{\tau}} &\triangleq
                                                   \begin{cases}
                                                     \bullet & \text{if } \ell \not\sqsubseteq \ell_{A} \\
                                                     \bind{\eps{t_{1} }}{\eps{t_{2}}} & \text{otherwise}
                                                   \end{cases} \\
    \eps{\LabeledValue{t} : \LabeledType{\ell}{\tau}} &\triangleq
                                                        \begin{cases}
                                                          \LabeledValue{\bullet} & \text{if } \ell \not\sqsubseteq \ell_{A} \\
                                                          \LabeledValue{\eps{t}} & \text{otherwise}
                                                        \end{cases} \\
    \eps{\ylabel{t} : \LabeledType{\ell}{\tau}} &\triangleq
                                                  \begin{cases}
                                                    \ylabel{\bullet} & \text{if } \ell \not\sqsubseteq \ell_{A} \\
                                                    \ylabel{\eps{t}} & \text{otherwise}
                                                  \end{cases} \\
    \eps{\aunlabel{t} : \aDIO{\ell}{\tau}} &\triangleq
                                           \begin{cases}
                                             \bullet & \text{if } \ell \not\sqsubseteq \ell_{A} \\
                                             \aunlabel{\eps{t}}  & \text{otherwise}
                                           \end{cases}\\
    \eps{\RefValue{n} : \RefType{\ell}{\tau}} &\triangleq
                                                \begin{cases}
                                                  \RefValue{\bullet} & \text{if } \ell \not\sqsubseteq \ell_{A} \\
                                                  \RefValue{n} & \text{otherwise}
                                                \end{cases} \\
    \eps{\newRef{\ell'}{t} : \aDIO{\ell}{(\RefType{\ell'}{\tau}})} &\triangleq
                                                               \begin{cases}
                                                                 \bullet & \text{if } \ell \not\sqsubseteq \ell_{A} \\
                                                                 \newRef{\ell'}{\eps{t}} & \text{otherwise}
                                                               \end{cases} \\
    \eps{\writeRef{t_{1}}{t_{2}} : \aDIO{\ell}{\tau}} &\triangleq
                                                       \begin{cases}
                                                         \bullet  & \text{if } \ell \not\sqsubseteq \ell_{A} \\
                                                         \writeRef{\eps{t_{1}}}{\eps{t_{2}}} & \text{otherwise}
                                                       \end{cases} \\
    \eps{\readRef{t} : \aDIO{\ell}{(\LabeledType{\ell'}{\tau})}} &\triangleq
                                                                  \begin{cases}
                                                                    \bullet &  \text{if } \ell \not\sqsubseteq \ell_{A} \\
                                                                    \readRef{\bullet} & \text{if } \ell' \not\sqsubseteq \ell_{A} \\
                                                                    \readRef{\eps{t}} & \text{otherwise}
                                                                  \end{cases} \\
    \eps{\aplug{t} : \aDIO{\ell}{(\LabeledType{\ell'}{\tau}})} &\triangleq
                                                               \begin{cases}
                                                                 \bullet & \text{if } \ell \not\sqsubseteq \ell_{A} \\
                                                                 \plugHole{\eps{t}} & \text{if } \ell' \not\sqsubseteq \ell_{A} \\
                                                                 \aplug{\eps{t}} & \text{otherwise}
                                                               \end{cases} \\
    \eps{\plugHole{t} : \aDIO{\ell}{(\LabeledType{\ell'}{\tau}})} &\triangleq
                                                                   \begin{cases}
                                                                     \bullet & \text{if } \ell \not\sqsubseteq \ell_{A} \\
                                                                     \plugHole{\eps{t}} & \text{otherwise}
                                                                   \end{cases} \\
    \eps{\aDIO{\ell}{\tau}} &\triangleq \aDIO{\eps{\ell}}{\eps{\tau}} \\
    \eps{\LabeledType{\ell}{\tau}} &\triangleq \LabeledType{\eps{\ell}}{\eps{\tau}} \\
    \eps{\RefType{\ell}{\tau}} &\triangleq \RefType{\eps{\ell}}{\eps{\tau}}\\
    \eps{\forall x : \tau . t } &\triangleq \forall x : \eps{\tau} . \eps{t}
  \end{align*}
\end{definition}
In most cases the definition of the erasure function is straightforward as we simply collapse sensitive information and computations to $\bullet$ if they are above the security level of the attacker and otherwise apply the function homomorphically.
In one particular case this idea fails, namely the erasure of the term $\aplug{t}$.

Consider $\aplug{t} : \aDIO{\ell}{(\LabeledType{\ell'}{\tau})}$ for some $\ell, \ell'$, and $\tau$.
If the adversary is not allowed to see $\ell$, i.e.
$\ell\not\sqsubseteq \ell_{A}$, the computation should not be visible to the adversary and therefore it should be completely collapsed into $\bullet$.
Unfortunately, this approach of rewriting entire computations fails if $\ell \sqsubseteq \ell_{A}$ and $\ell'\not\sqsubseteq \ell_{A}$ as it would not be possible to show that $\conf{\eps{\Sigma}}{\aplug{\bullet}}\evalsto \conf{\eps{\Sigma'}}{\pure{\LabeledValue{\bullet}}}$ as $\conf{\Sigma}{\bullet} \not \bigstepto \conf{\Sigma'}{\bullet}$ since $\bullet \leadsto \bullet$ and it does therefore not have a normal form.
Hence, we need a context-sensitive erasure function as the idea about simply erasing computations above the level of an attacker is too simple.
To handle this case soundly we make use of \textit{two-steps erasure} that works by introducing an extra semantic step for $\mathtt{plug}_{\bullet}$ introduced by the erasure function.
The extension is presented in Figure~\ref{fig:ttsec-hole-sem}.
Note that this, from an attackers point of view, still looks exactly like one would expect when erasing data and this is therefore purely a technicality of the proof.
\begin{figure}[H]
  \centering
  \begin{mathparpagebreakable}
    \infer[Hole]
    { \phantom{ } }
    { \bullet \leadsto \bullet }

    \infer[Plug$_{\bullet}$]
    { \phantom{ } }
    { \plugHole{t} \leadsto \pure{(\LabeledValue{\bullet})} }
  \end{mathparpagebreakable}

  \caption{Operational semantics of \TTsecbullet: extensions to \TTsec.}
  \label{fig:ttsec-hole-sem}
\end{figure}

Extra typing rules for both $\bullet$ and $\mathtt{plug}_{\bullet}$ are also introduced as presented in Figure~\ref{fig:ttsec-hole-typ}.
\begin{figure}[H]
  \centering
  \begin{mathparpagebreakable}
    \infer[T-Hole]
    { \Gamma \vdash \tau : \Type }
    { \Gamma \vdash \bullet : \tau }

    \infer[T-Plug$_{\bullet}$]
    { \ell_L \flowsto \ell_H \\ \Gamma \vdash s : \aDIO{\ell_H}{S} }
    { \Gamma \vdash \plugHole{s} : \aDIO{\ell_L}{(\LabeledType{\ell_H}{S})} }
  \end{mathparpagebreakable}
  \caption{Typing rules for \TTsecbullet: extensions to \TTsec.}
  \label{fig:ttsec-hole-typ}
\end{figure}
The definition of \erasure on stores is straightforward as we have a compartmentalized memory.
If the a store is erased up to a security level $\ell_{A}$ then all levels above this should simply be collapsed entirely.
\begin{definition}[Erasure function on configurations]\label{def:conf-erasure}
  Let $\ell_{A}$ be the attacker's security level.
  The function
  $$
  \erasure : \Conf \rightarrow \Conf_{\bullet}
  $$
  denotes the erasure function for configurations defined by
  $$
  \eps{\conf{\Sigma}{t : \aDIO{\ell}{\tau}}} \triangleq
  \begin{cases}
    \conf{\eps{\Sigma}}{\bullet} & \text{if } \ell \not\sqsubseteq \ell_{A} \\
    \conf{\eps{\Sigma}}{\eps{t}} & \text{otherwise}
  \end{cases}
  $$
  where the store $\Sigma$ is erased pointwise at each security level and in every cell, i.e. $\eps{\Sigma} = \lambda\ell. \eps{\Sigma(\ell)}$, where
  $$
  \eps{\Sigma(\ell)} \triangleq
  \begin{cases}
    \bullet & \text{if } \ell \not\sqsubseteq \ell_{A} \\
    \textit{map}\ \erasure\ \Sigma(\ell) & \text{otherwise}
  \end{cases}
  $$
\end{definition}
Note that writing to an erased cell yields no update, i.e. $(\Sigma(\ell)[\bullet] \Coloneqq t) \triangleq \Sigma(\ell)$, and reading from an erased compartment yields $\bullet$, i.e.  $\bullet[n] \triangleq \bullet$.

\begin{definition}[$\ell_{A}$-equivalence]\label{def:erasure-eq}
  Let $c_{1}, c_{2} \in \Conf$.
  $c_{1}$ and $c_{2}$ are said to be \emph{indistinguishable} from security level $\ell_{A}$, written $c_{1} \approx_{\ell_{A}} c_{2} $, if and only if $\eps{c_{1}}$ and $\eps{c_{2}}$ are structurally equivalent, written $\eps{c_{1}} \equiv \eps{c_{2}}$.
\end{definition}

\section{Results}\label{sec:results}

\begin{lemma}[Erasure of substitution] \label{lem:eps-subst}
  Let $t, v \in \Term$. Then $\eps{t[v/x]} \equiv \eps{t}[\eps{v}/x]$.
\end{lemma}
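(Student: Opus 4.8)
The plan is to prove the statement by structural induction on $t$, following the clauses of Definition~\ref{def:term-erasure} and working throughout up to $\alpha$-equivalence, so that the bound variable of every binder can be taken distinct from $x$ and from the free variables of $v$. The base cases will be immediate: constants and primitive types are fixed points of $\erasure$ and contain no free variables, so both sides collapse to the term itself; for the variable $x$ one has $\eps{x[v/x]} = \eps{v}$ and $\eps{x}[\eps{v}/x] = x[\eps{v}/x] = \eps{v}$, and for $y \neq x$ both sides are $y$. For the binders $\lambda y.t$ and $\forall y:\tau.t$, since $\erasure$ is homomorphic on binders and capture-avoiding substitution commutes with the binding structure, the cases will follow directly from the induction hypotheses on the sub-terms.

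The work is concentrated in the type-directed clauses of $\erasure$: application, the conditional, $\pure{\cdot}$, the monadic bind, $\ylabel{\cdot}$, $\aunlabel{\cdot}$, the labeled- and reference-value constructors, the reference operations $\newRef{\ell}{\cdot}$, $\readRef{\cdot}$ and $\writeRef{\cdot}{\cdot}$, as well as $\aplug{\cdot}$, $\plugHole{\cdot}$, and the index types $\aDIO{\ell}{\tau}$, $\LabeledType{\ell}{\tau}$, $\RefType{\ell}{\tau}$ --- the clauses where $\erasure$ branches on whether a label appearing in the type lies below $\ell_{A}$. Here I would first invoke a standard substitution lemma for the type system of Figure~\ref{fig:typing} (if $t : T$ then $t[v/x] : T[v/x]$), so that the type annotation consulted by $\erasure$ on the left-hand side is exactly the substitution instance of the one consulted on the right; combined with the homomorphic treatment of the index types this already disposes of the type-constructor cases, which reduce to the induction hypotheses on $\ell$ and $\tau$ with no branch to match. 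For the remaining clauses the key point to establish is that the flow tests are stable under the substitution, i.e.\ for a label $\ell$ occurring as a type index, $\ell \flowsto \ell_{A}$ if and only if $\ell[v/x] \flowsto \ell_{A}$. I would discharge this using the invariant that the terms on which $\erasure$ is invoked are closed at their label-index positions --- the store $\Sigma$ is indexed by ground labels, and the substitutions encountered along $\evalsto$ arise from $\beta$-reduction on closed configurations --- so that $x$ does not occur free in such an $\ell$ and $\ell[v/x] = \ell$. Given this, each clause reduces to matching the branches on the two sides and applying the induction hypotheses; e.g.\ for $\aplug{t} : \aDIO{\ell}{(\LabeledType{\ell'}{\tau})}$ the three-way split on $\ell \flowsto \ell_{A}$ and $\ell' \flowsto \ell_{A}$ is identical on both sides, and in each case ($\bullet$, $\plugHole{\eps{t}}$, or $\aplug{\eps{t}}$) the conclusion follows from $\eps{t[v/x]} \equiv \eps{t}[\eps{v}/x]$, using that $\bullet$ is closed so $\bullet[\eps{v}/x] = \bullet$.

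The main obstacle I anticipate is exactly this last point --- isolating and justifying the invariant that label indices are ground (or, more generally, that the flow tests used by $\erasure$ commute with the substitutions that actually occur), since without it the case splits in the definition of $\erasure$ need not line up; this is precisely why the homomorphic erasure of indexed types discussed in Section~\ref{sec:soundness} is needed, as it removes the branch entirely for the type-constructor case. Everything else is a routine, if lengthy, traversal of the clauses. I would also record along the way the small companion fact that $\erasure$ maps values to values --- immediate from the definition --- which guarantees that $\eps{v}$ is again a value and hence that the right-hand side is a genuine value substitution, as the later uses of this lemma require.
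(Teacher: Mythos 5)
Your proposal follows the same route as the paper: the paper's entire proof is the single sentence ``The statement follows by case splitting on $t$ and $v$ and the definition of \erasure'', and your clause-by-clause induction is a faithful (and considerably more careful) expansion of exactly that. The one genuine subtlety you isolate --- that the type-directed flow tests $\ell \sqsubseteq \ell_{A}$ must be stable under the substitution, which forces either a groundness invariant on label indices or the homomorphic erasure of indexed types --- is real but is only obliquely acknowledged by the paper, in the Section~\ref{sec:soundness} remark that indexed types must be erased homomorphically precisely so that \erasure commutes with capture-avoiding substitution; your way of discharging it is reasonable and goes beyond what the paper records.
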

\begin{proof}
  The statement follows by case splitting on $t$ and $v$ and the definition of \erasure.
\end{proof}
\begin{lemma}[Distributivity on pure term reduction]\label{lem:dist-pure}
  Let $t_{1}, t_{2} \in \Term$.
  If $t_{1} \leadsto t_{2}$ then $\eps{t_{1}} \leadsto \eps{t_{2}}$.
\end{lemma}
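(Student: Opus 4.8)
The plan is to argue by induction on the derivation of $t_{1} \leadsto t_{2}$, equivalently by case analysis on the pure reduction rule of Figure~\ref{fig:TTsec-semantics} (together with the rule \textsc{Hole} of Figure~\ref{fig:ttsec-hole-sem}) used at the root of the derivation. Since the erasure function of Definition~\ref{def:term-erasure} is type-indexed, the first thing to nail down is that $t_{1}$ and $t_{2}$ can be assigned the same type by subject reduction, so they are erased through the same clause of the definition. This yields the governing dichotomy: either the common type has the shape $\aDIO{\ell}{\tau'}$ with $\ell \not\sqsubseteq \ell_{A}$, in which case $\eps{t_{1}} = \hole = \eps{t_{2}}$ and the case closes by $\hole \leadsto \hole$; or the erasure function acts homomorphically on the head constructor and the reduction is mirrored structurally.

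In the homomorphic regime I would group the remaining rules into three kinds. First, the congruence rules --- \textsc{App}$_{1}$, \textsc{App}$_{2}$, \textsc{If}$_{1}$, \textsc{Pure}$_{1}$, \textsc{Label}$_{1}$, \textsc{Unlabel}$_{1}$, \textsc{New}$_{1}$, \textsc{Write}$_{1}$, \textsc{Write}$_{2}$, \textsc{Read}$_{1}$, \textsc{DIO}$_{1}$, \textsc{DIO}$_{2}$, \textsc{Labeled}$_{1}$, \textsc{Labeled}$_{2}$, \textsc{Ref}$_{1}$, \textsc{Ref}$_{2}$, \textsc{Forall}$_{1}$, \textsc{Forall}$_{2}$ --- where the redex sits in a subterm: here the induction hypothesis reduces the erased subterm and re-applying the same rule to $\eps{t_{1}}$ produces $\eps{t_{2}}$, using that $\erasure$ sends values to values and each evaluation context to an evaluation context. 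A pleasant simplification compared with the monadic simulation statement is that $\aplug{t}$ and $\plugHole{t}$ have no pure reduction rule at all, so the delicate two-steps-erasure reasoning never surfaces in this lemma. Second, the substituting rule \textsc{Beta}, for which I would commute the erasure past the substitution with Lemma~\ref{lem:eps-subst}, e.g. $\eps{(\lambda x.t)\ v} = (\lambda x.\eps{t})\ \eps{v} \leadsto \eps{t}[\eps{v}/x] \equiv \eps{t[v/x]}$. Third, the remaining administrative base rules \textsc{Bind}$_{1}$, \textsc{If}$_{2}$, \textsc{If}$_{3}$, \textsc{Pure}$_{2}$, \textsc{Label}$_{2}$, \textsc{Unlabel}$_{2}$, where unfolding Definition~\ref{def:term-erasure} on both sides --- and observing that the label decorating the relevant constructor is preserved by the typing rule in play --- shows that the matching rule still fires.

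The step I expect to demand the most attention is a bookkeeping subtlety rather than a deep difficulty: the partially homomorphic clauses in which the head constructor survives erasure while its payload is collapsed to $\hole$ --- for instance $\eps{\ylabel{v} : \LabeledType{\ell}{\tau}} = \ylabel{\hole}$ with $\eps{\LabeledValue{v} : \LabeledType{\ell}{\tau}} = \LabeledValue{\hole}$ when $\ell \not\sqsubseteq \ell_{A}$, the collapsed-payload subcases of \textsc{Pure}$_{2}$, or the middle branch of $\eps{\readRef{t}}$. In each such case one must check that the required reduction, say $\ylabel{\hole} \leadsto \LabeledValue{\hole}$, is still an instance of the corresponding rule once $\hole$ is treated as a value in value position; this, together with the rule \textsc{Hole}, is exactly what the two-steps-erasure machinery of Figure~\ref{fig:ttsec-hole-sem} provides, and the non-determinism \textsc{Hole} introduces at $\hole$ is harmless because the lemma only asks for \emph{some} matching reduction to exist. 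The other point to make explicit is the appeal to subject reduction at the outset to guarantee that $t_{1}$ and $t_{2}$ are erased via the same clause --- this is the type-preservation property conjectured for \TTsec\ in the calculus section. I do not anticipate any genuinely hard case here; the real difficulty, involving $\mathtt{plug}$ and the extra $\plugHole{}$ step, is deferred to the monadic simulation lemma.
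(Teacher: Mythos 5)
Your overall strategy is exactly the paper's: structural induction on the derivation, the governing dichotomy between the high-\texttt{DIO} case (both sides collapse to $\hole$, closed by \textsc{Hole}) and the homomorphic case, the congruence rules via the induction hypothesis, \textsc{Beta} via Lemma~\ref{lem:eps-subst}, and the administrative rules by unfolding the erasure clauses (including the transitivity argument implicitly needed for \textsc{Unlabel}$_{2}$, which the paper spells out). The one concrete error is your claim that $\plugHole{t}$ ``has no pure reduction rule at all'': Figure~\ref{fig:ttsec-hole-sem} introduces \textsc{Plug}$_{\bullet}$ as a \emph{pure} reduction $\plugHole{t} \leadsto \pure{(\LabeledValue{\hole})}$, and since the lemma must hold over the extended term syntax of \TTsecbullet (you yourself admit \textsc{Hole} into the case analysis, and erased terms fed back through \textsc{Lift} can contain $\plugHole{}$), the paper's proof includes a \textsc{Plug}$_{\bullet}$ case, dispatched by splitting on whether $\ell' \sqsubseteq \ell_{A}$ and observing that both branches of the erasure clause for $\plugHole{t}$ still reduce by \textsc{Plug}$_{\bullet}$ (or collapse to $\hole$ when the outer level is high). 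You are right that $\aplug{t}$ itself has no pure rule and that the genuinely delicate two-steps-erasure reasoning lives in the monadic simulation; but the $\plugHole{}$ case does surface here and your proof as written would omit it. Everything else matches the paper's argument.
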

\begin{proof}
  The proof goes by structural induction in the derivation of $t_{1} \leadsto t_{2}$.
  \begin{enumerate}[align = left]
  \item[\textsc{App$_{1}$}:] Assume $\app{t_{1}}{t_{2}} \leadsto \app{t_{1}'}{t_{2}}$.
    If $\app{t_{1}}{t_{2}}$ has type $\aDIO{\ell}{\tau'}$ and $\ell \not\sqsubseteq \ell_{A}$, the statement follows from the definition of \erasure\ and \textsc{Hole}.
    Otherwise, $t_{1} \leadsto t_{1}'$ holds by $\textsc{App}_{1}$ and by the induction hypothesis $\eps{t_{1}} \leadsto \eps{t_{1}'}$.
    By $\textsc{App}_{1}$ and definition of $\erasure$ it holds that $\eps{\app{t_{1}}{t_{2}}} \leadsto \eps{\app{t_{1}'}{t_{2}}}$.
  \item[\textsc{App$_{2}$}:] The argument is identical to the $\textsc{App}_{1}$ case.
  \item[\textsc{Beta}:] Assume $(\lambda x .
    t) \ v \leadsto t[v/ x]$.
    By \textsc{Beta}, $\lambda x .
    \eps{t} \ \eps{v} \leadsto \eps{t}[\eps{v}/ x]$.
    By definition of \erasure and Lemma \ref{lem:eps-subst} then $\eps{(\lambda x .
      t)\ v} \leadsto \eps{t[v/x]}$.
  \item[\textsc{If}$_{1}$:] The argument is identical to the $\textsc{App}_{1}$ case.
  \item[\textsc{If}$_{2}$ and \textsc{If}$_{3}$:] If $t_{1}$ and $t_{2}$ have type $\aDIO{\ell}{\tau'}$ and $\ell \not\sqsubseteq \ell_{A}$, the statement follows from the definition of \erasure\ and \textsc{Hole}.
    Otherwise, the statement follows directly by the definition \erasure\ and \textsc{If$_{i}$}.
  \item[\textsc{Bind$_{1}$}:] Assume $\pure{\bind{t_{1}}{t_{2}}} \leadsto \app{t_{2}}{t_{1}}$.
    As we assume well-typed terms, $\pure{\bind{t_{1}}{t_{2}}}$ has type $\aDIO{\ell}{\tau}$ for some $\ell$ and $\tau$.
    If $\ell \not\sqsubseteq \ell_{A}$ the statement follows from the definition of \erasure\ and \textsc{Hole}.
    Otherwise, the statement follows from \textsc{Bind-Pure} and the definition of \erasure.
  \item[\textsc{Pure$_{1}$}:] Assume $\pure{t} \leadsto \pure{t'}$.
    As we assume well-typed terms, $\pure{t}$ and $\pure{t'}$ have type $\aDIO{\ell}{\tau}$ for some $\ell$ and $\tau$.
    If $\ell \sqsubseteq \ell_{A}$ then the statement follows from the induction hypothesis, the definition of \erasure\ and \textsc{Pure$_{1}$}.
    If $\ell \not\sqsubseteq \ell_{A}$ then the statement follows from the definition \erasure\ and $\textsc{Hole}$.
  \item[\textsc{Pure$_{2}$}:] Assume $\pure{v} \leadsto \DIOValue{v}$.
    As we assume well-typed terms, $\pure{v}$ and $\DIOValue{v}$ have type $\aDIO{\ell}{\tau}$ for some $\ell$ and $\tau$.
    If $\ell \sqsubseteq \ell_{A}$ then the statement follows from the definition of \erasure\ and \textsc{Pure$_{2}$}.
    If $\ell \not\sqsubseteq \ell_{A}$ then the statement follows from the definition \erasure\ and $\textsc{Hole}$.
  \item[\textsc{Label$_{1}$}:] Assume $\ylabel{t} \leadsto \ylabel{t'}$.
    As we assume well-typed terms, $\ylabel{t}$ and $\ylabel{t'}$ have type $\LabeledType{\ell}{\tau}$ for some $\ell$ and $\tau$.
    If $\ell \sqsubseteq \ell_{A}$ then the statement follows from the induction hypothesis, the definition of \erasure, and \textsc{Label$_{1}$}.
    If $\ell \not\sqsubseteq \ell_{A}$ then the statement follows from the definition \erasure\, $\textsc{Hole}$ and \textsc{Label$_{1}$}.
  \item[\textsc{Label$_{2}$}:] Assume $\ylabel{t} \leadsto {\LabeledValue{t}}$.
    As we assume well-typed terms, $\ylabel{t}$ and $\LabeledValue{t}$ have type $\LabeledType{\ell}{\tau}$ for some $\ell,\ell'$, and $\tau$.
    If $\ell \not\sqsubseteq \ell_{A}$ the statement follows from the definition of \erasure\, \textsc{Hole} and \textsc{Label$_{1}$}.
    Otherwise the statement follows by \textsc{Label$_{2}$}, the definition of \erasure and the induction hypothesis.

  \item[\textsc{Unlabel}$_1$:] Assume $\aunlabel{t} \leadsto \aunlabel{t'}$.
    As we assume well-typed terms, $\aunlabel{t}$ and $\aunlabel{t'}$ have type $\aDIO{\ell}{\tau}$ for some $\ell$ and $\tau$.
    If $\ell \not\sqsubseteq \ell_{A}$ the statement follows from the definition of \erasure\ and \textsc{Hole}.
    Otherwise, the statement follows from the induction hypothesis, $\textsc{Unlabel}_{1}$, and the definition of \erasure.
  \item[\textsc{Unlabel}$_2$:] Assume $\aunlabel{(\LabeledValue{t})} \leadsto \pure{t}$.
    As we assume well-typed terms, $\aunlabel{(\LabeledValue{t})}$ and $\pure{t}$ have type $\aDIO{\ell}{\tau}$ where $\LabeledValue{t}$ have type $\LabeledType{\ell'}{\tau}$ such that $\ell' \sqsubseteq \ell$.
    If $\ell \not\sqsubseteq \ell_{A}$ the statement follows from the definition of \erasure\ and \textsc{Hole}.
    If $\ell \sqsubseteq \ell_{A}$ then by transitivity of the partial ordering $\ell' \sqsubseteq \ell_{A}$ holds and the statement then follows by $\textsc{Unlabel}_{2}$ and the definition of \erasure.
  \item[\textsc{New}$_{1}$:] Assume $\newRef{\ell}{t} \leadsto \newRef{\ell}{t'}$.
    As we assume well-typed terms, $\newRef{\ell}{t}$ and $\newRef{\ell}{t'}$ have type $\aDIO{\ell}{\tau}$ for some $\ell$ and $\tau$.
    If $\ell \not\sqsubseteq \ell_{A}$ then the statements follows from the definition of \erasure\ and \textsc{Hole}.
    If $\ell \sqsubseteq \ell_{A}$ the statement follows from the induction hypothesis, the definition of \erasure, and $\textsc{New}_{1}$.
  \item[\textsc{Write}$_{1}$:] Assume $\writeRef{t_{1}}{t_{2}} \leadsto \writeRef{t_{1}'}{t_{2}}$.
    As we assume well-typed terms, $\writeRef{t_{1}}{t_{2}}$ and $\writeRef{t_{1}'}{t_{2}}$ have type $\aDIO{\ell}{\Unit}$ and $t_{2}$ has type $\LabeledType{\ell'}{\tau}$.
    If $\ell \not\sqsubseteq \ell_{A}$ the statement follows from the definition of \erasure\ and \textsc{Hole}.
    If $\ell \sqsubseteq \ell_{A}$ then the statement follows by the induction hypothesis, $\textsc{Write}_{1}$, and the definition of \erasure.
  \item[\textsc{Write}$_{2}$:] The argument is identical to the $\textsc{Write}_{1}$ case.
  \item[\textsc{Read}$_{1}$:] Assume $\readRef{t} \leadsto \readRef{t'}$.
    As we assume well-typed terms $\readRef{t}$ and $\readRef{t'}$ have type $\aDIO{\ell'}{(\LabeledType{\ell}{\tau})}$.
    If $\ell' \not\sqsubseteq \ell_{A}$ the statement follows from the definition of \erasure\ and \textsc{Hole}.
    Otherwise, if $\ell' \sqsubseteq \ell_{A}$ then consider whether $\ell \sqsubseteq \ell_{A}$ holds.
    If $\ell \not\sqsubseteq \ell_{A}$ the statement follows by $\textsc{Read}_{1}$ and \textsc{Hole}.
    If $\ell \sqsubseteq \ell_{A}$ then the statement follows by the induction hypothesis, $\textsc{Read}_{1}$, and the definition of \erasure.
  \item[\textsc{DIO}$_{i}$, \textsc{Labeled}$_{i}$,]
  \item[\textsc{Ref}$_{i}$, \textsc{Forall}$_{i}$:] In all cases, the statement follows directly from the induction hypothesis, the definition of \erasure, and the inference rules.
  \item[\textsc{Plug}$_{\bullet}$:] Assume $\plugHole{t} \leadsto \pure{(\LabeledValue{\bullet})}$.
    As we assume well-typed terms, $\plugHole{t}$ has type $\aDIO{\ell}{(\LabeledType{\ell'}{\tau})}$.
    Consider whether $\ell' \sqsubseteq \ell_{A}$.
    In both cases, the statement follows by the definition \erasure\ and $\textsc{Plug}_{\bullet}$.
  \item[\textsc{Hole}:] The statement follows directly from the definition of \erasure\ and \textsc{Hole}.
  \end{enumerate}
\end{proof}
\begin{lemma}[Erasure of a computation] \label{lem:dio-erasure} Let $t \in \Term$.
  If $t$ has type \aDIO{\ell}{\tau} and $\ell \not\sqsubseteq \ell_{A}$ then $\eps{t} \equiv \bullet$.
\end{lemma}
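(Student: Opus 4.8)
This is essentially a bookkeeping lemma about the erasure function, so I would prove it by induction on the derivation of $\Gamma \vdash t : \aDIO{\ell}{\tau}$, with a case analysis on the last rule applied (equivalently, on the head shape of $t$), reading off in each case the relevant clause of Definition~\ref{def:term-erasure}. The first step is to observe that, inspecting the typing rules of Figure~\ref{fig:typing} together with the analogous rule for the runtime value $\DIOValue{\cdot}$ and the $\bullet$/$\mathtt{plug}_\bullet$ extensions, a term whose ascribed type has head constructor $\mathtt{DIO}_\tau$ must be one of: $\app{s_1}{s_2}$, $\ifThenElse{s_1}{s_2}{s_3}$, $\pure{s}$, $\DIOValue{s}$, $\bind{s_1}{s_2}$, $\aunlabel{s}$, $\newRef{\ell'}{s}$, $\readRef{s}$, $\writeRef{s_1}{s_2}$, $\aplug{s}$, $\plugHole{s}$, $\bullet$, a variable, or a term typed via \TConv.

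\textbf{Routine cases.} For each of the first eleven forms, the matching clause of \erasure\ has as its \emph{first} branch ``$\bullet$ if $\ell \not\sqsubseteq \ell_A$'', where $\ell$ is exactly the $\mathtt{DIO}$-index of the ascribed type; for $\aplug{s}$, $\plugHole{s}$, and $\readRef{s}$ the clause splits further on an inner label, but since the $\ell \not\sqsubseteq \ell_A$ branch is listed first it applies regardless. As $\ell \not\sqsubseteq \ell_A$ by hypothesis, $\eps{t} \equiv \bullet$ in each of these cases, and for $t = \bullet$ this is immediate since $\eps{\bullet} \triangleq \bullet$. All remaining syntactic forms — constants, the binders $\lambda x : S.\,e$ and $\forall x : S.\,T$, the base type constructors $\Type,\Int,\Bool,\Unit$, the type formers $\aDIO{\cdot}{\cdot},\LabeledType{\cdot}{\cdot},\RefType{\cdot}{\cdot}$, $\ylabel{s}$, $\LabeledValue{s}$, and the reference constructor $\RefValue{\ell}{n}$ — are given by the typing rules a type that is already a value whose head constructor is not $\mathtt{DIO}_\tau$; such a type is its own normal form and is not a $\forall$-type, so cumulativity collapses to conversion and it cannot be convertible to $\aDIO{\ell}{\tau}$. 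Hence those cases are vacuous.

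\textbf{Conversion and variables.} In the \TConv\ case, $\Gamma \vdash t : \aDIO{\ell}{\tau}$ is obtained from $\Gamma \vdash t : A$ and $\Gamma \vdash A \preceq \aDIO{\ell}{\tau}$. Since $\aDIO{\ell}{\tau}$ is not a $\forall$-type, \textsc{C-Forall} cannot have been used, so $A \simeq \aDIO{\ell}{\tau}$; hence $A$ has the form $\aDIO{\ell'}{\tau'}$ with $\ell' \simeq \ell$. Because the flow relation $\sqsubseteq$ depends only on the normal form of a label, $\ell' \not\sqsubseteq \ell_A$, and the clause of \erasure\ selected for $t$ under the type $A$ is the same one selected under $\aDIO{\ell}{\tau}$; the induction hypothesis then gives $\eps{t} \equiv \bullet$. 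For a variable $t = x$, a $\mathtt{DIO}$ type can be assigned only via a non-empty context; this lemma is applied in the soundness development only to closed configurations, so this case does not arise (or, equivalently, one may restrict the statement to closed $t$).

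\textbf{Main obstacle.} There is no deep content; the two points needing care are (i) checking that the enumeration of $\mathtt{DIO}$-typable term forms is exhaustive against the clauses of Definition~\ref{def:term-erasure} — in particular that the multi-way case splits for $\mathtt{plug}$, $\mathtt{plug}_\bullet$, and $\mathtt{read}$ are harmless because their $\ell \not\sqsubseteq \ell_A$ branch is listed first — and (ii) the \TConv\ case, where one must use that having head constructor $\mathtt{DIO}_\tau$ and satisfying $\ell \not\sqsubseteq \ell_A$ are both stable under conversion of the ascribed type.
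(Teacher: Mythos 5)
Your proposal is correct and matches the paper's own argument, which is simply ``case splitting on $t$ and the definition of $\varepsilon_{\ell_{A}}$'' — you have spelled out exactly that case split, confirming that every term form typable at $\aDIO{\ell}{\tau}$ hits an erasure clause whose $\ell \not\sqsubseteq \ell_{A}$ branch yields $\bullet$. The extra care you take with \textsc{T-Conv} and variables goes beyond what the paper records but does not change the approach.
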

\begin{proof}
  The statement follows directly by case splitting on $t$ and the definition of \erasure.
\end{proof}
\begin{lemma}[Single step erased store equivalence]\label{lem:single-store-eq}
  Let $c_{1} = \conf{\Sigma_{1}}{t_{1}}$ and $c_{2} = \conf{\Sigma_{2}}{t_{2}}$.
  If $t_{1}$ and $t_{2}$ have type $\aDIO{\ell}{\tau}$, $\ell \not\sqsubseteq \ell_{A}$, and $c_{1} \evalsto c_{2}$ then $\eps{\Sigma_{1}} \equiv \eps{\Sigma_{2}}$.
\end{lemma}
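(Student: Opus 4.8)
The plan is to proceed by induction on the derivation of $c_{1} \evalsto c_{2}$, i.e.\ on the last rule of the monadic-term semantics (Figure~\ref{fig:TTsec-semantics}) used in the step. The guiding observation is that the store is changed by exactly two rules, \textsc{New}$_{2}$ and \textsc{Write}$_{3}$, and in each case the affected cell lies in a memory segment whose label dominates $\ell$; since $\ell \not\sqsubseteq \ell_{A}$, that segment is invisible to the attacker, so Definition~\ref{def:conf-erasure} collapses it to $\bullet$ regardless of its contents and the update disappears under erasure.

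For \textsc{Lift} and \textsc{Read}$_{2}$ we have $\Sigma_{2} = \Sigma_{1}$, so there is nothing to show. For \textsc{Bind}$_{2}$ the step is $\conf{\Sigma_{1}}{\bind{s}{t}} \evalsto \conf{\Sigma_{2}}{\bind{s'}{t}}$ with premise $\conf{\Sigma_{1}}{s} \evalsto \conf{\Sigma_{2}}{s'}$; by \TBind{} $s$ has type $\aDIO{\ell}{S}$ with the same label, $s'$ has the same type by type preservation, and the induction hypothesis applied to the premise gives $\eps{\Sigma_{1}} \equiv \eps{\Sigma_{2}}$. For \textsc{Plug} the step is $\conf{\Sigma_{1}}{\aplug{s}} \evalsto \conf{\Sigma_{2}}{\pure{(\LabeledValue{s'})}}$ with premise $\conf{\Sigma_{1}}{s} \bigstepto \conf{\Sigma_{2}}{\DIOValue{s'}}$; by \textsc{T-Plug} we have $s : \aDIO{\ell_{H}}{S}$ with $\ell \flowsto \ell_{H}$, hence $\ell_{H} \not\sqsubseteq \ell_{A}$ by transitivity of $\sqsubseteq$. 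The big-step judgment unfolds into a finite chain of $\evalsto$-steps whose intermediate terms all have type $\aDIO{\ell_{H}}{S}$ (by type preservation), so each is covered by the induction hypothesis, and composing the equalities along the chain yields $\eps{\Sigma_{1}} \equiv \eps{\Sigma_{2}}$. For \textsc{New}$_{2}$ the step is $\conf{\Sigma_{1}}{\newRef{\ell_{H}}{(\LabeledValue{v})}} \evalsto \conf{\Sigma_{1}(\ell_{H})[n] \Coloneqq v}{\pure{(\RefValue{\ell_{H}}{n})}}$; by \TNewRef{} the subject's label is $\ell = \ell_{L}$ with $\ell_{L} \flowsto \ell_{M} \flowsto \ell_{H}$, so $\ell_{H} \not\sqsubseteq \ell_{A}$, whence $\eps{\Sigma_{1}(\ell_{H})} = \bullet = \eps{\Sigma_{2}(\ell_{H})}$ while $\Sigma_{1}$ and $\Sigma_{2}$ agree on every other segment, giving $\eps{\Sigma_{1}} \equiv \eps{\Sigma_{2}}$. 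The case \textsc{Write}$_{3}$ is identical with \TWriteRef{} in place of \TNewRef{}, and the hole rules of $\TTsecbullet$ are pure reductions, so they arise only under \textsc{Lift} and leave the store untouched.

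I expect the \textsc{Plug} case to be the only genuine obstacle, since its premise is a multi-step rather than a single-step reduction: the induction has to be organized over derivations so that each constituent $\evalsto$-step is itself covered by the induction hypothesis, and one must invoke the (conjectured) type-preservation property to know that every intermediate configuration still carries a $\aDIO{\ell_{H}}{\cdot}$ type with $\ell_{H} \not\sqsubseteq \ell_{A}$. All remaining cases are a direct analysis of the semantics together with the flow constraints read off \TBind{}, \textsc{T-Plug}, \TNewRef{}, and \TWriteRef{}.
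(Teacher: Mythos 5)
Your proof is correct and follows essentially the same route as the paper: a case analysis on the derivation of $c_{1} \evalsto c_{2}$ in which only \textsc{New}$_{2}$, \textsc{Write}$_{3}$, and \textsc{Plug} require real work, with the \textsc{Plug} case discharged by descending into the big-step premise (the paper factors this into a separate multi-step lemma proved by repeated application of this very lemma, whereas you inline that mutual argument). Your explicit handling of \textsc{Bind}$_{2}$ via the induction hypothesis is in fact slightly more careful than the paper's blanket remark that the remaining rules leave the store unchanged, since that rule's premise can itself update the store.
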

\begin{proof}
  The proof goes by case splitting in the derivation of $c_{1}\evalsto c_{2}$.

  \begin{enumerate}[align = left]
  \item[\textsc{Plug}:] Assume $\conf{\Sigma}{\aplug{t}{}} \evalsto \conf{\Sigma'}{\pure{(\LabeledValue{t'})}}$.
    As we assume well-typed terms, $t$ has type $\aDIO{\ell'}{\tau'}$ for some $\ell'$ and $\tau'$ where $\ell \sqsubseteq \ell'$.
    By transitivity of $\sqsubseteq$ it follows that $\ell' \not\sqsubseteq \ell_{A}$ and then the statement follows by Lemma \ref{lem:multi-store-eq} as $t$ is structurally smaller than \aplug{t}.
  \item[\textsc{New$_{2}$}:] Assume $ \conf{\Sigma}{\newRef{\ell'}{(\LabeledValue{t})}} \evalsto \conf{\Sigma(\ell')[n] \Coloneqq t}{\pure{(\RefValue{\ell'}{n})}}$.
    As we assume well-typed terms, $\newRef{\ell'}{(\LabeledValue{t})}$ and $\pure{(\RefValue{\ell'}{n})}$ have type $\aDIO{\ell}{(\RefType{\ell'}{\tau})}$ where $\ell\sqsubseteq \ell'$.
    By transitivity of $\sqsubseteq$ it follows that $\ell'\not\sqsubseteq \ell_{A}$.
    Note that the only memory compartment changed is the one of $\ell'$ and as writing to an erased cell makes no update the statement follows.
  \item[\textsc{Write$_{3}$}:] The argument is identical to the \textsc{New} case.
  \end{enumerate}
  The remaining cases does not change the store and the statement follows immediately.
\end{proof}
\begin{lemma}[Multi-step erased store equivalence]\label{lem:multi-store-eq}
  Let $c_{1} = \conf{\Sigma_{1}}{t_{1}}$ and $c_{2} = \conf{\Sigma_{2}}{t_{2}}$.
  If $t_{1}$ and $t_{2}$ have type $\aDIO{\ell}{\tau}$, $\ell \not\sqsubseteq \ell_{A}$, and $c_{1} \evalstostar c_{2}$ then $\eps{\Sigma_{1}} \equiv \eps{\Sigma_{2}}$.
\end{lemma}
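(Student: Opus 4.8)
The plan is to argue by induction on the length $k$ of the reduction sequence witnessing $c_{1} \evalstostar c_{2}$. The base case $k = 0$ is immediate: then $c_{1} = c_{2}$, so $\Sigma_{1} = \Sigma_{2}$ and hence $\eps{\Sigma_{1}} \equiv \eps{\Sigma_{2}}$.

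For the inductive step, decompose the sequence as $c_{1} \evalsto c' \evalstostar c_{2}$, where $c' = \conf{\Sigma'}{t'}$ and the trailing part has length $k - 1$. The key invariant to propagate is that the intermediate term is again a computation typed at the \emph{same} label: since $t_{1}$ has type $\aDIO{\ell}{\tau}$ and $c_{1} \evalsto c'$, subject reduction for \TTsec\ gives $t' : \aDIO{\ell}{\tau}$ as well (a quick inspection of the monadic rules in Figure~\ref{fig:TTsec-semantics} confirms that none of them changes the outer label). Now Lemma~\ref{lem:single-store-eq} applies to the single step $c_{1} \evalsto c'$ — legitimately, since $t_{1}$ and $t'$ both have type $\aDIO{\ell}{\tau}$ with $\ell \not\sqsubseteq \ell_{A}$ — and yields $\eps{\Sigma_{1}} \equiv \eps{\Sigma'}$. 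The induction hypothesis applied to $c' \evalstostar c_{2}$ yields $\eps{\Sigma'} \equiv \eps{\Sigma_{2}}$, and transitivity of structural equivalence closes the case. Observe that, concretely, only the three store-mutating monadic rules \textsc{New$_{2}$}, \textsc{Write$_{3}$}, and \textsc{Plug} can change $\Sigma$, so the real content is already isolated in Lemma~\ref{lem:single-store-eq}.

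The one genuine subtlety — and the point I expect to be the main obstacle — is that Lemma~\ref{lem:single-store-eq} is itself established by appeal to the present lemma in its \textsc{Plug} case, so strictly speaking the two statements must be proved together by a single mutual induction rather than in sequence. This recursion is well-founded because in that \textsc{Plug} case the single-step lemma invokes the multi-step lemma only on the strictly smaller subterm $t$ of $\aplug{t}$: taking the structural size of the term as the primary measure and the length of the reduction sequence as the secondary measure, every recursive call decreases the pair. The remaining reliance is on subject reduction for \TTsec, which is only conjectured in Section~\ref{sec:typing}; granting it (as the whole soundness development does), the multi-step lemma itself is a routine induction once the single-step lemma is in hand.
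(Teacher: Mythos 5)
Your proof is correct and matches the paper's, which simply states that the lemma follows by repeated application of Lemma~\ref{lem:single-store-eq}. Your additional observation that the two lemmas are mutually recursive through the \textsc{Plug} case and must therefore be justified by a single well-founded induction on the pair (term size, reduction length) is accurate, and it makes explicit something the paper only gestures at with the phrase ``as $t$ is structurally smaller than $\aplug{t}$'' in the proof of Lemma~\ref{lem:single-store-eq}.
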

\begin{proof}
  The statement follows from repeated applications of Lemma \ref{lem:single-store-eq}.
\end{proof}

\begin{lemma}\label{lem:store-update-erasure}
  Let $\Sigma$ be a store, $n \in \mathbb{N} \cup \{ \bullet \} $, and $t \in \Term$.
  Then $\eps{\Sigma(\ell)[n] \Coloneqq t} \equiv \eps{\Sigma}(\ell)[n] \Coloneqq \eps{t}$.
\end{lemma}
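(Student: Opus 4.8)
The plan is to prove the statement by a case analysis on whether $\ell \sqsubseteq \ell_{A}$ and on the shape of $n$, reducing both sides to the pointwise definition of \erasure\ on stores (Definition~\ref{def:conf-erasure}) together with the elementary fact that $\textit{map}$ commutes with a single-cell list update. Since a store is a function $\Label \to \List\ \Term$ and $\equiv$ on stores means agreeing at every label with structurally equivalent cell contents, it suffices to show that $\eps{\Sigma(\ell)[n] \Coloneqq t}$ and $\eps{\Sigma}(\ell)[n] \Coloneqq \eps{t}$ assign structurally equivalent lists to every label $\ell''$.

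First I would dispatch every label $\ell'' \neq \ell$: both updates affect only the $\ell$-compartment, so at $\ell''$ the left store yields $\eps{\Sigma(\ell'')}$ and the right store yields $\eps{\Sigma}(\ell'') = \eps{\Sigma(\ell'')}$, which coincide. This leaves the $\ell$-compartment. If $\ell \not\sqsubseteq \ell_{A}$, then erasure collapses the $\ell$-compartment to $\bullet$ irrespective of its contents, so $\eps{\Sigma(\ell)[n] \Coloneqq t}(\ell) = \bullet = \eps{\Sigma}(\ell)$; on the right, updating the already-erased compartment $\bullet$ is a no-op under the natural convention---consistent with $\bullet[n] \triangleq \bullet$---that writing to an erased compartment leaves it unchanged, so the right store also gives $\bullet$ at $\ell$, and the two agree.

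Now suppose $\ell \sqsubseteq \ell_{A}$ and split on $n$. If $n = \bullet$, both sides are no-ops by the convention $\Sigma(\ell)[\bullet] \Coloneqq t \triangleq \Sigma(\ell)$, so each reduces to $\eps{\Sigma}$. If $n \in \mathbb{N}$, the left store at $\ell$ is $\textit{map}\ \erasure\ (\Sigma(\ell)[n \mapsto t])$ while the right store at $\ell$ is $(\textit{map}\ \erasure\ \Sigma(\ell))[n \mapsto \eps{t}]$; these are equal by the standard list identity $\textit{map}\ f\ (xs[n \mapsto v]) = (\textit{map}\ f\ xs)[n \mapsto f\ v]$, which holds whether or not $n$ is in range (an out-of-range update being the identity on both sides) and is proved by an immediate induction on $xs$. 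Combining the cases gives structural equivalence of the two stores, which is the claim.

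The main obstacle is purely bookkeeping: making the ``equal stores iff pointwise-equivalent'' reduction explicit and fixing the convention for writing to an erased ($\bullet$) compartment so that the $\ell \not\sqsubseteq \ell_{A}$ case is well-defined. Neither point requires real argument, and the $\textit{map}$/update identity is folklore; the lemma is routine and serves only as an auxiliary fact for the store cases of the single-step simulation.
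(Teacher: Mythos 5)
Your proof is correct and follows essentially the same route as the paper's: reduce to pointwise comparison of compartments (only the $\ell$-compartment changes), then case-split on $\ell \sqsubseteq \ell_{A}$, using the $\textit{map}$/update commutation identity in the visible case and the no-op convention for updates to an erased compartment in the invisible case. You merely spell out the $n = \bullet$ subcase and the $\textit{map}$ identity in more detail than the paper does.
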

\begin{proof}
  Note that the erasure of a store erases each compartment in both the old and the updated store.
  Only the $\Sigma(\ell)$ compartment is changed, hence the remaining compartments are preserved by definition.
  It remains to show that the updated compartments are equivalent.
  Let the updated store $\Sigma(\ell)[n] \Coloneqq t$ be denoted by $\Sigma'$.
  If $\ell \sqsubseteq \ell_{A}$ then $\eps{\Sigma'(\ell)} = \textit{map}\ \erasure\ \Sigma'(\ell)$ cf.
  the definition of \erasure and hence the statement follows from properties of \textit{map}.
  If $\ell \not\sqsubseteq \ell_{A}$ then $\eps{\Sigma'(\ell)} \equiv \bullet$ and the statement follows as updating an erased cell yields no update.
\end{proof}
\begin{lemma}\label{lem:store-read-erasure}
  Let $\Sigma$ be a store and $n \in \mathbb{N} $.
  If $\ell \sqsubseteq \ell_{A}$ then $\eps{\Sigma(\ell)[n]} \equiv \eps{\Sigma}(\ell)[n]$.
\end{lemma}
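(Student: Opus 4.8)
The plan is to unfold the definition of \erasure\ on stores (Definition~\ref{def:conf-erasure}) in the case covered by the hypothesis, namely $\ell \flowsto \ell_{A}$. In that case the compartment is not collapsed, so $\eps{\Sigma}(\ell) = \eps{\Sigma(\ell)} = \textit{map}\ \erasure\ \Sigma(\ell)$, an honest list rather than $\bullet$. Hence $\eps{\Sigma}(\ell)[n] = \big(\textit{map}\ \erasure\ \Sigma(\ell)\big)[n]$, and it only remains to commute indexing past \textit{map}. Using the standard list identity $\big(\textit{map}\ f\ xs\big)[n] = f\big(xs[n]\big)$ with $f = \erasure$, we get $\big(\textit{map}\ \erasure\ \Sigma(\ell)\big)[n] = \erasure\big(\Sigma(\ell)[n]\big) = \eps{\Sigma(\ell)[n]}$, which is exactly the left-hand side; the resulting terms are literally identical, so structural equivalence $\equiv$ holds trivially. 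No induction is needed: the whole argument is a one-line calculation from the store-erasure definition together with the map–indexing law, in the same style as Lemma~\ref{lem:store-update-erasure}.

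The only point that needs care — and the step I expect to be the main obstacle — is the boundary condition, since $\big(\textit{map}\ f\ xs\big)[n] = f\big(xs[n]\big)$ presupposes that $n$ is a legal index into $\Sigma(\ell)$. This is always the case when the lemma is used in the noninterference development: a cell $\RefValue{\ell}{n}$ is produced only by \textsc{New$_{2}$}, which extends $\Sigma(\ell)$ precisely so that position $n$ exists, and references are well-typed, so every subsequent \textsc{Read$_{2}$} (and \textsc{Write$_{3}$}) touches an in-bounds cell. It therefore suffices either to carry the in-bounds condition as an implicit side hypothesis (consistent with how the other store lemmas are invoked) or to fix a total convention for out-of-bounds reads under which the map–indexing identity still holds definitionally; in both cases the calculation above goes through unchanged.
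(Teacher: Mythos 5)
Your proposal is correct and matches the paper's own argument, which simply appeals to the definition of \erasure\ on stores and the map--indexing property; you have just spelled out the one-line calculation the paper leaves implicit. Your extra remark about the in-bounds side condition is a reasonable point of care that the paper glosses over, but it does not change the approach.
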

\begin{proof}
  The statements follows from the definition of \erasure\ and properties of \textit{map}.
\end{proof}
\begin{proposition}[Distributivity of \erasure\ over $\evalsto$]\label{prop:single-dist-erasure}
  Let $c_{1}, c_{2} \in \Conf$.
  If $c_{1} \evalsto c_{2}$ then $\eps{c_{1}} \evalsto \eps{c_{2}}$.
\end{proposition}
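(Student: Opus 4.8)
The plan is to prove the statement by structural induction on the derivation of $c_1 \evalsto c_2$, with one case per rule of the monadic-term semantics in Figure~\ref{fig:TTsec-semantics}: \textsc{Lift}, \textsc{Bind$_2$}, \textsc{Plug}, \textsc{New$_2$}, \textsc{Write$_3$}, \textsc{Read$_2$}. In each case I write the type of the configuration term as $\aDIO{\ell}{\tau}$ and split on whether $\ell \sqsubseteq \ell_A$. When $\ell \not\sqsubseteq \ell_A$, Lemma~\ref{lem:dio-erasure} collapses the term components of both $c_1$ and $c_2$ to $\bullet$, so it remains only to show the erased stores agree, which is Lemma~\ref{lem:single-store-eq}, and then $\conf{\eps{\Sigma_1}}{\bullet} \evalsto \conf{\eps{\Sigma_1}}{\bullet}$ follows from \textsc{Hole} and \textsc{Lift}. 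Thus all the real work lives in the $\ell \sqsubseteq \ell_A$ branches, where $\erasure$ acts homomorphically on the outermost constructor.

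First I would dispatch the routine cases. For \textsc{Lift} with $\ell \sqsubseteq \ell_A$, the result is Lemma~\ref{lem:dist-pure} followed by \textsc{Lift}. For \textsc{Bind$_2$}, the induction hypothesis on the premise $\conf{\Sigma}{t_1} \evalsto \conf{\Sigma'}{t_1'}$ — valid since $t_1$ also carries label $\ell \sqsubseteq \ell_A$ by \textsc{T-Bind} and type preservation — gives $\conf{\eps{\Sigma}}{\eps{t_1}} \evalsto \conf{\eps{\Sigma'}}{\eps{t_1'}}$, and re-applying \textsc{Bind$_2$} reassembles the erased bind. For \textsc{New$_2$}, \textsc{Write$_3$}, \textsc{Read$_2$}, the affected or read compartment has a label $\ell_H$ with $\ell \sqsubseteq \ell_H$ by \textsc{T-NewRef}/\textsc{T-WriteRef}/\textsc{T-ReadRef}; when $\ell \not\sqsubseteq \ell_A$ transitivity gives $\ell_H \not\sqsubseteq \ell_A$, so the erased store is untouched and writing to or reading from the erased compartment is trivial; when $\ell \sqsubseteq \ell_A$, $\erasure$ commutes with $\RefValue{\ell_H}{n}$ and $\LabeledValue{v}$, Lemmas~\ref{lem:store-update-erasure} and~\ref{lem:store-read-erasure} let the same rule fire on the erased configuration, with one further sub-split on $\ell_H \sqsubseteq \ell_A$ for \textsc{Read$_2$} to track the $\readRef{\bullet}$ clause of $\erasure$.

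The main obstacle is the \textsc{Plug} case — indeed this is the whole reason $\erasure$ is context-sensitive and why $\plugHole{}$ and its rule \textsc{Plug$_\bullet$} were added. Here $\aplug{t} : \aDIO{\ell_L}{(\LabeledType{\ell_H}{S})}$ with $\ell_L \sqsubseteq \ell_H$, $t : \aDIO{\ell_H}{S}$, and the premise is a big-step reduction $\conf{\Sigma}{t} \bigstepto \conf{\Sigma'}{\DIOValue{t'}}$. I would split along the three clauses of $\eps{\aplug{t}}$. If $\ell_L \not\sqsubseteq \ell_A$ we are back in the first paragraph, using Lemma~\ref{lem:multi-store-eq} (applicable since $\ell_H \not\sqsubseteq \ell_A$ by transitivity) for store equivalence. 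If $\ell_L \sqsubseteq \ell_A$ but $\ell_H \not\sqsubseteq \ell_A$, then $\eps{\aplug{t}} = \plugHole{\bullet}$ by Lemma~\ref{lem:dio-erasure}, $\eps{c_2} = \conf{\eps{\Sigma'}}{\pure{(\LabeledValue{\bullet})}}$, Lemma~\ref{lem:multi-store-eq} gives $\eps{\Sigma} \equiv \eps{\Sigma'}$, and \textsc{Plug$_\bullet$} with \textsc{Lift} produces exactly the one step $\conf{\eps{\Sigma}}{\plugHole{\bullet}} \evalsto \conf{\eps{\Sigma}}{\pure{(\LabeledValue{\bullet})}}$, which is $\eps{c_2}$. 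The delicate subcase is $\ell_L \sqsubseteq \ell_A$ and $\ell_H \sqsubseteq \ell_A$, where $\eps{\aplug{t}} = \aplug{\eps{t}}$ and I must produce $\conf{\eps{\Sigma}}{\eps{t}} \bigstepto \conf{\eps{\Sigma'}}{\DIOValue{\eps{t'}}}$ before re-applying \textsc{Plug}: the big-step premise is a finite chain of $\evalsto$ steps, each a strictly smaller sub-derivation, so the induction hypothesis applies to each and their composition gives distributivity of $\erasure$ along the whole $\evalstostar$ — and because every intermediate term has type $\aDIO{\ell_H}{S}$ with $\ell_H \sqsubseteq \ell_A$ by type preservation, $\erasure$ stays homomorphic throughout and the endpoints erase to $\eps{t}$ and $\DIOValue{\eps{t'}}$ exactly as \textsc{Plug} requires (equivalently, one states and proves the $\evalstostar$-corollary first by induction on chain length and uses it here). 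All remaining bookkeeping — that $\erasure$ commutes with the constructors mentioned by each rule and that the store-lemma side conditions are met — is routine.
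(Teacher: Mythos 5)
Your proposal follows essentially the same route as the paper's proof: structural induction on the derivation of $c_{1} \evalsto c_{2}$, a uniform dispatch of the $\ell \not\sqsubseteq \ell_{A}$ branch via Lemmas~\ref{lem:dio-erasure} and~\ref{lem:single-store-eq} together with \textsc{Hole} and \textsc{Lift}, the store lemmas~\ref{lem:store-update-erasure} and~\ref{lem:store-read-erasure} for the reference rules, and the same three-way split in the \textsc{Plug} case, including the appeal to the induction hypothesis on the structurally smaller big-step premise (which the paper packages as the forward reference to Lemma~\ref{lem:bigstep-dist-erasure}). The argument is correct and matches the paper's in all essentials.
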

\begin{proof}
  Let $c_{1} = \conf{\Sigma}{t}$ and $c_{2} = \conf{\Sigma'}{t'}$.
  The proof goes by structural induction in the derivation of $c_{1} \evalsto c_{2}$.
  \begin{enumerate}[align = left]
  \item[\textsc{Lift}:] Note $\Sigma = \Sigma'$. Necessarily, $t \leadsto t'$ must hold. By Lemma \ref{lem:dist-pure} it holds $\eps{t} \leadsto \eps{t'}$ and by \textsc{Lift} and the definition of \erasure on configurations the statement follows.
  \end{enumerate}
  Note that in the remaining cases $t$ and $t'$ necessarily have type $\aDIO{\ell}{\tau}$ as we assume well-typed terms.
  If $\ell \not\sqsubseteq \ell_{A}$ the statement in all cases follows by the definition of \erasure, Lemma \ref{lem:dio-erasure}, Lemma \ref{lem:single-store-eq}, \textsc{Lift}, and \textsc{Hole}.
  Now, assume $\ell \sqsubseteq \ell_{A}$.
  \begin{enumerate}[align = right]
  \item[\textsc{Bind$_{2}$}:] Assume $\conf{\Sigma}{\bind{t_{1}}{t_{2}}} \evalsto \conf{\Sigma'}{\bind{t_{1}'}{t_{2}}}$.
    The statement follows by \textsc{Bind$_{2}$}, the induction hypothesis and the definition of \erasure.
  \item[\textsc{Plug}:] Assume $\conf{\Sigma}{\aplug{t}{}} \evalsto \conf{\Sigma'}{\pure{(\LabeledValue{t'})}}$.
    As we assume well-typed terms, $t$ has type $\aDIO{\ell'}{\tau'}$ for some $\ell'$ and $\tau'$.
    From \textsc{Plug} it holds that $\conf{\Sigma}{t} \bigstepto \conf{\Sigma'}{\DIOValue{t'}}$.
    If $\ell' \sqsubseteq \ell_{A}$, cf.
    Lemma \ref{lem:bigstep-dist-erasure} and that $t$ is structurally smaller than \aplug{t}, it follows $\eps{\conf{\Sigma}{t}} \bigstepto \eps{\conf{\Sigma'}{\DIOValue{t'}}}$, and from the definition of \erasure and \textsc{Plug} the statement holds.
    If $\ell' \not\sqsubseteq \ell_{A}$, it follows from Lemma \ref{lem:multi-store-eq} that $\eps{\Sigma} \equiv \eps{\Sigma'}$, and using \textsc{Lift}, the definition of \erasure and \textsc{Plug$_{\bullet}$} the statement follows.
  \item[\textsc{New$_{2}$}:] Assume $ \conf{\Sigma}{\newRef{\ell''}{(\LabeledValue{t})}} \evalsto \conf{\Sigma(\ell'')[n] \Coloneqq t}{\pure{(\RefValue{\ell''}{n})}}$.
    As we assume well-typed terms, terms $\newRef{\ell''}{(\LabeledValue{t})}$ and $\pure{\RefValue{\ell''}{n}}$ have type $\aDIO{\ell}{(\RefType{\ell''}{\tau})}$, and the term $\LabeledValue{t}$ has type $\LabeledType{\ell'}{\tau}$ where $\ell \sqsubseteq \ell' \sqsubseteq \ell''$.
    If $\ell'' \sqsubseteq \ell_{A}$ then by transitivity $\ell' \sqsubseteq \ell_{A}$.
    From the definition of \erasure\ it follows $\eps{\Sigma(\ell'')} = \textit{map}\ \erasure\ \Sigma(\ell'')$ and hence $|\eps{\Sigma(\ell'')}| = |\Sigma(\ell'')| = n$.
    From Lemma \ref{lem:store-update-erasure}, the definition of \erasure, and $\textsc{New}_{2}$ the statement follows.
    If $\ell'' \not\sqsubseteq \ell_{A}$ then then from the definition of \erasure\ it follows $\eps{\Sigma(\ell'')} \equiv \bullet$ and the size of an erased label segment is also erased, hence $|\eps{\Sigma(\ell'')}| = |\bullet| = \bullet$.
    Consider whether $\ell' \sqsubseteq \ell_{A}$.
    In both cases the statement follows from Lemma \ref{lem:store-update-erasure}, the definition of \erasure, and \textsc{New$_{2}$}.
  \item[\textsc{Write$_{3}$}:] Assume $$\conf{\Sigma}{\writeRef{(\RefValue{\ell'}{n})}{(\LabeledValue{t})}} \evalsto \conf{\Sigma(\ell')[n] \Coloneqq t}{\pure{()}}.$$
    As we assume well-typed terms, $\RefValue{\ell'}{n}$ has type $\RefType{\ell'}{\tau}$ and $\LabeledValue{t}$ has type $\LabeledType{\ell''}{\tau}$ for some $\ell'$, $\ell''$, and $\tau$ where $\ell'' \sqsubseteq \ell'$.
    If $\ell' \sqsubseteq \ell_{A}$ then by transitivity $\ell'' \sqsubseteq \ell_{A}$ and hence
    \begin{align*}
      &\writeRef{(\RefValue{\ell'}{n})}{(\LabeledValue{\eps{t}})} \\
      &\equiv \writeRef{\eps{\RefValue{\ell'}{n}}}{\eps{\LabeledValue{t}}}
    \end{align*}
    by definition of \erasure. The statement now follows from Lemma \ref{lem:store-update-erasure}, \textsc{Write$_{3}$}, and the definition of \erasure. If $\ell' \not\sqsubseteq \ell_{A}$ then consider whether $\ell'' \sqsubseteq \ell_{A}$. In either case, the statement follows from Lemma \ref{lem:store-update-erasure}, \textsc{Write$_{3}$}, and the definition of \erasure.
  \item[\textsc{Read$_{2}$}:] Assume $\conf{\Sigma}{\readRef{(\RefValue{\ell}{n})}} \evalsto \conf{\Sigma}{\pure{\big(\LabeledValue{\Sigma(\ell)[n]}\big)}}$.
    As we assume well-typed terms, $\RefValue{\ell}{n}$ has type $\RefType{\ell}{\tau}$ and $\pure{(\LabeledValue{\Sigma(\ell)[n]})}$ has type $\aDIO{\ell'}{(\LabeledType{\ell}{\tau})}$ for some $\ell, \ell'$ and $\tau$ where $\ell' \sqsubseteq \ell$.
    If $\ell \sqsubseteq \ell_{A}$ then by transitivity $\ell' \sqsubseteq \ell_{A}$ and the statement follows from Lemma \ref{lem:store-read-erasure}, $\textsc{Read}_{2}$, and the definition of \erasure.
    If $\ell \not\sqsubseteq \ell_{A}$ the statement follows by the fact that reading from an erased compartment yields $\bullet$, the definition of \erasure, and $\textsc{Read}_{2}$.
  \end{enumerate}
\end{proof}
\begin{lemma}[Distributivity of \erasure\ over $\evalstostar$]\label{lem:multi-dist-erasure}
  Let $c_{1}, c_{2} \in \Conf$.
  If $c_{1} \evalstostar c_{2}$ then $\eps{c_{1}} \evalstostar \eps{c_{2}}$.
\end{lemma}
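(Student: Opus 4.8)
The plan is to prove Lemma~\ref{lem:multi-dist-erasure} by a routine induction on the number of reduction steps in the derivation of $c_{1} \evalstostar c_{2}$, using Proposition~\ref{prop:single-dist-erasure} as the engine for the inductive step. Recall that $\evalstostar$ is defined as the reflexive transitive closure of $\evalsto$, so every such derivation decomposes into a finite sequence of single steps.

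First I would treat the base case: a zero-step derivation, where $c_{1} = c_{2}$. Then $\eps{c_{1}} \equiv \eps{c_{2}}$, and $\eps{c_{1}} \evalstostar \eps{c_{2}}$ holds by reflexivity of $\evalstostar$. For the inductive step, suppose the derivation has length $n+1$, so it factors as $c_{1} \evalsto c'$ followed by $c' \evalstostar c_{2}$ in $n$ steps. By Proposition~\ref{prop:single-dist-erasure} applied to $c_{1} \evalsto c'$ we obtain $\eps{c_{1}} \evalsto \eps{c'}$, and by the induction hypothesis applied to $c' \evalstostar c_{2}$ we obtain $\eps{c'} \evalstostar \eps{c_{2}}$. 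Prepending the single step to the sequence yields $\eps{c_{1}} \evalstostar \eps{c_{2}}$, as required.

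There is essentially no genuine obstacle here, since all the delicate reasoning—in particular the context-sensitive handling of $\aplug{t}$ via the $\plugHole{\hspace{-0.2em}}$ construct and the two-steps erasure semantics of \TTsecbullet—is already discharged inside Proposition~\ref{prop:single-dist-erasure}. The only point worth a remark is that the target of the reduction lives in $\Conf_{\bullet}$ rather than $\Conf$, so the closure $\evalstostar$ in the conclusion is taken with respect to the extended relation that includes the \textsc{Hole} and \textsc{Plug}$_{\bullet}$ rules from Figure~\ref{fig:ttsec-hole-sem}; Proposition~\ref{prop:single-dist-erasure} is already stated for this extended relation, so the composition in the inductive step is well-typed. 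Thus the proof is a one-line induction and I would present it as such.
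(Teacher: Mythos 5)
Your proof is correct and takes exactly the same route as the paper, which simply states that the lemma follows by repeated application of Proposition~\ref{prop:single-dist-erasure}; your write-up just spells out the induction on the length of the reduction sequence that this entails. The remark about the target living in $\Conf_{\bullet}$ and the closure being taken over the extended relation is a reasonable observation that the paper leaves implicit.
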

\begin{proof}
  The statement follows from repeated applications of Proposition \ref{prop:single-dist-erasure}.
\end{proof}
\begin{lemma}[Distributivity of \erasure\ over $\bigstepto$]\label{lem:bigstep-dist-erasure}
  Let $c_{1}, c_{2} \in \Conf$.
  If $c_{1} \bigstepto c_{2}$ then $\eps{c_{1}} \bigstepto \eps{c_{2}}$.
\end{lemma}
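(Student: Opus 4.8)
The plan is to reduce this to the multi-step distributivity lemma (Lemma~\ref{lem:multi-dist-erasure}) together with a short argument that erasure carries terminal configurations to terminal configurations. First I would unfold the definition of $\bigstepto$: from $c_1 \bigstepto c_2$ we obtain a value $v$ and a store $\Sigma_2$ with $c_2 = \conf{\Sigma_2}{v}$ and $c_1 \evalstostar c_2$. Applying Lemma~\ref{lem:multi-dist-erasure} immediately gives $\eps{c_1} \evalstostar \eps{c_2}$, so by the definition of $\bigstepto$ the only remaining obligation is to show that the term component of $\eps{c_2}$ is again a value, i.e.\ that the erased reduction does not get stuck before reaching a result.

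To discharge that, note that since $c_1$ is reduced under the monadic semantics its term has some type $\aDIO{\ell}{\tau}$, and by type preservation (conjectured for \TTsec) so does $v$. I would then case split on whether $\ell \sqsubseteq \ell_A$. If $\ell \sqsubseteq \ell_A$, then $\eps{c_2} = \conf{\eps{\Sigma_2}}{\eps{v}}$, and by case analysis on $v$ — the only well-typed terminal term of $\DIO$-type is $\DIOValue{v'}$ — we get $\eps{v} = \DIOValue{\eps{v'}}$, which is again a value, so $\eps{c_1} \bigstepto \eps{c_2}$ as required. If $\ell \not\sqsubseteq \ell_A$, then by Lemma~\ref{lem:dio-erasure} both $\eps{c_1}$ and $\eps{c_2}$ have term component $\bullet$, and by Lemma~\ref{lem:multi-store-eq} their stores are $\ell_A$-equivalent, so $\eps{c_1} \equiv \eps{c_2}$ and the witnessing reduction is the empty one.

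The point needing care — the main obstacle — is precisely this last case: under the literal reading of the semantics $\bullet \leadsto \bullet$ by rule \textsc{Hole}, so $\conf{\Sigma}{\bullet}$ has no normal form and, strictly, does not big-step to anything. One therefore either (i) observes that in every invocation of this lemma (notably the \textsc{Plug} case of Proposition~\ref{prop:single-dist-erasure}) the relevant computation's label is below $\ell_A$, so only the first case actually occurs, or (ii) adopts in $\TTsecbullet$ the convention that $\bullet$ counts as terminal for the purpose of $\bigstepto$; this is harmless since, as noted after Definition~\ref{def:conf-erasure}, from the attacker's viewpoint $\bullet$ is exactly a collapsed value. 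With that convention the case analysis above closes the proof, and no fresh structural induction is needed beyond what Lemma~\ref{lem:multi-dist-erasure} already supplies.
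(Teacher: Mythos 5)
Your proposal is correct and takes the same route as the paper, whose entire recorded proof is the one-line appeal to Lemma~\ref{lem:multi-dist-erasure}. Your additional check that the erased terminal term is still a value --- in particular the observation that the $\ell \not\sqsubseteq \ell_{A}$ case degenerates because $\bullet$ reduces to itself and so is not a normal form, and that the lemma is in fact only invoked (in the \textsc{Plug} case of Proposition~\ref{prop:single-dist-erasure}) under the guard $\ell' \sqsubseteq \ell_{A}$ --- is more careful than the paper, which silently elides this obligation.
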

\begin{proof}
  The statement follows directly from Lemma \ref{lem:multi-dist-erasure}.
\end{proof}

\begin{lemma}[Determinacy of pure reductions]\label{lem:pure-det}
  Let $t_{1}, t_{2}, t_{3} \in \Term$.
  If $t_{1} \leadsto t_{2}$ and $t_{1} \leadsto t_{3}$ then $t_{2} \equiv t_{3}$.
\end{lemma}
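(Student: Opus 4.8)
The plan is to prove the statement by structural induction on the derivation of $t_{1} \leadsto t_{2}$; equivalently, one proceeds by case analysis on the outermost syntactic form of $t_{1}$. Since $\leadsto$ is defined solely by the pure rules, no mutual induction with $\evalsto$ is needed. The key observation is that the reduction rules of Figure~\ref{fig:TTsec-semantics}, together with the extensions \textsc{Hole} and \textsc{Plug$_{\bullet}$} of Figure~\ref{fig:ttsec-hole-sem}, are \emph{syntax-directed}: for every $t_{1}$ there is at most one rule whose conclusion matches the shape of $t_{1}$ and whose side conditions are satisfied. Granting this, each case is routine. If $t_{1} \leadsto t_{2}$ is derived by a congruence rule (\textsc{App$_{1}$}, \textsc{App$_{2}$}, \textsc{If$_{1}$}, \textsc{Pure$_{1}$}, \textsc{Label$_{1}$}, \textsc{Unlabel$_{1}$}, \textsc{New$_{1}$}, \textsc{Write$_{1}$}, \textsc{Write$_{2}$}, \textsc{Read$_{1}$}, \textsc{DIO$_{i}$}, \textsc{Labeled$_{i}$}, \textsc{Ref$_{i}$}, \textsc{Forall$_{i}$}), then $t_{1} \leadsto t_{3}$ must be an instance of the \emph{same} rule acting on the \emph{same} immediate subterm, and the induction hypothesis applied to that subterm forces its two one-step reducts to coincide, whence $t_{2} \equiv t_{3}$. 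If $t_{1} \leadsto t_{2}$ is derived by an axiom (\textsc{Beta}, \textsc{If$_{2}$}, \textsc{If$_{3}$}, \textsc{Bind$_{1}$}, \textsc{Pure$_{2}$}, \textsc{Label$_{2}$}, \textsc{Unlabel$_{2}$}, \textsc{Hole}, \textsc{Plug$_{\bullet}$}), the reduct is uniquely determined by $t_{1}$ itself, so $t_{2} \equiv t_{3}$ holds immediately.

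The substance of the proof therefore lies in establishing the ``at most one applicable rule'' property, and this is where care is needed: one must rule out overlaps between a congruence rule and the corresponding axiom on the same head constructor. These overlaps are excluded by the side conditions, which collectively encode a left-to-right evaluation order. For instance \textsc{App$_{2}$} requires the operator to be a value, which excludes \textsc{App$_{1}$}, and if additionally the operand is a value then the operator is forced to be a $\lambda$-abstraction so that only \textsc{Beta} fires; \textsc{If$_{1}$} requires the scrutinee to be reducible whereas \textsc{If$_{2}$} and \textsc{If$_{3}$} require it to be \True\ or \False; \textsc{Pure$_{1}$}, \textsc{Label$_{1}$}, \textsc{Unlabel$_{1}$} require reducible arguments whereas \textsc{Pure$_{2}$}, \textsc{Label$_{2}$}, \textsc{Unlabel$_{2}$} require a value (respectively a term of the form $\LabeledValue{v}$); the second congruence rule for each of $\aDIO{\ell}{t}$, $\LabeledType{\ell}{t}$, $\RefType{\ell}{t}$, and $\forall x : \tau . t$ fires only once its first component is a value, excluding the first; and $\bullet$ and $\plugHole{t}$ are each the subject of exactly one rule. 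The convention recorded after Definition~\ref{def:monad-semantic} --- that a term is a value precisely when it admits no reduction --- is what makes ``is a value'' and ``reduces'' genuinely mutually exclusive, so that these side conditions do their job. One also uses that the run-time constructors $\DIOValue{t}$, $\LabeledValue{t}$, and $\RefValue{\ell}{n}$ have no reduction rule acting beneath them, hence are values regardless of their contents; this is exactly what \textsc{Bind$_{1}$} relies on when it pattern-matches on $\DIOValue{t_{1}}$.

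I do not expect any individual case to be hard; the lemma is essentially bookkeeping. The one mild obstacle is ensuring the above catalogue of rule pairs is exhaustive --- that no two rules with distinct conclusions can both fire on a single term --- which I would discharge by walking through the syntactic categories of Figures~\ref{fig:TT} and~\ref{fig:TTsec} one constructor at a time. Note that the statement concerns only the pure reduction relation $\leadsto$, so the monadic relation $\evalsto$ plays no role here.
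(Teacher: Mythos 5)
Your approach is the same as the paper's: the paper's entire proof is the single sentence that the statement follows by structural induction on the two derivations, and your case analysis is precisely the elaboration that sentence leaves implicit. The congruence/axiom dichotomy, the appeal to the convention that values are exactly the non-reducible terms, and the observation that the run-time constructors $\DIOValue{t}$, $\LabeledValue{t}$, and $\RefValue{\ell}{n}$ have no rules reducing beneath them are all correct and are what the paper implicitly relies on.

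One entry in your catalogue does not match the rules as printed. You assert that the second congruence rule for $\forall x : \tau . t$ ``fires only once its first component is a value.'' For $\aDIO{\ell}{t}$, $\LabeledType{\ell}{t}$, and $\RefType{\ell}{t}$ this is true: the conclusions of \textsc{DIO}$_2$, \textsc{Labeled}$_2$, and \textsc{Ref}$_2$ are written with a value $v$ in the first position. But \textsc{Forall}$_2$ in Figure~\ref{fig:TTsec-semantics} is printed as $\forall x : \tau . t \leadsto \forall x : \tau . t'$ with no value restriction on $\tau$; taken literally, a term such as $\forall x : (\ifThenElse{\True}{\Int}{\Bool})\, .\, (\ifThenElse{\True}{x}{x})$ reduces by \textsc{Forall}$_1$ and by \textsc{Forall}$_2$ to two distinct terms, and determinacy fails on that constructor. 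This is almost certainly a typo in the paper---the rule should have a value in the domain position, by analogy with the other type formers---but your proof should either say it is assuming the repaired rule or flag the discrepancy, rather than attributing to \textsc{Forall}$_2$ a side condition it does not carry.
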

\begin{proof}
  The proof goes by structural induction in the derivation of $t_{1} \leadsto t_{2}$ and $t_{1} \leadsto t_{3}$.
\end{proof}

\begin{proposition}[Single step determinacy]\label{prop:single-det}
  Let $c_{1}, c_{2}, c_{3} \in \Conf$.
  If $c_{1} \evalsto c_{2}$ and $c_{1} \evalsto c_{3}$ then $c_{2} \equiv c_{3}$.
\end{proposition}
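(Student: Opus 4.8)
The plan is to prove the proposition by structural induction on the derivation of $c_1 \evalsto c_2$, with a case analysis on its last rule and, within each case, on the possible last rule of an arbitrary derivation of $c_1 \evalsto c_3$. Writing $c_1 = \conf{\Sigma}{t}$, the preliminary observation that organizes everything is that the six rules defining $\evalsto$ are pairwise non-overlapping on $c_1$: \textsc{Lift} fires precisely when $t$ admits a pure reduction $t \leadsto t'$, whereas \textsc{Bind}$_2$, \textsc{Plug}, \textsc{New}$_2$, \textsc{Write}$_3$, and \textsc{Read}$_2$ each require $t$ to have one of the shapes $\bind{t_1}{t_2}$ with $t_1$ reducible, $\aplug{t'}$, $\newRef{\ell}{(\LabeledValue v)}$, $\writeRef{(\RefValue{\ell}{n})}{(\LabeledValue v)}$, or $\readRef{(\RefValue{\ell}{n})}$ --- shapes for which no pure rule applies. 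This last point uses that there is no pure rule for the \texttt{plug} construct, and that the constructors $\DIOValue{\cdot}$, $\LabeledValue{\cdot}$, and $\RefValue{\cdot}{\cdot}$ are irreducible, so in particular \textsc{Bind}$_1$ cannot fire when $t_1$ is reducible and \textsc{New}$_1$ cannot fire when its argument is of the form $\LabeledValue{v}$. Hence at most one rule applies to a given $c_1$, and it suffices to check that the applicable rule determines the successor uniquely.

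Three of the cases are immediate. For \textsc{Lift} we have $t \leadsto t'$ and, the rule being forced, any $c_1 \evalsto c_3$ is again by \textsc{Lift} with $t \leadsto t''$; Lemma~\ref{lem:pure-det} gives $t' \equiv t''$ and hence $c_2 \equiv c_3$. For \textsc{Bind}$_2$ the successor is $\conf{\Sigma'}{\bind{t_1'}{t_2}}$ obtained from the subderivation $\conf{\Sigma}{t_1} \evalsto \conf{\Sigma'}{t_1'}$, and $c_3 = \conf{\Sigma''}{\bind{t_1''}{t_2}}$ with $\conf{\Sigma}{t_1} \evalsto \conf{\Sigma''}{t_1''}$; the induction hypothesis applied to that subderivation yields $\conf{\Sigma'}{t_1'} \equiv \conf{\Sigma''}{t_1''}$. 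For \textsc{New}$_2$, \textsc{Write}$_3$, and \textsc{Read}$_2$ the resulting configuration is a function only of $\Sigma$, the label, and the value arguments appearing in the redex --- in \textsc{New}$_2$ the allocated address is forced to be $|\Sigma(\ell)|$ --- so it is literally unique once the rule is fixed.

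The remaining case, \textsc{Plug}, is where the work lies and is the main obstacle. Here $c_1 = \conf{\Sigma}{\aplug t}$, $c_2 = \conf{\Sigma'}{\pure{(\LabeledValue{t'})}}$, the premise is the big step $\conf{\Sigma}{t} \bigstepto \conf{\Sigma'}{\DIOValue{t'}}$, and, \textsc{Plug} being the only rule for $\aplug t$, any $c_1 \evalsto c_3$ is by \textsc{Plug} with $\conf{\Sigma}{t} \bigstepto \conf{\Sigma''}{\DIOValue{t''}}$. The crucial point is bookkeeping: $\bigstepto$ unfolds to a finite $\evalstostar$ sequence whose constituent $\evalsto$ steps are proper subderivations of the derivation of $c_1 \evalsto c_2$, so the induction hypothesis supplies single-step determinacy for each of them, with the configurations along that sequence as sources. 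Given this, a short auxiliary argument --- induction on the length of the first reduction sequence, comparing it step by step with the second, using at each step the single-step determinacy provided by the induction hypothesis together with the fact that the normal form $\DIOValue{t'}$ admits no reduction --- shows that the reduction of $\conf{\Sigma}{t}$ to a normal form is unique, whence $\conf{\Sigma'}{\DIOValue{t'}} \equiv \conf{\Sigma''}{\DIOValue{t''}}$ and therefore $c_2 \equiv c_3$. The one delicate point is precisely this: that the $\evalsto$ steps hidden inside \textsc{Plug}'s big-step premise genuinely count as subderivations, so the induction remains well-founded despite that premise not being a single $\evalsto$ step; everything else reduces to the case analysis above and to Lemma~\ref{lem:pure-det}.
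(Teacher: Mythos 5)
Your proof is correct and follows essentially the same route as the paper's: a case analysis on the (syntactically determined) last rule, Lemma~\ref{lem:pure-det} for \textsc{Lift}, the induction hypothesis for \textsc{Bind}$_2$, and determinacy of the big-step premise for \textsc{Plug}, which the paper factors out as Lemma~\ref{lem:bigstep-det}. Your handling of the \textsc{Plug} case --- justifying the recursion via the $\evalsto$-subderivations hidden in the big-step premise and using that $\DIOValue{t'}$ is irreducible to align the two reduction sequences --- is a more careful spelling-out of the paper's terse appeal to ``$t_{1}$ is structurally smaller than $\aplug{t_{1}}$'', not a different argument.
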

\begin{proof}
  The proof goes by structural induction in the derivation of $c_{1} \evalsto c_{2}$ and $c_{1} \evalsto c_{3}$.
  Note that both $c_{1} \evalsto c_{2}$ and $c_{1} \evalsto c_{3}$ have to have been derived from the same inference rule, syntactically decidable from $c_{1}$.

  \begin{enumerate}[align = left]
  \item[\textsc{Lift}:] The statement follows by Lemma \ref{lem:pure-det}.
  \item[\textsc{Plug}:] Assume $\conf{\Sigma_{1}}{\aplug{t_{1}}} \evalsto \conf{\Sigma_{2}}{\pure{(\LabeledValue{t_{2}})}}$ and $\conf{\Sigma_{1}}{\aplug{t_{1}}} \evalsto \conf{\Sigma_{3}}{\pure{(\LabeledValue{t_{3}})}}$.
    The statement follows from Lemma \ref{lem:bigstep-det} as $t_{1}$ is structurally smaller than $\aplug t_{1}$.
  \end{enumerate}
  In the remaining cases the statement follows by standard structural induction.
\end{proof}
\begin{lemma}[Big step determinacy]\label{lem:bigstep-det}
  Let $c_{1}, c_{2}, c_{3} \in \Conf$.
  If $c_{1} \bigstepto c_{2}$ and $c_{1} \bigstepto c_{3}$ then $c_{2} \equiv c_{3}$.
\end{lemma}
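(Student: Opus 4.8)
The plan is to prove this the way one proves determinacy of evaluation in any deterministic small-step calculus: by induction on the length of a reduction sequence, using single-step determinacy (Proposition~\ref{prop:single-det}) as the engine. Recall that $c_{1} \bigstepto c_{2}$ unfolds to $c_{1} \evalstostar c_{2}$ together with the side condition that the term component of $c_{2}$ is a value. So I would assume $c_{1} \bigstepto c_{2}$ via a reduction sequence of length $n$ and $c_{1} \bigstepto c_{3}$, and induct on $n$.

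Before the induction, I would record the immediate fact that a value configuration is $\evalsto$-irreducible. Inspecting the monadic rules of Figure~\ref{fig:TTsec-semantics}: \textsc{Lift} needs a pure redex $t \leadsto t'$, which is impossible when $t$ is a value; every other rule has a left-hand side headed by \binds, \texttt{plug}, \texttt{new}, \texttt{write}, or \texttt{read}, none of which is a value. (Note that we work in \TTsec, where there is no \textsc{Hole} rule, so genuinely nothing re-steps a value.) Consequently, in the base case $n = 0$ we have $c_{1} = c_{2}$ with $c_{2}$ a value configuration, so the sequence $c_{1} \evalstostar c_{3}$ must also be empty, giving $c_{3} = c_{1} = c_{2}$ and hence $c_{2} \equiv c_{3}$.

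For the inductive step, write $c_{1} \evalsto c_{1}' \evalstostar c_{2}$ with the tail of length $n$. Since $c_{1}$ reduces, it is not a value configuration, so $c_{1} \evalstostar c_{3}$ cannot be empty either; write it as $c_{1} \evalsto c_{1}'' \evalstostar c_{3}$. Proposition~\ref{prop:single-det} gives $c_{1}' \equiv c_{1}''$, hence $c_{1}' \bigstepto c_{2}$ (via the length-$n$ tail) and $c_{1}' \bigstepto c_{3}$; applying the induction hypothesis at $c_{1}'$ then yields $c_{2} \equiv c_{3}$.

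The one step I expect to need care, rather than routine bookkeeping, is pinning down the irreducibility-of-values claim so that the base case cannot be broken by a configuration that is simultaneously a value and able to take a step — this is precisely where it matters that we are in \TTsec\ (no $\bullet$, no \textsc{Hole}) and that $\bigstepto$ carries the explicit ``$v$ is a value'' requirement. Everything after that is the standard determinacy argument driven by Proposition~\ref{prop:single-det}, and because $\evalsto$ is outright deterministic it does not require any confluence/Newman-style detour.
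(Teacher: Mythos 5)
Your proof is correct and is essentially the paper's argument spelled out in full: the paper's own proof is the single line that the lemma ``follows from repeated applications of Proposition~\ref{prop:single-det},'' which is precisely the induction on the length of the $\evalstostar$-sequence that you carry out. The extra care you take over the irreducibility of value configurations in the base case is a detail the paper leaves implicit rather than a different approach.
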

\begin{proof}
  The statement follows from repeated applications of Proposition \ref{prop:single-det}.
\end{proof}

\begin{proposition}[Single step $\approx_{\ell_{A}}$ preservation]\label{prop:single-noninterference}
  Let $c_{1}, c_{1}', c_{2}, c_{2}' \in \Conf$.
  If $c_{1} \approx_{\ell_{A}} c_{2}$, $c_{1} \evalsto c_{1}'$, and $c_{2} \evalsto c_{2}'$ then $c_{1}' \approx_{\ell_{A}} c_{2}'$.
\end{proposition}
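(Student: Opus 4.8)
The plan is to exploit the two-steps erasure machinery that has already been set up, in particular distributivity of \erasure\ over single steps (Proposition~\ref{prop:single-dist-erasure}) and single-step determinacy (Proposition~\ref{prop:single-det}), so that the simulation argument reduces to a short diagram chase rather than a case analysis on the reduction.

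First I would unfold the hypothesis $c_{1} \approx_{\ell_{A}} c_{2}$ using Definition~\ref{def:erasure-eq}, obtaining $\eps{c_{1}} \equiv \eps{c_{2}}$. Next, I would apply Proposition~\ref{prop:single-dist-erasure} to each of the two given reductions to get $\eps{c_{1}} \evalsto \eps{c_{1}'}$ and $\eps{c_{2}} \evalsto \eps{c_{2}'}$. Rewriting the second of these along the structural equality $\eps{c_{1}} \equiv \eps{c_{2}}$ yields two reductions out of one and the same configuration, namely $\eps{c_{1}} \evalsto \eps{c_{1}'}$ and $\eps{c_{1}} \evalsto \eps{c_{2}'}$. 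Then single-step determinacy (Proposition~\ref{prop:single-det}) forces $\eps{c_{1}'} \equiv \eps{c_{2}'}$, which is exactly $c_{1}' \approx_{\ell_{A}} c_{2}'$ by Definition~\ref{def:erasure-eq}, completing the proof.

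The one point that deserves care — and the only real obstacle — is that the erased configurations live in the extended calculus $\TTsecbullet$ rather than in \TTsec: after erasure a step may have to be taken via the auxiliary rules \textsc{Hole} and \textsc{Plug}$_{\bullet}$ (e.g.\ when $\eps{c_{1}}$ is itself $\bullet$, or when an \aplug{\hspace{-0.2em}} has collapsed to a \plugHole{\hspace{-0.2em}}). Accordingly, Proposition~\ref{prop:single-dist-erasure} must be read as landing in $\TTsecbullet$, and determinacy (Proposition~\ref{prop:single-det}) must be invoked for $\TTsecbullet$ rather than only for \TTsec. I would remark that the extra rules preserve determinacy: \textsc{Hole} has $\bullet$ as its unique redex and reduces it only to $\bullet$, and \textsc{Plug}$_{\bullet}$ acts on the distinct syntactic form \plugHole{t} with a unique result, so the structural induction underlying Proposition~\ref{prop:single-det} (bottoming out in Lemma~\ref{lem:pure-det}, and using Lemma~\ref{lem:bigstep-det} for the \textsc{Plug}/\textsc{Plug}$_{\bullet}$ cases) carries over unchanged. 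Modulo this observation, the statement is a two-line consequence of the preceding results.
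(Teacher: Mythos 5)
Your proof is correct and follows essentially the same route as the paper's: apply Proposition~\ref{prop:single-dist-erasure} to both reductions, rewrite along $\eps{c_{1}} \equiv \eps{c_{2}}$, and conclude by single-step determinacy (Proposition~\ref{prop:single-det}). Your additional remark that determinacy must be read in \TTsecbullet (covering \textsc{Hole} and \textsc{Plug}$_{\bullet}$) is a point the paper leaves implicit, but it does not change the argument.
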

\begin{proof}
  Proposition \ref{prop:single-dist-erasure} states that $\eps{c_{1}}\evalsto \eps{c_{1}'}$ and $\eps{c_{2}}\evalsto \eps{c_{2}'}$.
  From Definition \ref{def:erasure-eq} it is known that $\eps{c_{1}} \equiv \eps{c_{2}}$ and from Proposition \ref{prop:single-det} it follows that $\eps{c_{1}'} \equiv \eps{c_{2}'}$.
  Hence $c_{1}' \approx_{{\ell_{A}}} c_{2}'$.
\end{proof}

\begin{theorem}[Progress-insensitive noninterference]\label{thm:noninterference}
  Let $c_{1}, c_{1}', c_{2}, c_{2}' \in \Conf$.
  If $c_{1} \approx_{\ell_{A}} c_{2}$, $c_{1} \bigstepto c_{1}'$, and $c_{2} \bigstepto c_{2}'$ then $c_{1}' \approx_{\ell_{A}} c_{2}'$.
\end{theorem}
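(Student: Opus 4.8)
The plan is to mirror the single-step argument of Proposition~\ref{prop:single-noninterference} one level up, replacing its two ingredients by their big-step counterparts: distributivity of $\erasure$ over $\bigstepto$ (Lemma~\ref{lem:bigstep-dist-erasure}) and determinacy of $\bigstepto$ (Lemma~\ref{lem:bigstep-det}). The resulting proof is a short diagram chase on the erased configurations.

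Concretely, I would proceed as follows. First unfold $c_{1} \approx_{\ell_{A}} c_{2}$ by Definition~\ref{def:erasure-eq} to obtain $\eps{c_{1}} \equiv \eps{c_{2}}$. Next apply Lemma~\ref{lem:bigstep-dist-erasure} to $c_{1} \bigstepto c_{1}'$ and to $c_{2} \bigstepto c_{2}'$, yielding $\eps{c_{1}} \bigstepto \eps{c_{1}'}$ and $\eps{c_{2}} \bigstepto \eps{c_{2}'}$. Rewriting the second along $\eps{c_{1}} \equiv \eps{c_{2}}$ gives $\eps{c_{1}} \bigstepto \eps{c_{2}'}$, so $\eps{c_{1}'}$ and $\eps{c_{2}'}$ are both reachable by big-step evaluation from the single configuration $\eps{c_{1}}$. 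Determinacy (Lemma~\ref{lem:bigstep-det}, which extends to $\TTsecbullet$ since the added rules \textsc{Hole} and \textsc{Plug}$_{\bullet}$ are themselves deterministic and non-overlapping with the other rules) then forces $\eps{c_{1}'} \equiv \eps{c_{2}'}$, i.e.\ $c_{1}' \approx_{\ell_{A}} c_{2}'$ by Definition~\ref{def:erasure-eq}. An equivalent route, matching the phrasing of the main text, is to peel off single steps and lift Proposition~\ref{prop:single-noninterference} along the two reduction sequences; this needs the two erased runs to proceed in lockstep, which is exactly what single-step determinacy (Proposition~\ref{prop:single-det}) supplies, so the two arguments coincide up to packaging.

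The genuine difficulty does not lie in this theorem — which is essentially a two-line corollary — but in the supporting result it rests on, namely Proposition~\ref{prop:single-dist-erasure} (and hence Lemma~\ref{lem:bigstep-dist-erasure}). There the delicate case is $\aplug{t}$ of type $\aDIO{\ell}{(\LabeledType{\ell'}{\tau})}$ with $\ell \sqsubseteq \ell_{A}$ but $\ell' \not\sqsubseteq \ell_{A}$: collapsing the nested visible computation to $\bullet$ would break the simulation because $\bullet$ has no normal form, so the proof instead uses the two-steps-erasure device of a separate term $\plugHole{}$ with its own one-step rule, setting $\eps{\aplug{t}} \triangleq \plugHole{\eps{t}}$. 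Once that machinery, together with the store-erasure lemmas (Lemmas~\ref{lem:single-store-eq}, \ref{lem:multi-store-eq}, \ref{lem:store-update-erasure}, \ref{lem:store-read-erasure}) and Lemma~\ref{lem:eps-subst}, is in hand, the noninterference theorem follows exactly as sketched above, with no further case analysis.
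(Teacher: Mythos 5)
Your proposal is correct and matches the paper's argument: the paper proves the theorem in one line by repeated application of Proposition~\ref{prop:single-noninterference}, and your primary route via Lemma~\ref{lem:bigstep-dist-erasure} and Lemma~\ref{lem:bigstep-det} is the same diagram chase with the iteration already packaged into those big-step lemmas, as you yourself observe. Your remarks on where the real difficulty lies (the $\plugHole{}$ device in Proposition~\ref{prop:single-dist-erasure}) accurately reflect the paper's development.
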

\begin{proof}
  The statement follows from repeated application of Proposition \ref{prop:single-noninterference}.
\end{proof}

\section{\DepSec} \label{app:core-lib}

\begin{Verbatim}[commandchars=\\\{\}, numbers = left, frame=single, label=DIO.idr,]
\IdrisKeyword{module} DepSec.DIO

% access public export

||| Security Monad
||| @ l security label of wrapped value
||| @ valueType type of wrapped value
\IdrisKeyword{data} \IdrisType{DIO} : \IdrisBound{\IdrisBound{l}}
        -> (\IdrisBound{valueType} : \IdrisType{Type})
        -> \IdrisType{Type} \IdrisKeyword{where}
  ||| TCB
  \IdrisData{MkDIO} : \IdrisFunction{\IdrisFunction{IO}} \IdrisBound{\IdrisBound{valueType}} -> \IdrisType{\IdrisType{DIO}} \IdrisBound{\IdrisBound{l}} \IdrisBound{\IdrisBound{valueType}}

||| Executes secure computation
||| TCB
||| @ dio secure computation
\IdrisFunction{run} : (\IdrisBound{dio} : \IdrisType{\IdrisType{DIO}} \IdrisBound{\IdrisBound{l}} \IdrisBound{\IdrisBound{a}}) -> \IdrisFunction{\IdrisFunction{IO}} \IdrisBound{\IdrisBound{a}}
\IdrisFunction{\IdrisFunction{run}} (\IdrisData{\IdrisData{MkDIO}} \IdrisBound{m}) = \IdrisBound{\IdrisBound{m}}

||| Lifts arbitrary IO monad into security monad
||| TCB
||| @ io computation
\IdrisFunction{lift} : (\IdrisBound{io} : \IdrisFunction{\IdrisFunction{IO}} \IdrisBound{\IdrisBound{a}}) -> \IdrisType{\IdrisType{DIO}} \IdrisBound{\IdrisBound{l}} \IdrisBound{\IdrisBound{a}}
\IdrisFunction{\IdrisFunction{lift}} = \IdrisData{\IdrisData{MkDIO}}

\IdrisType{\IdrisType{\IdrisFunction{\IdrisFunction{\IdrisFunction{\IdrisFunction{\IdrisData{\IdrisData{\IdrisBound{\IdrisBound{\IdrisBound{\IdrisBound{\IdrisBound{\IdrisBound{Functor (\IdrisType{\IdrisType{\IdrisType{\IdrisType{\IdrisType{\IdrisType{DIO}}}}}} \IdrisBound{\IdrisBound{\IdrisBound{\IdrisBound{\IdrisBound{\IdrisBound{l}}}}}})}}}}}}}}}}}}}} \IdrisKeyword{where}
  \IdrisFunction{\IdrisFunction{\IdrisBound{\IdrisBound{map \IdrisBound{f} (\IdrisData{\IdrisData{MkDIO}} \IdrisBound{io})}}}} = \IdrisData{\IdrisData{MkDIO}} (\IdrisFunction{\IdrisFunction{map}} \IdrisBound{\IdrisBound{f}} \IdrisBound{\IdrisBound{io}})

\IdrisType{\IdrisType{\IdrisFunction{\IdrisFunction{\IdrisFunction{\IdrisFunction{\IdrisFunction{\IdrisFunction{\IdrisData{\IdrisData{\IdrisBound{\IdrisBound{\IdrisBound{\IdrisBound{\IdrisBound{\IdrisBound{\IdrisBound{\IdrisBound{\IdrisBound{Applicative (\IdrisType{\IdrisType{\IdrisType{\IdrisType{\IdrisType{\IdrisType{\IdrisType{\IdrisType{\IdrisType{\IdrisType{DIO}}}}}}}}}} \IdrisBound{\IdrisBound{\IdrisBound{\IdrisBound{\IdrisBound{\IdrisBound{\IdrisBound{\IdrisBound{\IdrisBound{\IdrisBound{l}}}}}}}}}})}}}}}}}}}}}}}}}}}}} \IdrisKeyword{where}
  \IdrisFunction{\IdrisFunction{\IdrisBound{\IdrisBound{pure}}}} = \IdrisData{\IdrisData{MkDIO}} \IdrisFunction{\IdrisFunction{.}} \IdrisFunction{\IdrisFunction{pure}}
  \IdrisFunction{\IdrisFunction{\IdrisBound{\IdrisBound{(<*>) (\IdrisData{\IdrisData{MkDIO}} \IdrisBound{f}) (\IdrisData{\IdrisData{MkDIO}} \IdrisBound{a})}}}} = \IdrisData{\IdrisData{MkDIO}} (\IdrisBound{\IdrisBound{f}} \IdrisFunction{\IdrisFunction{<*>}} \IdrisBound{\IdrisBound{a}})

\IdrisType{\IdrisType{\IdrisFunction{\IdrisFunction{\IdrisFunction{\IdrisFunction{\IdrisFunction{\IdrisFunction{\IdrisData{\IdrisData{\IdrisBound{\IdrisBound{\IdrisBound{\IdrisBound{\IdrisBound{\IdrisBound{\IdrisBound{\IdrisBound{\IdrisBound{Monad (\IdrisType{\IdrisType{\IdrisType{\IdrisType{\IdrisType{\IdrisType{\IdrisType{\IdrisType{\IdrisType{\IdrisType{\IdrisType{\IdrisType{\IdrisType{\IdrisType{DIO}}}}}}}}}}}}}} \IdrisBound{\IdrisBound{\IdrisBound{\IdrisBound{\IdrisBound{\IdrisBound{\IdrisBound{\IdrisBound{\IdrisBound{\IdrisBound{\IdrisBound{\IdrisBound{\IdrisBound{\IdrisBound{l}}}}}}}}}}}}}})}}}}}}}}}}}}}}}}}}} \IdrisKeyword{where}
  \IdrisFunction{\IdrisFunction{\IdrisBound{\IdrisBound{(>>=) (\IdrisData{\IdrisData{MkDIO}} \IdrisBound{a}) \IdrisBound{f}}}}} = \IdrisData{\IdrisData{MkDIO}} (\IdrisBound{\IdrisBound{a}} \IdrisFunction{\IdrisFunction{>>=}} \IdrisFunction{\IdrisFunction{run}} \IdrisFunction{\IdrisFunction{.}} \IdrisBound{\IdrisBound{f}})
\end{Verbatim}

\begin{Verbatim}[commandchars=\\\{\}, numbers = left, frame=single, label=Labeled.idr]
\IdrisKeyword{module} DepSec.Labeled

\IdrisKeyword{import} \IdrisKeyword{public} DepSec.DIO
\IdrisKeyword{import} \IdrisKeyword{public} DepSec.Poset

% access public export

||| Labeled value
||| @ label label
||| @ valueType type of labeled value
\IdrisKeyword{data} \IdrisType{Labeled} : (\IdrisBound{label} : \IdrisBound{\IdrisBound{labelType}})
            -> (\IdrisBound{valueType} : \IdrisType{Type})
            -> \IdrisType{Type} \IdrisKeyword{where}
  ||| TCB
  \IdrisData{MkLabeled} : \IdrisBound{\IdrisBound{valueType}} -> \IdrisType{\IdrisType{Labeled}} \IdrisBound{\IdrisBound{label}} \IdrisBound{\IdrisBound{valueType}}

||| Label values
||| @ value value to label
\IdrisFunction{label} : \IdrisType{\IdrisType{Poset}} \IdrisBound{\IdrisBound{labelType}}
      => \{\IdrisBound{l} : \IdrisBound{\IdrisBound{labelType}}\}
      -> (\IdrisBound{value} : \IdrisBound{\IdrisBound{a}})
      -> \IdrisType{\IdrisType{Labeled}} \IdrisBound{\IdrisBound{l}} \IdrisBound{\IdrisBound{a}}
\IdrisFunction{\IdrisFunction{label}} = \IdrisData{\IdrisData{MkLabeled}}

||| Unlabel values
||| @ flow evidence that l may flow to l'
||| @ labeled labeled value to unlabel
\IdrisFunction{unlabel} : \IdrisType{Poset} \IdrisBound{labelType}
       => \{\IdrisBound{l},\IdrisBound{l'} : \IdrisBound{labelType}\}
       -> \{\IdrisKeyword{auto} \IdrisBound{flow} : \IdrisBound{l} \IdrisFunction{`leq`} \IdrisBound{l'}\}
       -> (\IdrisBound{labeled} : \IdrisType{Labeled} \IdrisBound{l} \IdrisBound{a})
       -> \IdrisType{DIO} \IdrisBound{l'} \IdrisBound{a}
\IdrisFunction{unlabel} (\IdrisData{MkLabeled} \IdrisBound{val}) = \IdrisFunction{pure} \IdrisBound{val}

||| Upgrade the security level of a labeled value
||| @ flow evidence that l may flow to l'
||| @ labeled labeled value to relabel
\IdrisFunction{relabel} : \IdrisType{Poset} \IdrisBound{labelType}
       => \{\IdrisBound{l}, \IdrisBound{l'} : \IdrisBound{labelType}\}
       -> \{\IdrisKeyword{auto} \IdrisBound{flow} : \IdrisBound{l} \IdrisFunction{`leq`} \IdrisBound{l'}\}
       -> (\IdrisBound{labeled} : \IdrisType{Labeled} \IdrisBound{l} \IdrisBound{a})
       -> \IdrisType{Labeled} \IdrisBound{l'} \IdrisBound{a}
\IdrisFunction{relabel} (\IdrisData{MkLabeled} \IdrisBound{x}) = \IdrisData{MkLabeled} \IdrisBound{x}

\IdrisFunction{unlabel'} : \IdrisType{Poset} \IdrisBound{labelType}
        => \{\IdrisBound{l},\IdrisBound{l'} : \IdrisBound{labelType}\}
        -> \{\IdrisKeyword{auto} \IdrisBound{flow} : \IdrisBound{l} \IdrisFunction{`leq`} \IdrisBound{l'}\}
        -> (\IdrisBound{labeled} : \IdrisType{Labeled} \IdrisBound{l} \IdrisBound{a})
        -> \IdrisType{DIO} \IdrisBound{l'} \IdrisType{(\IdrisBound{c}} \IdrisType{:} \IdrisBound{a} \IdrisType{**} \IdrisFunction{label} \IdrisBound{c} \IdrisType{=} \IdrisBound{labeled\IdrisType{)}}
\IdrisFunction{unlabel'} (\IdrisData{MkLabeled} \IdrisBound{x}) = \IdrisFunction{pure} \IdrisData{(\IdrisBound{x}} \IdrisData{**} \IdrisData{Refl\IdrisData{)}}

||| Plug a secure computation into a less secure computation
||| @ flow evidence that l may flow to l'
||| @ dio secure computation to plug into insecure computation
\IdrisFunction{plug} : \IdrisType{\IdrisType{Poset}} \IdrisBound{\IdrisBound{labelType}}
    => \{\IdrisBound{l},\IdrisBound{l'} : \IdrisBound{\IdrisBound{\IdrisBound{\IdrisBound{labelType}}}}\}
    -> (\IdrisBound{dio} : \IdrisType{\IdrisType{DIO}} \IdrisBound{\IdrisBound{l'}} \IdrisBound{\IdrisBound{a}})
    -> \{\IdrisKeyword{auto} \IdrisBound{flow} : \IdrisBound{\IdrisBound{l}} \IdrisFunction{\IdrisFunction{`leq`}} \IdrisBound{\IdrisBound{l'}}\}
    -> \IdrisType{\IdrisType{DIO}} \IdrisBound{\IdrisBound{l}} (\IdrisType{\IdrisType{Labeled}} \IdrisBound{\IdrisBound{l'}} \IdrisBound{\IdrisBound{a}})
\IdrisFunction{\IdrisFunction{plug}} \IdrisBound{dio} = \IdrisFunction{\IdrisFunction{lift}} \IdrisFunction{\IdrisFunction{.}} \IdrisFunction{\IdrisFunction{run}} $ \IdrisBound{\IdrisBound{dio}} \IdrisFunction{\IdrisFunction{>>=}} \IdrisFunction{\IdrisFunction{pure}} \IdrisFunction{\IdrisFunction{.}} \IdrisData{\IdrisData{MkLabeled}}
\end{Verbatim}

\begin{Verbatim}[commandchars=\\\{\}, numbers = left, frame=single, label=Poset.idr]
\IdrisKeyword{module} DepSec.Poset

% access public export
% default \IdrisKeyword{total}
% hide Prelude.Monad.join

||| Verified partial ordering
\IdrisKeyword{interface} \IdrisType{\IdrisFunction{\IdrisFunction{\IdrisFunction{\IdrisFunction{\IdrisFunction{\IdrisFunction{\IdrisFunction{\IdrisFunction{\IdrisData{\IdrisData{\IdrisData{\IdrisData{\IdrisData{\IdrisData{\IdrisData{\IdrisData{\IdrisType{\IdrisType{\IdrisType{\IdrisType{\IdrisType{\IdrisType{\IdrisType{\IdrisType{\IdrisType{\IdrisType{\IdrisType{\IdrisType{\IdrisBound{\IdrisBound{\IdrisBound{\IdrisBound{\IdrisBound{\IdrisBound{\IdrisBound{\IdrisBound{\IdrisBound{\IdrisBound{\IdrisBound{\IdrisBound{\IdrisBound{\IdrisBound{\IdrisBound{\IdrisBound{\IdrisBound{\IdrisBound{\IdrisBound{\IdrisBound{\IdrisBound{\IdrisBound{\IdrisBound{\IdrisBound{\IdrisBound{Poset \IdrisType{\IdrisType{\IdrisType{\IdrisType{\IdrisType{\IdrisType{\IdrisType{\IdrisBound{a}}}}}}}}}}}}}}}}}}}}}}}}}}}}}}}}}}}}}}}}}}}}}}}}}}}}}}}}}}}}}} \IdrisKeyword{where}
  \IdrisFunction{leq} : \IdrisBound{\IdrisBound{a}} -> \IdrisBound{\IdrisBound{a}} -> \IdrisType{Type}
  \IdrisFunction{reflexive} :  (\IdrisBound{x} : \IdrisBound{\IdrisBound{a}}) -> \IdrisBound{\IdrisBound{x}} \IdrisFunction{\IdrisFunction{\IdrisBound{`leq`}}} \IdrisBound{\IdrisBound{x}}
  \IdrisFunction{antisymmetric} : (\IdrisBound{x}, \IdrisBound{y} : \IdrisBound{\IdrisBound{\IdrisBound{\IdrisBound{a}}}}) -> \IdrisBound{\IdrisBound{x}} \IdrisFunction{\IdrisFunction{\IdrisBound{`leq`}}} \IdrisBound{\IdrisBound{y}} -> \IdrisBound{\IdrisBound{y}} \IdrisFunction{\IdrisFunction{\IdrisBound{`leq`}}} \IdrisBound{\IdrisBound{x}} -> \IdrisBound{\IdrisBound{x}} \IdrisType{=} \IdrisBound{\IdrisBound{y}}
  \IdrisFunction{transitive} : (\IdrisBound{x}, \IdrisBound{y}, \IdrisBound{z} : \IdrisBound{\IdrisBound{\IdrisBound{\IdrisBound{\IdrisBound{\IdrisBound{a}}}}}}) -> \IdrisBound{\IdrisBound{x}} \IdrisFunction{\IdrisFunction{\IdrisBound{`leq`}}} \IdrisBound{\IdrisBound{y}} -> \IdrisBound{\IdrisBound{y}} \IdrisFunction{\IdrisFunction{\IdrisBound{`leq`}}} \IdrisBound{\IdrisBound{z}} -> \IdrisBound{\IdrisBound{x}} \IdrisFunction{\IdrisFunction{\IdrisBound{`leq`}}} \IdrisBound{\IdrisBound{z}}
\end{Verbatim}

\begin{Verbatim}[commandchars=\\\{\}, numbers = left, frame=single, label=Lattice.idr]
\IdrisKeyword{module} DepSec.Lattice

\IdrisKeyword{import} \IdrisKeyword{public} DepSec.Poset

% access public export
% hide Prelude.Monad.join
% default \IdrisKeyword{total}

||| Verified join semilattice
\IdrisKeyword{interface} \IdrisType{\IdrisFunction{\IdrisFunction{\IdrisFunction{\IdrisFunction{\IdrisFunction{\IdrisFunction{\IdrisFunction{\IdrisFunction{\IdrisData{\IdrisData{\IdrisData{\IdrisData{\IdrisData{\IdrisData{\IdrisData{\IdrisData{\IdrisType{\IdrisType{\IdrisType{\IdrisType{\IdrisType{\IdrisType{\IdrisType{\IdrisType{\IdrisType{\IdrisType{\IdrisType{\IdrisType{\IdrisBound{\IdrisBound{\IdrisBound{\IdrisBound{\IdrisBound{\IdrisBound{\IdrisBound{\IdrisBound{\IdrisBound{\IdrisBound{\IdrisBound{\IdrisBound{\IdrisBound{\IdrisBound{\IdrisBound{\IdrisBound{\IdrisBound{\IdrisBound{\IdrisBound{\IdrisBound{\IdrisBound{\IdrisBound{\IdrisBound{\IdrisBound{\IdrisBound{JoinSemilattice \IdrisType{\IdrisType{\IdrisType{\IdrisType{\IdrisType{\IdrisType{\IdrisType{\IdrisBound{a}}}}}}}}}}}}}}}}}}}}}}}}}}}}}}}}}}}}}}}}}}}}}}}}}}}}}}}}}}}}}} \IdrisKeyword{where}
  \IdrisFunction{join} : \IdrisBound{\IdrisBound{a}} -> \IdrisBound{\IdrisBound{a}} -> \IdrisBound{\IdrisBound{a}}
  \IdrisFunction{joinAssociative} : (\IdrisBound{x}, \IdrisBound{y}, \IdrisBound{z} : \IdrisBound{\IdrisBound{\IdrisBound{\IdrisBound{\IdrisBound{\IdrisBound{a}}}}}})
                 -> \IdrisBound{\IdrisBound{x}} \IdrisFunction{\IdrisFunction{\IdrisBound{`join`}}} (\IdrisBound{\IdrisBound{y}} \IdrisFunction{\IdrisFunction{\IdrisBound{`join`}}} \IdrisBound{\IdrisBound{z}}) \IdrisType{=} (\IdrisBound{\IdrisBound{x}} \IdrisFunction{\IdrisFunction{\IdrisBound{`join`}}} \IdrisBound{\IdrisBound{y}}) \IdrisFunction{\IdrisFunction{\IdrisBound{`join`}}} \IdrisBound{\IdrisBound{z}}
  \IdrisFunction{joinCommutative} : (\IdrisBound{x}, \IdrisBound{y} : \IdrisBound{\IdrisBound{\IdrisBound{\IdrisBound{a}}}})    -> \IdrisBound{\IdrisBound{x}} \IdrisFunction{\IdrisFunction{\IdrisBound{`join`}}} \IdrisBound{\IdrisBound{y}} \IdrisType{=} \IdrisBound{\IdrisBound{y}} \IdrisFunction{\IdrisFunction{\IdrisBound{`join`}}} \IdrisBound{\IdrisBound{x}}
  \IdrisFunction{joinIdempotent}  : (\IdrisBound{x} : \IdrisBound{\IdrisBound{a}})       -> \IdrisBound{\IdrisBound{x}} \IdrisFunction{\IdrisFunction{\IdrisBound{`join`}}} \IdrisBound{\IdrisBound{x}} \IdrisType{=} \IdrisBound{\IdrisBound{x}}

||| A well defined join induces a partial ordering.
\IdrisKeyword{implementation} \IdrisType{\IdrisType{\IdrisType{\IdrisType{\IdrisType{\IdrisType{\IdrisType{\IdrisType{\IdrisType{\IdrisType{\IdrisFunction{\IdrisFunction{\IdrisFunction{\IdrisFunction{\IdrisFunction{\IdrisFunction{\IdrisFunction{\IdrisFunction{\IdrisFunction{\IdrisFunction{\IdrisFunction{\IdrisFunction{\IdrisFunction{\IdrisFunction{\IdrisFunction{\IdrisFunction{\IdrisFunction{\IdrisFunction{\IdrisFunction{\IdrisFunction{\IdrisFunction{\IdrisFunction{\IdrisData{\IdrisData{\IdrisBound{\IdrisBound{\IdrisBound{\IdrisBound{\IdrisBound{\IdrisBound{\IdrisBound{\IdrisBound{\IdrisBound{\IdrisBound{\IdrisBound{\IdrisBound{\IdrisBound{\IdrisBound{\IdrisBound{\IdrisBound{\IdrisBound{\IdrisBound{\IdrisBound{\IdrisBound{\IdrisBound{\IdrisBound{\IdrisBound{JoinSemilattice \IdrisBound{\IdrisBound{\IdrisBound{\IdrisBound{\IdrisBound{\IdrisBound{\IdrisBound{\IdrisBound{\IdrisBound{\IdrisBound{a}}}}}}}}}} => \IdrisType{\IdrisType{Poset}} \IdrisBound{\IdrisBound{\IdrisBound{\IdrisBound{\IdrisBound{\IdrisBound{\IdrisBound{\IdrisBound{\IdrisBound{\IdrisBound{\IdrisBound{\IdrisBound{\IdrisBound{\IdrisBound{\IdrisBound{\IdrisBound{\IdrisBound{\IdrisBound{\IdrisBound{\IdrisBound{\IdrisBound{\IdrisBound{a}}}}}}}}}}}}}}}}}}}}}}}}}}}}}}}}}}}}}}}}}}}}}}}}}}}}}}}}}}}}}}}}}}}}}}}}}}}}}}} \IdrisKeyword{where}
  \IdrisFunction{\IdrisFunction{\IdrisBound{leq \IdrisBound{x} \IdrisBound{y}}}} = (\IdrisBound{\IdrisBound{x}} \IdrisFunction{\IdrisFunction{`join`}} \IdrisBound{\IdrisBound{y}} \IdrisType{=} \IdrisBound{\IdrisBound{y}})
  \IdrisFunction{\IdrisFunction{\IdrisBound{reflexive}}} = \IdrisFunction{\IdrisFunction{joinIdempotent}}
  \IdrisFunction{\IdrisFunction{\IdrisBound{antisymmetric \IdrisBound{x} \IdrisBound{y} \IdrisBound{lexy} \IdrisBound{leyx}}}} =
    \IdrisKeyword{rewrite} \IdrisFunction{\IdrisFunction{\IdrisFunction{\IdrisFunction{\IdrisFunction{\IdrisFunction{\IdrisBound{sym $ \IdrisBound{\IdrisBound{\IdrisBound{\IdrisBound{lexy}}}}}}}}}}} \IdrisKeyword{in}
    \IdrisKeyword{rewrite} \IdrisFunction{\IdrisFunction{\IdrisFunction{\IdrisFunction{\IdrisFunction{\IdrisFunction{\IdrisBound{joinCommutative \IdrisBound{\IdrisBound{\IdrisBound{\IdrisBound{x}}}} \IdrisBound{\IdrisBound{\IdrisBound{\IdrisBound{y}}}}}}}}}}} \IdrisKeyword{in}
    \IdrisKeyword{rewrite} \IdrisFunction{\IdrisFunction{\IdrisFunction{\IdrisFunction{\IdrisFunction{\IdrisFunction{\IdrisBound{sym $ \IdrisBound{\IdrisBound{\IdrisBound{\IdrisBound{leyx}}}}}}}}}}} \IdrisKeyword{in} \IdrisData{\IdrisData{Refl}}
  \IdrisFunction{\IdrisFunction{\IdrisBound{transitive \IdrisBound{x} \IdrisBound{y} \IdrisBound{z} \IdrisBound{lexy} \IdrisBound{leyx}}}} =
    \IdrisKeyword{rewrite} \IdrisFunction{\IdrisFunction{\IdrisFunction{\IdrisFunction{\IdrisFunction{\IdrisFunction{\IdrisBound{sym $ \IdrisBound{\IdrisBound{\IdrisBound{\IdrisBound{leyx}}}}}}}}}}} \IdrisKeyword{in}
    \IdrisKeyword{rewrite} \IdrisFunction{\IdrisFunction{\IdrisFunction{\IdrisFunction{\IdrisFunction{\IdrisFunction{\IdrisBound{joinAssociative \IdrisBound{\IdrisBound{\IdrisBound{\IdrisBound{x}}}} \IdrisBound{\IdrisBound{\IdrisBound{\IdrisBound{y}}}} \IdrisBound{\IdrisBound{\IdrisBound{\IdrisBound{z}}}}}}}}}}} \IdrisKeyword{in}
    \IdrisKeyword{rewrite} \IdrisFunction{\IdrisFunction{\IdrisFunction{\IdrisFunction{\IdrisFunction{\IdrisFunction{\IdrisBound{sym $ \IdrisBound{\IdrisBound{\IdrisBound{\IdrisBound{lexy}}}}}}}}}}} \IdrisKeyword{in} \IdrisData{\IdrisData{Refl}}

||| A join-semilattice with an identity element (the lattices' bottom)
||| of the join operation.
\IdrisKeyword{interface} \IdrisType{\IdrisType{\IdrisType{\IdrisType{\IdrisFunction{\IdrisFunction{\IdrisFunction{\IdrisFunction{\IdrisFunction{\IdrisFunction{\IdrisData{\IdrisData{\IdrisData{\IdrisData{\IdrisData{\IdrisData{\IdrisType{\IdrisType{\IdrisType{\IdrisType{\IdrisType{\IdrisType{\IdrisType{\IdrisType{\IdrisType{\IdrisType{\IdrisBound{\IdrisBound{\IdrisBound{\IdrisBound{\IdrisBound{\IdrisBound{\IdrisBound{\IdrisBound{\IdrisBound{\IdrisBound{\IdrisBound{\IdrisBound{JoinSemilattice \IdrisBound{\IdrisBound{\IdrisBound{\IdrisBound{a}}}} => \IdrisType{BoundedJoinSemilattice} \IdrisType{\IdrisType{\IdrisType{\IdrisType{\IdrisType{\IdrisType{\IdrisBound{a}}}}}}}}}}}}}}}}}}}}}}}}}}}}}}}}}}}}}}}}}}}}} \IdrisKeyword{where}
  \IdrisFunction{Bottom} : \IdrisBound{\IdrisBound{a}}
  \IdrisFunction{bottomUnitaryElement} : (\IdrisBound{e} : \IdrisBound{\IdrisBound{a}}) -> \IdrisBound{\IdrisBound{e}} \IdrisFunction{\IdrisFunction{`join`}} \IdrisFunction{\IdrisFunction{\IdrisBound{Bottom}}} \IdrisType{=} \IdrisBound{\IdrisBound{e}}
\end{Verbatim}

\begin{Verbatim}[commandchars=\\\{\}, numbers = left, frame=single, label=Ref.idr]
\IdrisKeyword{module} DepSec.Ref

\IdrisKeyword{import} \IdrisKeyword{public} DepSec.Labeled
\IdrisKeyword{import} \IdrisKeyword{public} DepSec.DIO
\IdrisKeyword{import} Data.IORef

% access export

||| Data type for secure references.
\IdrisKeyword{data} \IdrisType{SecRef} : (\IdrisBound{l} : \IdrisBound{\IdrisBound{labelType}}) -> (\IdrisBound{valueType} : \IdrisType{Type}) -> \IdrisType{Type} \IdrisKeyword{where}
  |||TCB
  \IdrisData{MkSecRef} : (\IdrisBound{ref} : \IdrisType{\IdrisType{IORef}} \IdrisBound{\IdrisBound{a}}) -> \IdrisType{\IdrisType{SecRef}} \IdrisBound{\IdrisBound{l}} \IdrisBound{\IdrisBound{a}}

||| Creating a reference to a labeled value.
||| @ flow evidence that l may flow to l'
||| @ flow' evidence that l' may flow to l''
||| @ value The initial value for the reference.
\IdrisFunction{newRef} : \IdrisType{\IdrisType{Poset}} \IdrisBound{\IdrisBound{labelType}}
      => \{\IdrisBound{l}, \IdrisBound{l'}, \IdrisBound{l''} : \IdrisBound{\IdrisBound{\IdrisBound{\IdrisBound{\IdrisBound{\IdrisBound{labelType}}}}}}\}
      -> \{\IdrisKeyword{auto} \IdrisBound{flow} : \IdrisBound{\IdrisBound{l}} \IdrisFunction{\IdrisFunction{`leq`}} \IdrisBound{\IdrisBound{l'}}\}
      -> \{\IdrisKeyword{auto} \IdrisBound{flow'} : \IdrisBound{\IdrisBound{l'}} \IdrisFunction{\IdrisFunction{`leq`}} \IdrisBound{\IdrisBound{l''}}\}
      -> (\IdrisBound{value} : \IdrisType{\IdrisType{Labeled}} \IdrisBound{\IdrisBound{l'}} \IdrisBound{\IdrisBound{a}})
      -> \IdrisType{\IdrisType{DIO}} \IdrisBound{\IdrisBound{l}} (\IdrisType{\IdrisType{SecRef}} \IdrisBound{\IdrisBound{l''}} \IdrisBound{\IdrisBound{a}})
\IdrisFunction{\IdrisFunction{newRef}} (\IdrisData{\IdrisData{MkLabeled}} \IdrisBound{v})
  = \IdrisFunction{\IdrisFunction{lift}} $ \IdrisFunction{\IdrisFunction{newIORef}} \IdrisBound{\IdrisBound{v}} \IdrisFunction{\IdrisFunction{>>=}} \IdrisFunction{\IdrisFunction{pure}} \IdrisFunction{\IdrisFunction{.}} \IdrisData{\IdrisData{MkSecRef}}

||| Reading a secure reference.
||| @ flow evidence that l may flow to l'
||| @ ref The reference which we wish to read.
\IdrisFunction{readRef} : \IdrisType{\IdrisType{Poset}} \IdrisBound{\IdrisBound{labelType}}
       => \{\IdrisBound{l}, \IdrisBound{l'} : \IdrisBound{\IdrisBound{\IdrisBound{\IdrisBound{labelType}}}}\}
       -> \{\IdrisKeyword{auto} \IdrisBound{flow} : \IdrisBound{\IdrisBound{l}} \IdrisFunction{\IdrisFunction{`leq`}} \IdrisBound{\IdrisBound{l'}}\}
       -> (\IdrisBound{ref}  : \IdrisType{\IdrisType{SecRef}} \IdrisBound{\IdrisBound{l'}} \IdrisBound{\IdrisBound{a}})
       -> \IdrisType{\IdrisType{DIO}} \IdrisBound{\IdrisBound{l}} (\IdrisType{\IdrisType{Labeled}} \IdrisBound{\IdrisBound{l'}} \IdrisBound{\IdrisBound{a}})
\IdrisFunction{\IdrisFunction{readRef}} (\IdrisData{\IdrisData{MkSecRef}} \IdrisBound{ioRef})
  = \IdrisFunction{\IdrisFunction{lift}} $ \IdrisFunction{\IdrisFunction{map}} \IdrisData{\IdrisData{MkLabeled}} $ \IdrisFunction{\IdrisFunction{readIORef}} \IdrisBound{\IdrisBound{ioRef}}

||| Wrting a labeled value to a secure reference.
||| @ flow evidence that l may flow to l'
||| @ flow' evidence that l' may flow to l''
||| @ ref The reference which we wish to write to.
||| @ content The content which we wish too read.
\IdrisFunction{writeRef} : \IdrisType{\IdrisType{Poset}} \IdrisBound{\IdrisBound{labelType}}
        => \{\IdrisBound{l}, \IdrisBound{l'}, \IdrisBound{l''} : \IdrisBound{\IdrisBound{\IdrisBound{\IdrisBound{\IdrisBound{\IdrisBound{labelType}}}}}}\}
        -> \{\IdrisKeyword{auto} \IdrisBound{flow} : \IdrisBound{\IdrisBound{l}} \IdrisFunction{\IdrisFunction{`leq`}} \IdrisBound{\IdrisBound{l'}}\}
        -> \{\IdrisKeyword{auto} \IdrisBound{flow'} : \IdrisBound{\IdrisBound{l'}} \IdrisFunction{\IdrisFunction{`leq`}} \IdrisBound{\IdrisBound{l''}}\}
        -> (\IdrisBound{ref} : \IdrisType{\IdrisType{SecRef}} \IdrisBound{\IdrisBound{l''}} \IdrisBound{\IdrisBound{a}})
        -> (\IdrisBound{content} : \IdrisType{\IdrisType{Labeled}} \IdrisBound{\IdrisBound{l'}} \IdrisBound{\IdrisBound{a}})
        -> \IdrisType{\IdrisType{DIO}} \IdrisBound{\IdrisBound{l}} \IdrisType{()}
\IdrisFunction{\IdrisFunction{writeRef}} (\IdrisData{\IdrisData{MkSecRef}} \IdrisBound{ioRef}) (\IdrisData{\IdrisData{MkLabeled}} \IdrisBound{content})
  = \IdrisFunction{\IdrisFunction{lift}} $ \IdrisFunction{\IdrisFunction{writeIORef}} \IdrisBound{\IdrisBound{ioRef}} \IdrisBound{\IdrisBound{content}}
\end{Verbatim}

\begin{Verbatim}[commandchars=\\\{\}, numbers = left, frame=single, label=File.idr]
\IdrisKeyword{module} DepSec.File

\IdrisKeyword{import} \IdrisKeyword{public} DepSec.DIO
\IdrisKeyword{import} \IdrisKeyword{public} DepSec.Labeled

%access export

||| Secure file
\IdrisKeyword{data} \IdrisType{SecFile} : \{\IdrisBound{label} : \IdrisType{Type}\} -> (\IdrisBound{l} : \IdrisBound{label}) -> \IdrisType{Type} \IdrisKeyword{where}
  ||| TCB
  \IdrisData{MkSecFile} : (\IdrisBound{path} : \IdrisType{String}) -> \IdrisType{SecFile} \IdrisBound{l}

||| Make a secure file from string
||| TCB
||| @ path path to file
\IdrisFunction{makeFile} : (\IdrisBound{path} : \IdrisType{String}) -> \IdrisType{SecFile} \IdrisBound{l}
\IdrisFunction{makeFile} = \IdrisData{MkSecFile}

||| Read a secure file
||| @ flow evidence that l may flow to l'
||| @ file secure file to read
\IdrisFunction{readFile} : \IdrisType{Poset} \IdrisBound{labelType}
        => \{\IdrisBound{l},\IdrisBound{l'} : \IdrisBound{labelType}\}
        -> \{\IdrisKeyword{auto} \IdrisBound{flow} : \IdrisBound{l} \IdrisFunction{`leq`} \IdrisBound{l'}\}
        -> (\IdrisBound{file} : \IdrisType{SecFile} \IdrisBound{l'})
        -> \IdrisType{DIO} \IdrisBound{l} (\IdrisType{Labeled} \IdrisBound{l'} (\IdrisType{Either} \IdrisType{FileError} \IdrisType{String}))
\IdrisFunction{readFile} (\IdrisData{MkSecFile} \IdrisBound{path}) = \IdrisFunction{lift} $ \IdrisFunction{map} \IdrisData{MkLabeled} $ \IdrisFunction{readFile} \IdrisBound{path}

||| Write to a secure file
||| @ file secure file to write to
||| @ flow evidence that l may flow to l'
||| @ flow' evidence that l' may flow to l''
||| @ content labeled content to write
\IdrisFunction{writeFile} : \IdrisType{Poset} \IdrisBound{labelType}
         => \{\IdrisBound{l},\IdrisBound{l'},\IdrisBound{l''} : \IdrisBound{labelType}\}
         -> \{\IdrisKeyword{auto} \IdrisBound{flow} : \IdrisBound{l} \IdrisFunction{`leq`} \IdrisBound{l'}\}
         -> (\IdrisBound{file} : \IdrisType{SecFile} \IdrisBound{l''})
         -> \{\IdrisKeyword{auto} \IdrisBound{flow'} : \IdrisBound{l'} \IdrisFunction{`leq`} \IdrisBound{l''}\}
         -> (\IdrisBound{content} : \IdrisType{Labeled} \IdrisBound{l'} \IdrisType{String})
         -> \IdrisType{DIO} \IdrisBound{l} (\IdrisType{Labeled} \IdrisBound{l''} (\IdrisType{Either} \IdrisType{FileError} \IdrisType{()}))
\IdrisFunction{writeFile} (\IdrisData{MkSecFile} \IdrisBound{path}) (\IdrisData{MkLabeled} \IdrisBound{content})
  = \IdrisFunction{lift} $ \IdrisFunction{map} \IdrisData{MkLabeled} $ \IdrisFunction{writeFile} \IdrisBound{path} \IdrisBound{content}
\end{Verbatim}


}{}

\end{document}